\newcommand{\Amc}{\mathcal{A}}
\newcommand{\Bmc}{\mathcal{B}}
\newcommand{\Cmc}{\mathcal{C}}
\newcommand{\Gmc}{\mathcal{G}}
\newcommand{\Kmc}{\mathcal{K}}
\newcommand{\Lmc}{\mathcal{L}}
\newcommand{\ENmc}{\mathcal{N}}
\newcommand{\ENegmc}{\overline{\mathcal{N}}}
\newcommand{\Qmc}{\mathcal{Q}}
\newcommand{\Smc}{\mathcal{S}}
\newcommand{\Tmc}{\mathcal{T}}
\newcommand{\EZmc}{\mathcal{Z}}
\newcommand{\EXPTIME}{\mathsf{EXPTIME}}
\newcommand{\All}{\mathsf{A}}
\newcommand{\Bool}{\mathsf{Bool}}
\newcommand{\Bound}{\mathsf{B}}
\newcommand{\CTL}{\mathsf{CTL}}
\newcommand{\Ex}{\mathsf{E}}
\newcommand{\homom}{\preceq}
\newcommand{\neX}{\mathsf{X}}
\newcommand{\Prop}{\mathsf{P}}
\newcommand{\Rel}{\mathsf{R}}
\newcommand{\SAT}{\mathsf{SATCTL}^*}
\newcommand{\Until}{\mathsf{U}}
\newcommand{\Var}{\mathsf{V}}
\newcommand{\MSO}{\ensuremath{\mathsf{MSO}}}
\newcommand{\WMSO}{\ensuremath{\mathsf{WMSO}}}
\newcommand{\WMSOB}{\ensuremath{\mathsf{WMSO\!+\!B}}}
\newcommand{\defeq}{\mathrel{\mathop:}=}
\newcommand{\SNN}{S_{n-1}}
\newcommand{\reach}{\mathsf{reach}}
\newcommand{\arity}[1]{\mathsf{ar}({#1})}
\newcommand{\EHomDef}{{\sf EHomDef}}
\newcommand{\snnf}[1]{\hat{#1}}
\newcommand{\ExistsCycle}{\mathsf{ECycle}}
\newcommand{\Path}{\mathsf{Path}}
\newcommand{\BoundedPaths}{\mathsf{BPaths}}
\renewcommand{\phi}{\varphi}
\begin{document}

\title{Satisfiability of CTL${}^*$ with constraints\thanks{This work is supported by the DFG Research Training
    Group 1763 (QuantLA). The second author is supported by the DFG
    research project GELO.}}

\author{Claudia Carapelle \and  Alexander Kartzow \and Markus Lohrey}

\institute{Institut f\"ur Informatik, Universit\"at Leipzig, Germany}

\maketitle

\begin{abstract}
We show that satisfiability for $\CTL^*$ with equality-, order-, and mod\-ulo-constraints over $\mathbb{Z}$ is decidable.
Previously, decidability was only known for certain fragments of
$\CTL^*$, e.g., the existential and positive fragments and {\sf EF}.
\end{abstract}

\section{Introduction}

Temporal logics like $\mathsf{LTL}$, $\CTL$ or $\CTL^*$ are nowadays
standard languages for specifying system properties in
model-checking. They are interpreted over node labeled graphs (Kripke structures), where
the node labels (also called atomic propositions) represent abstract
properties of a system. Clearly, such an abstracted system state does
in  general not contain all the information of the original system state.
Consider for instance a program that manipulates two integer variables
$x$ and $y$.
A useful abstraction might be to introduce atomic
propositions $v_{-2^{32}}, \ldots, v_{2^{32}}$ for $v \in \{x,y\}$,
where the meaning of $v_k$ for $-2^{32} < k < 2^{32}$ is that the
variable $v\in \{x,y\}$ currently holds the value $k$,  and 
 $v_{-2^{32}}$  (resp., $v_{2^{32}}$) means that 
the current value of $v$ is at most $-2^{32}$ (resp., at least $2^{32}$). It is evident that such
an abstraction might lead to incorrect results in model-checking.

To overcome these problems, extensions of temporal logics with constraints
have been studied. Let us explain the idea in the context of
$\mathsf{LTL}$. For a fixed relational structure $\Amc$ (typical
examples for $\Amc$ are number domains like the integers or rationals
extended with certain relations) one adds atomic formulas
of the form $r(\neX^{i_1} x_1, \ldots, \neX^{i_k} x_k)$
(so called constraints) to standard  $\mathsf{LTL}$.
Here, $r$ is (a name of) one of the relations of the
structure $\Amc$, $i_1, \ldots, i_k \geq 0$,
and $x_1, \ldots, x_k$ are variables that range over
the universe of $\Amc$. An $\mathsf{LTL}$-formula containing such
constraints is interpreted over (infinite) paths of a standard Kripke structure, where in addition
every node (state) associates with each of the variables $x_1, \ldots,
x_k$ an element of $\Amc$ (one can think of $\Amc$-registers attached
to the system states).  A constraint $r(\neX^{i_1} x_1, \ldots, \neX^{i_k}
x_k)$ holds in a path $s_0 \to s_1 \to s_2 \to \cdots$ if the tuple
$(a_1, \ldots, a_k)$, where $a_j$ is the value of variable $x_j$ at
state $s_{i_j}$, belongs to the $\Amc$-relation $r$. In this way, the values of variables at different
system states can be compared. In our example from the first
paragraph, one might choose for $\Amc$ the structure $(\mathbb{Z}, <,
=, (=_a)_{a \in \mathbb{Z}})$, where $=_a$ is the unary predicate that
only holds for $a$. This structure has infinitely many predicates,
which is not a problem; our main result will actually talk about an
expansion of $(\mathbb{Z}, <, =, (=_a)_{a \in \mathbb{Z}})$.
Then, one might for instance write down a formula
$(<\!\!(x, \neX^1 y)) \Until (=_{100}\!\!(y))$ which holds on a path if and only
if there is a point of time where variable $y$ holds the value $100$
and for all previous points of time $t$, the value of $x$ at time $t$
is strictly smaller than the value of $y$ at time $t+1$.

In \cite{DemriG08}, Demri and Gascon studied $\mathsf{LTL}$ extended with constraints
from a language $\text{IPC}^*$. If we disregard succinctness aspects,
these constraints are equivalent to constraints over the structure
\begin{equation}\label{zmc^m}
\EZmc = (\mathbb{Z}, <, =, (=_a)_{a \in \mathbb{Z}},
(\equiv_{a,b})_{0 \leq a < b}),
\end{equation}
where $=_a$ denotes the
unary relation $\{a\}$ and $\equiv_{a,b}$ denotes the unary
relation $\{a+xb \mid x \in \mathbb{Z}\}$ (expressing that an integer
is congruent to $a$ modulo $b$).
The main result from \cite{DemriG08} states that satisfiability of $\mathsf{LTL}$
with constraints from $\EZmc$ is decidable and in fact
$\mathsf{PSPACE}$-complete, and hence has the same complexity as
satisfiability for $\mathsf{LTL}$ without constraints.
We should remark that the $\mathsf{PSPACE}$ upper bound from
\cite{DemriG08} even holds for the succinct $\text{IPC}^*$-representation of
constraints used in \cite{DemriG08}.

In the same way as outlined for $\mathsf{LTL}$ above, constraints
can be also added to $\CTL$ and $\CTL^*$ (then, constraints
$r(\neX^{i_1} x_1, \ldots, \neX^{i_k}x_k)$ are path formulas). 
A weak form of $\CTL^*$ with constraints from $\EZmc$ (where only 
integer variables and the same state can be compared) was
first introduced in \cite{Cerans00},
where it is used to describe properties of infinite transition systems, 
represented by relational automata. There it is shown that the model checking problem for
$\CTL^*$  over relational automata is undecidable. 

Demri and Gascon \cite{DemriG08} asked whether satisfiability of
$\CTL^*$ with constraints from $\EZmc$ over Kripke structures is decidable. 
This problem was investigated in \cite{BozzelliG06,Gascon09}, 
where several partial results where shown:
If we replace in $\EZmc$ the binary predicate $<$ by
unary predicates $<_c \; = \{ x \mid x < c \}$ for $c \in \mathbb{Z}$, then satisfiability
for $\CTL^*$ is decidable by \cite{Gascon09}. While, for the
full structure $\EZmc$ satisfiability is decidable for the
$\CTL^*$ fragment $\mathsf{CEF}^+$ (which contains the existential
and universal fragment of $\CTL^*$ as well as $\mathsf{EF}$) \cite{BozzelliG06}.

In this paper we prove that $\CTL^*$ with constraints over
$\EZmc$ is decidable. Our proof is divided into two steps.
The first step provides a tool to prove decidability of $\CTL^*$
with constraints over any structure $\Amc$ over a countable
(finite or infinite) signature $\Smc$ (the structure $\Amc$ has
to satisfy the additional property that the complement of any of its
relations has to be definable in positive existential first-order
logic over $\Amc$).
Let $\Lmc$ be a logic that satisfies the following two properties:
(i)  satisfiability of a given $\Lmc$-sentence
over the class of infinite node-labeled trees is decidable, and
(ii) $\Lmc$ is closed under boolean combinations with monadic
second-order formulas (MSO).
A typical such logic is MSO itself. By Rabin's seminal tree theorem
\cite{Rab69}, satisfiability of 
MSO-sentences over infinite node-labeled trees is decidable.
Assuming $\Lmc$ has these two properties,
we prove that satisfiability of $\CTL^*$ with constraints over $\Amc$
is decidable if one can compute from a given
finite subsignature $\sigma \subseteq \Smc$ an $\Lmc$-sentence $\psi_\sigma$ 
(over the signature $\sigma$) such that  for every countable $\sigma$-structure $\Bmc$:
$\Bmc \models \psi_\sigma$ if and only if
there exists a homomorphism from $\Bmc$ to $\Amc$ (i.e., a
  mapping from the domain of $\Bmc$ to the domain of $\Amc$ that
  preserves all relations from $\sigma$).
We say that the structure $\Amc$ has the property
\EHomDef($\Lmc$) if such a computable function $\sigma \mapsto
\psi_\sigma$ exists. 
\EHomDef  stands for ``existence of homomorphism is
definable''.  For instance, the structure $(\mathbb{Q}, <, =)$ has the property
\EHomDef($\MSO$), see Example~\ref{ex2}.

It is not clear whether  $\EZmc$ from \eqref{zmc^m} has
the property \EHomDef(\MSO) (we conjecture that it does not).
Hence, we need a different logic. It turns out that $\EZmc$
has the property \EHomDef(\WMSOB), where $\WMSOB$ is the
extension of weak monadic second-order logic (where only
quantification over finite subsets is allowed) with the bounding
quantifier $\mathsf{B}$.  A formula $\Bound X\colon\phi$ holds in
a structure $\Amc$  if and only if there exists a bound $b
\in \mathbb{N}$ such that for every finite subset $B$ of the domain of
$\Amc$ with $\Amc \models \phi(B)$ we have $|B| \leq b$.
Recently, Boja\'nczyk and Toru\'nczyk have shown that satisfiability
of $\WMSOB$ over infinite node-labeled trees is decidable \cite{BojanczykT12}.
The next problem is that $\WMSOB$ is not closed under boolean
combinations with $\MSO$-sentences. But fortunately, the decidability
proof for $\WMSOB$ can be extended to boolean combinations of
$\MSO$-sentences and ($\WMSOB$)-sentences, see Section~\ref{sec-mso} for details.
This finally shows that satisfiability of $\CTL^*$ with constraints
from $\EZmc$ is decidable.

While it would be extremely useful to add successor constraints ($y=x+1$) to $\EZmc$, 
this would lead to undecidability even for $\mathsf{LTL}$ \cite{DD07} and the very basic description logic  $\mathcal{ALC}$ \cite{Lutz04tocl},
which is basically multi-modal logic.
Nonetheless $\EZmc$ allows qualitative representation of increment, for example $x = y+1$ can be abstracted by 
$(y>x) \wedge ( \equiv_{1,2^k} \! (y))$ where $k$ is a large natural number. 
This is why temporal logics extended with constraints over $\EZmc$ seem to be a good compromise between (unexpressive) total abstraction and (undecidable) high concretion.

In the area of knowledge representation, extensions of description
logics with constraints from so called concrete domains have been
intensively studied, see \cite{Lutz02} for a survey. In \cite{Lutz2004235}, it was
shown that the extension of the description logic $\mathcal{ALC}$  with constraints from $(\mathbb{Q},<,=)$
has a decidable ($\EXPTIME$-complete) satisfiability problem with respect to general TBoxes
(also known as general concept inclusions). Such a TBox can be seen as
a second $\mathcal{ALC}$-formula that has to hold in all nodes of a model.
Our decidability proof is partly inspired by the construction from
\cite{Lutz2004235}, which in contrast to our proof is purely automata-theoretic.
Further results for description logics and concrete domains can be
found in \cite{Lutz04tocl,LutzM07}.

Unfortunately, our proof does not yield any complexity bound
for satisfiability of $\CTL^*$ with constraints from $\mathcal{Z}$.
The boolean combinations of ($\WMSOB$)-sentences and $\MSO$ sentences
that have to be checked for satisfiability (over infinite trees)
are of a simple structure, in particular their quantifier 
depth is not high. But no complexity statement for satisfiability of
$\WMSOB$ is made in \cite{BojanczykT12}, and it seems to be difficult
to analyze the algorithm from  \cite{BojanczykT12} (but it seems to be
elementary for a fixed quantifier depth). It is based on a
construction for cost functions over finite trees from \cite{ColcombetL10}, where the
authors only note that their construction seems to have very high complexity.

\section{Preliminaries}

Let $[1,d] = \{1,\ldots,d\}$.
For a word $w = a_1a_2 \cdots a_l \in [1,d]^*$ and $k \leq l$ we define
$w[:k] = a_1 a_2 \cdots a_k$;  it is the prefix of $w$ of length $k$.

Let $\Prop$ be a countable set of (atomic) propositions.
A Kripke structure over $\Prop$ is a triple $\Kmc = (D,\to,\rho)$, where
(i) $D$ is an arbitrary set of nodes (or states), 
(ii) $\to$ is a binary relation on $D$ such that
for every $u \in D$ there exists $v \in D$ with $u \to v$, and 
(iii)
$\rho : D \to 2^{\Prop}$ assigns to every node the set of propositions
that hold in the node. We require that $\bigcup_{v \in D} \rho(v)$ is
finite,
i.e., only finitely many propositions appear in $\Kmc$.
A {\em $\Kmc$-path} is an infinite sequence $\pi = (v_0, v_1, v_2, \ldots)$
such that $v_i \to v_{i+1}$ for all $i \geq 0$. For $i \geq 0$ 
we define the state $\pi(i) = v_i$ and the path $\pi^i =(v_i, v_{i+1},
v_{i+2},\ldots)$.
A {\em Kripke $d$-tree} is a Kripke structure of the form
$\Kmc = ([1,d]^*, \to, \rho)$, where $\to$ contains all pairs
$(u,ui)$ with $u \in [1,d]^*$ and $1 \leq i \leq d$, i.e.,
$([1,d]^*, \to)$ is a tree with root
$\varepsilon$ where every node has $d$ children.

A signature is a countable (finite or infinite) set 
$\Smc$ of relation symbols.
Every relation symbol $r \in \Smc$ has an associated
arity $\arity{r} \geq 1$. 
An $\Smc$-structure is a pair
$\Amc = (A,I)$, where $A$ is a non-empty set and $I$ maps every $r \in  \Smc$
to an $\arity{r}$-ary relation over $A$. Quite often, we will identify
the relation $I(r)$ with the relation symbol $r$, and we will 
specify an  $\Smc$-structure as $(A, r_1, r_2,\ldots)$ where
$\Smc = \{r_1,r_2,\ldots\}$. 
The $\Smc$-structure $\Amc = (A,I)$ is {\em negation-closed} if 
there exists a computable function that maps a relation symbol
$r \in \Smc$ to a positive existential first-order formula
$\varphi_r(x_1,\ldots,x_{\arity{r}})$ (i.e., a formula that is built up from atomic formulas
using $\wedge$, $\vee$, and $\exists$) 
such that $A^{\arity{r}}\setminus I(r) = \{ (a_1,\ldots,a_{\arity{r}}) \mid \Amc
\models \varphi_r(a_1,\ldots,a_{\arity{r}}) \}$. In other words, the 
complement of every relation $I(r)$ must be effectively definable 
by a positive existential first-order formula.

\begin{example}\label{ex1}
The structure $\EZmc$ from \eqref{zmc^m}
is negation-closed (we will write $x=a$ instead of $=_a\!\!(x)$ and similarly
for $\equiv_{a,b}$).
We have for instance:
\begin{itemize}
\item $x \neq y$ if and only if $x<y$ or $y<x$.
\item $x \neq a$ if and only if $\exists y \in \mathbb{Z} : y=a \wedge (x <
  y  \vee y < x)$.
\item $x \not\equiv a \text{ mod } b$ if and only if
$x \equiv c \text{ mod } b$ for some $0 \leq c < b$ with $a \neq c$.
\end{itemize}
\end{example} 
For a subsignature $\sigma \subseteq \Smc$, 
a $\sigma$-structure $\Bmc= (B,J)$ and an $\Smc$-structure $\Amc = (A,I)$,
a {\em homomorphism} $h :\Bmc \to \Amc$ is a mapping 
$h : B \to A$ such that for all $r \in \sigma$
and all tuples $(b_1,\ldots,b_{\arity{r}}) \in J(r)$ we have
$(h(b_1),\ldots,h(b_{\arity{r}})) \in I(r)$.
We write $\Bmc \homom \Amc$ if there is a homomorphism from $\Bmc$ to $\Amc$.

\section{$\MSO$ and  $\WMSOB$}\label{sec-mso}

Recall that {\em monadic second-order logic} ($\MSO$) is the extension
of first-order logic where also quantification over subsets of the underlying
structure is allowed.
We assume that the reader has some familiarity with $\MSO$.
{\em Weak monadic second-order logic} ($\WMSO$) has the same syntax
as $\MSO$ but second-order variables only range over finite subsets
of the underlying structure. Finally,  $\WMSOB$ is the extension of 
$\mathsf{WMSO}$ by the additional quantifier $\Bound X: \phi$
(the {\em bounding quantifier}). The semantics of $\Bound X: \phi$
in the structure $\Amc = (A,I)$ is defined as follows:
$\Amc \models \Bound X: \phi(X)$ if and only if there is a bound $b
\in \mathbb{N}$ such that $|B| \leq b$
for every finite subset $B \subseteq A$ with
$\Amc \models \phi(B)$.

\begin{example} \label{exa:WMSOBFormulas}
  For later use, we state some example formulas. 
  Let $\varphi(x,y)$ be a $\WMSO$-formula with two free first-order variables $x$ and $y$.
  Let $\Amc = (A,I)$ be a structure and let $E = \{ (a,b) \in A \times A \mid \Amc \models \varphi(a,b)\}$
  be the binary relation defined by $\varphi(x,y)$.
 We define the $\WMSO$-formula $\reach_{\varphi}(a,b)$ to be
  \begin{equation*}
    \exists X \;
      \forall Y \big(    a\in Y \land \forall x \forall y  ((x\in Y \land y\in
      X \land \varphi(x,y)) \rightarrow y\in Y) 
        \rightarrow b\in Y \big)      
  \end{equation*}
 It is straightforward to prove that
  $\Amc\models\reach_\varphi(a,b)$ if and only if $(a,b) \in E^*$.  
  Note that $\reach_{\varphi}$ is the standard $\MSO$-formula
  for reachability but restricted to some finite induced
  subgraph. Clearly, $b$ is reachable from $a$ in the graph
  $(A, E)$  if and  only if it is in some finite subgraph of $(A,E)$. 

  Let $\ExistsCycle_{\varphi}= \exists x\, \exists y  (\reach_\varphi(x,y) \land
  \varphi(y,x))$ be the \WMSO-formula expressing that there is a
  cycle in $(A,E)$. 
  
  Given a second-order variable $Z$, we define $\reach_\varphi^Z(a,b)$
  to be 
   \begin{equation*}
    a\in Z \land 
      \forall Y  \subseteq Z \big(    a\in Y \land \forall x \forall y
      ((x\in Y \land y\in  Z \land \varphi(x,y)) \rightarrow y\in Y) 
        \rightarrow b\in Y \big) .  
  \end{equation*}
  We have $\Amc \models \reach_\varphi^Z(a,b)$ iff $b$ is
  reachable from $a$  in the subgraph of $(A,E)$ induced by the
  (finite) set $Z$. Note that $\Amc\models \reach_\varphi^Z(a,b)$
  implies $\{a,b\}\subseteq Z$. 
  
  For the next examples we restrict our attention the case that the graph $(A,E)$ defined by $\varphi(x,y)$ is acyclic.
  Hence, the reflexive transitive closure $E^*$ is a partial order on $A$.
  Note that a finite set $F\subseteq A$ is
  an  $E$-path from $a\in F$ to $b\in F$ if and only if 
  $(F, (E \cap (F \times F))^*)$ is a finite linear order with all
  elements between $a$ and $b$.
  Define the \WMSO-formula $\Path_\varphi(a,b,Z)$ as
  \begin{equation*}
    \ \forall x \in  Z \; \forall y \in Z  \;
    (\reach_\varphi^Z(x,y) \vee \reach_\varphi^Z(y,x)) \ \wedge 
    \reach_\varphi^Z(a,x) \wedge \reach_\varphi^Z(x,b).
  \end{equation*}
  For every  acyclic $(A,E)$ we have
  $\Amc \models \Path_\varphi(a,b,P)$ if and only if $P$ contains
  exactly the nodes along  
  an $E$-path from $a$ to $b$. 

  We finally define the \WMSOB-formula
  $\BoundedPaths_\varphi(x,y) = \Bound Z: \Path_\varphi(x,y,Z)$.
  By definition of the quantifier $\Bound$, if $(A,E)$ is acyclic, then
  $\Amc \models \BoundedPaths_\varphi(a,b)$ if and only if there is a bound
  $k\in \mathbb{N}$ on the length of any  $E$-path from $a$ to $b$. 
\end{example}
Next, let $\Bool(\MSO,\WMSOB)$ be
the set of all Boolean combinations of $\MSO$-formulas and
($\WMSOB$)-formulas. We will use the following result.

\begin{theorem}[cf.~\cite{BojanczykT12}] \label{thm-bojan-to2}
One can decide whether  for a 
given  $d \in \mathbb{N}$  and a formula 
$\varphi \in \Bool(\MSO,\WMSOB)$ there exists a Kripke $d$-tree $\Kmc$
such that $\Kmc \models \varphi$.
\end{theorem}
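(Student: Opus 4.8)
The plan is to reduce Theorem~\ref{thm-bojan-to2} to the main decidability result of \cite{BojanczykT12}, which states that satisfiability of a single $\WMSOB$-sentence over infinite $d$-ary trees is decidable. The obstacle is precisely the one flagged in the introduction: $\WMSOB$ is not closed under Boolean combinations with $\MSO$-sentences, so we cannot merely rewrite a given $\varphi \in \Bool(\MSO,\WMSOB)$ as an equivalent $\WMSOB$-sentence. Instead I would inspect the automaton-theoretic machinery behind \cite{BojanczykT12} and argue that it already accommodates the Boolean combinations we need.

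First I would put $\varphi$ into a normal form: write it as a Boolean combination of sentences $\psi_1, \dots, \psi_m$ (the $\MSO$-part) and $\chi_1, \dots, \chi_n$ (the $\WMSOB$-part). For each $\MSO$-sentence $\psi_i$, by Rabin's theorem \cite{Rab69} there is a nondeterministic parity tree automaton $\Amc_i$ recognizing exactly the $d$-ary trees satisfying $\psi_i$; these automata are closed under complement, union, and intersection. For each $\WMSOB$-sentence $\chi_j$, the construction of \cite{BojanczykT12} (building on the cost-function automata over finite trees of \cite{ColcombetL10}) produces a device — a nondeterministic ``$\Bound$-automaton'' or cost automaton, equipped with a boundedness acceptance condition — whose emptiness is decidable, and which recognizes the trees satisfying $\chi_j$. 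The key structural point I would then verify is that this class of cost automata is effectively closed under union and intersection with each other and with ordinary parity tree automata: intersection is handled by a product construction that runs the boundedness/counter components in parallel and takes the conjunction of the acceptance conditions, union by a disjoint-sum/guessing construction, and a single $\MSO$-automaton factor can simply be absorbed into the parity component of the product. Closure under \emph{complementation} is the delicate case and is in general \emph{not} available for cost/$\Bound$-automata; but we do not need it, because negations inside $\varphi$ can be pushed down to the atoms $\psi_i, \chi_j$ themselves — a negated $\MSO$-sentence is again an $\MSO$-sentence, and a negated $\WMSOB$-sentence $\lnot\chi_j$ is syntactically a formula for which \cite{BojanczykT12} also supplies an automaton (the logic $\WMSOB$ together with its dual $\WMSOU$ / the negations are handled there), so each literal of the Boolean combination is directly realizable as one of our automata.

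Having reduced $\varphi$ to a Boolean combination of automata of the above two kinds with negation only at the literals, I would put the Boolean combination into disjunctive normal form $\bigvee_k \bigwedge_\ell \theta_{k,\ell}$ and, using the effective closure under intersection, collapse each conjunctive clause to a single cost/parity automaton $\Bmc_k$; a Kripke $d$-tree satisfying $\varphi$ exists iff some $\Bmc_k$ has a nonempty language, which is decidable by the emptiness procedure of \cite{BojanczykT12} (for the clauses that happen to contain no $\WMSOB$-literal this is just Rabin's theorem). The main obstacle, as indicated, is checking that the automata model of \cite{BojanczykT12} genuinely supports the product and sum constructions with parity automata while keeping emptiness decidable; this amounts to a careful but routine re-reading of that paper's constructions, and is exactly what Section~\ref{sec-mso} is devoted to establishing.
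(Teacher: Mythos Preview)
Your proposal is correct and follows essentially the same route as the paper: put $\varphi$ into DNF, convert each $\WMSOB$-literal to one of the automata of \cite{BojanczykT12} (the paper calls them \emph{puzzles}) and each $\MSO$-literal to a parity tree automaton (a puzzle with trivial unboundedness condition), intersect within each clause using the effective closure of puzzles under intersection, and test emptiness. The paper streamlines one step you leave slightly implicit: since both $\MSO$ and $\WMSOB$ are themselves closed under all Boolean operations (in particular $\WMSOB$ is closed under negation, because $\neg\Bound$ is already expressible), every DNF clause can first be collapsed to a single $\MSO$-sentence conjoined with a single $\WMSOB$-sentence, so only one intersection of two puzzles per clause is ever needed.
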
 

\begin{proof}
This theorem follows from results of Boja\'nczyk and
Toru\'nczyk \cite{BojanczykT12,BojanczykT12Long}.           
They introduced puzzles which can be seen as  pairs $P=(A, C)$, where
$A$ is a  parity tree automaton and $C$ is an unboundedness condition $C$ which
specifies a certain set of infinite paths labeled by states of $A$. 
A puzzle accepts a tree $\mathcal{T}$ if there is an accepting run $\rho$ of $A$
on $\mathcal{T}$ such that for each infinite path $\pi$ occurring in $\rho$,
$\pi\in C$ holds. In particular, ordinary parity tree automata can be seen
as puzzles with trivial unboundedness condition. The proof of our
theorem combines the following results.

\begin{lemma}[\cite{BojanczykT12}]\label{lem:WMSOBtoPuzzle}
  From a given ($\WMSOB$)-formula $\varphi$ and $d\in\mathbb{N}$ one can construct a puzzle
  $P_\varphi$ such that $\varphi$ is satisfied by some Kripke $d$-tree
  iff $P_\varphi$ is nonempty.
\end{lemma}

\begin{lemma}[\cite{BojanczykT12}] \label{lem:PuzzleEmptyness}
  Emptiness of puzzles is decidable.
\end{lemma}

\begin{lemma}[Lemma 17 of \cite{BojanczykT12Long}]\label{lem:PuzzlesIntersection}
  Puzzles are effectively closed under intersection.
\end{lemma}
Let $\phi\in \Bool(\MSO,\WMSOB)$. First,
$\phi$ can be effectively transformed into a disjunction $\bigvee_{i=1}^n
(\varphi_i \wedge \psi_i)$  where  $\varphi_i\in \MSO$
 and $\psi_i\in \WMSOB$ for all $i$.
By Lemma~\ref{lem:WMSOBtoPuzzle}, we can  construct a puzzle $P_i$ for $\psi_i$.
It is known that the $\MSO$-formula $\varphi_i$ can be translated into
a parity tree automaton $A_i$. Let $P'_i$ be a puzzle recognizing the
intersection of $P_i$ and $A_i$ (cf.~Lemma \ref{lem:PuzzlesIntersection}).
Now $\phi$ is satisfiable over Kripke $d$-trees if and only if
there is an $i$ such that $\varphi_i\land\psi_i$ is satisfiable over
Kripke $d$-trees if and only if there is an $i$ such that $P'_i$ is
nonempty.   By Lemma 
\ref{lem:PuzzleEmptyness}, the latter condition is decidable which
concludes the proof of the theorem. \qed
\end{proof}
Let $\Lmc$ be a logic (e.g. $\MSO$ or 
$\Bool(\MSO,\WMSOB)$). An $\Smc$-structure 
$\Amc$ has the property \EHomDef($\Lmc$)
(existence of homomorphisms to $\Amc$ is $\Lmc$-definable)
if there is a computable function that maps a finite subsignature
$\sigma \subseteq \Smc$ to an $\Lmc$-sentence $\varphi_\sigma$ such that for every
countable $\sigma$-structure $\Bmc$:
$\Bmc \homom \Amc$ if and only if $\Bmc \models \varphi_\sigma$.

\begin{example}\label{ex2}
The structure $\Qmc = (\mathbb{Q},<, =)$ has the property
\EHomDef($\WMSO$) (and \EHomDef(\MSO)).  In \cite{Lutz2004235} it is
implicitly shown that for a countable 
$\{<,=\}$-structure $\Bmc= (B,I)$, $\Bmc \homom \Qmc$ if and only if there does not
exist $(a,b) \in I(<)$ such that $(b,a) \in (I(<) \cup I(=) \cup I(=)^{-1})^*$.
This condition can be easily expressed in $\WMSO$ using the $\reach$-construction
from Example~\ref{exa:WMSOBFormulas}. Note that $I(=)$ is not required
to be the identity relation on $B$.
\end{example}

\section{$\CTL^*$ with constraints}

Let us fix a countably infinite set of atomic propositions $\Prop$
and a countably infinite set of variables $\Var$ for the rest of the paper.
Let $\Smc$ be a signature.
We define an extension of $\CTL^*$ with constraints over the
signature $\Smc$. We define $\CTL^*(\Smc)$-state formulas
$\phi$ and $\CTL^*(\Smc)$-path formulas
$\psi$ by the following grammar, where
$p\in\Prop$, $r \in \Smc$, 
$k = \arity{r}$, $i_1,\ldots,i_k \geq 0$, and $x_1, \ldots, x_k \in \Var$:
\begin{align*}
 \phi  ::= & p \mid \neg\phi \mid (\phi \wedge \phi) \mid \Ex\psi \\
 \psi  ::= & \phi \mid \neg\psi \mid (\psi \wedge \psi)  \mid \neX\psi
  \mid \psi \Until \psi \mid
  r(\neX^{i_1} x_1, \ldots, \neX^{i_k} x_k) 
\end{align*}
A formula of the form $R := r(\neX^{i_1} x_1, \ldots, \neX^{i_k} x_k)$
is also called an {\em atomic constraint} and we define 
$d(R) = \max\{i_1,\ldots,i_k\}$ (the depth of $R$). 
The syntactic difference between $\CTL^*(\Smc)$ and ordinary $\CTL^*$
lies in the presence of 
atomic constraints. 

Formulas of $\CTL^*(\Smc)$ are interpreted over 
triples $\Cmc = (\Amc, \Kmc, \gamma)$, where $\Amc = (A,I)$
is an $\Smc$-structure (also called the {\em concrete domain}), 
$\Kmc = (D,\to,\rho)$ is a Kripke structure over
$\Prop$, and $\gamma : D \times \Var \to A$ 
assigns to every $(v,x) \in D\times \Var$ a value $\gamma(v,x)$ (the value 
of variable $x$ at node $v$). We call such a triple $\Cmc = 
(\Amc, \Kmc, \gamma)$ an {\em $\Amc$-constraint graph}.
An $\Amc$-constraint graph $\Cmc = (\Amc, \Kmc, \gamma)$
is an {\em $\Amc$-constraint $d$-tree} if $\Kmc$ is a Kripke $d$-tree.

We now define the semantics of $\CTL^*(\Smc)$. For 
an $\Amc$-constraint graph $\Cmc = (\Amc,\Kmc,\gamma)$ with
$\Amc = (A, I)$ and $\Kmc = (D,\to,\rho)$,
a state $v \in D$, a $\Kmc$-path $\pi$,
a state formula $\phi$, and a path formula $\psi$ we write
$(\Cmc, v) \models \phi$ if $\phi$ holds in  $(\Cmc, v)$ and 
$(\Cmc, \pi) \models \psi$ if $\psi$ holds in  $(\Cmc, \pi)$. This is
inductively defined as follows (for the boolean connectives $\neg$
and $\wedge$ the definitions are as usual and we omit them):  
\begin{itemize}
\item $(\Cmc, v) \models p$ iff $p \in \rho(v)$.
\item $(\Cmc, v) \models \Ex\psi$ iff there is a
  $\Kmc$-path $\pi$ with $\pi(0)=v$ and $(\Cmc, \pi) \models \psi$.
\item $(\Cmc, \pi) \models \phi$ iff $(\Cmc, \pi(0))
  \models \phi$.
\item $(\Cmc, \pi) \models \neX\psi$ iff $(\Cmc, \pi^1)\models \psi$.
\item $(\Cmc, \pi) \models \psi_1 \Until \psi_2$ iff there
  exists $i \geq 0$ such that $(\Cmc, \pi^i) \models \psi_2$ and for
  all $0 \leq j < i$ we have $(\Cmc, \pi^j) \models \psi_1$.
\item  $(\Cmc, \pi) \models r(\neX^{i_1} x_1, \ldots, \neX^{i_n} x_n)$
 iff $(\gamma(\pi(i_1),x_1), \ldots, \gamma(\pi(i_n),x_n)) \in I(r)$.
\end{itemize}
Note that the role of the concrete domain $\Amc$ and of the valuation function $\gamma$ is 
restricted to the semantic of atomic constraints. $\CTL^*$-formulas are interpreted over Kripke structures, 
and to obtain their semantics it is sufficient to replace $\Cmc$ by 
 $\Kmc$ in the rules above and to remove the last line.

We use the usual abbreviations: $\theta_1 \vee \theta_2 := \neg
(\neg\theta_1 \wedge \neg\theta_2)$ (for both state and path formulas),
$\All\psi := \neg \Ex\neg\psi$ (universal path quantifier),
$\psi_1 \Rel \psi_2 := \neg (\neg\psi_1 \Until \neg\psi_2)$
(the release operator). Note that $(\Cmc, \pi) \models \psi_1 \Rel
\psi_2$ iff ($(\Cmc, \pi^i) \models \psi_2$ for all 
$i \geq 0$  or there exists $i \geq 0$ such that $(\Cmc, \pi^i) \models \psi_1$ and $(\Cmc, \pi^j) \models \psi_2$ for all $0 \leq j \leq i$).

Using this extended set of operators we can 
put every formula into a semantically equivalent 
{\em negation normal form}, where $\neg$ only occurs in front of 
atomic propositions or  atomic constraints.
Let $\#_{\Ex}(\theta)$ be the the number of different subformulas of the form 
$\Ex \psi$ in the $\CTL^*(\Smc)$-formula
$\theta$. Then $\CTL^*(\Smc)$ has the following tree model property:

\begin{theorem}[cf.~\cite{Gascon09}] \label{thm-tree-model}
Let $\phi$ be a $\CTL^*(\Smc)$-state formula in negation normal form and let $\Amc=(A,I)$ be an $\Smc$-structure. 
Then $\phi$ is $\Amc$-satisfiable if and only if there exists
an {\em $\Amc$-constraint $(\#_{\Ex}(\phi)+1)$-tree} $\Cmc$ 
with $(\Cmc,\varepsilon) \models \phi$.
\end{theorem}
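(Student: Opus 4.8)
The plan is to prove the two directions of the equivalence separately, with the left-to-right (``unravelling'') direction carrying essentially all the work. The right-to-left direction is immediate: an $\Amc$-constraint $(\#_{\Ex}(\phi)+1)$-tree is in particular an $\Amc$-constraint graph, so if some such tree satisfies $\phi$ at its root, then $\phi$ is $\Amc$-satisfiable by definition.

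For the forward direction, suppose $\Cmc = (\Amc, \Kmc, \gamma)$ is an $\Amc$-constraint graph with $\Kmc = (D, \to, \rho)$ and $(\Cmc, v_0) \models \phi$ for some $v_0 \in D$. First I would set $d = \#_{\Ex}(\phi) + 1$ and perform the usual tree-unravelling of $\Kmc$ from $v_0$: a node of the tree is a finite path $v_0 \to v_1 \to \cdots \to v_n$ in $\Kmc$, labelled by $\rho(v_n)$, and the valuation is inherited via $\gamma'(v_0 \cdots v_n, x) = \gamma(v_n, x)$. This unravelled structure is bisimilar to $(\Cmc, v_0)$ in a sense that respects both the Kripke part and the valuation of variables at the current node, and a routine induction on formula structure shows that truth of $\CTL^*(\Smc)$-formulas is preserved: the temporal operators and boolean connectives behave exactly as in the classical $\CTL^*$ tree-model argument, and an atomic constraint $r(\neX^{i_1} x_1, \ldots, \neX^{i_k} x_k)$ only inspects $\gamma$ along a single path at finitely many positions, all of which are faithfully copied. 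The unravelled tree is not yet $d$-branching (nodes may have arbitrary, even infinite, out-degree), so the second step is to thin out the branching. Here one uses that $\phi$ has only $\#_{\Ex}(\phi)$ distinct subformulas of the form $\Ex\psi$: at each node, to witness all existential path subformulas that must hold there, one needs at most $\#_{\Ex}(\phi)$ outgoing edges, plus one extra edge to guarantee that every node has a successor (so the result is still a genuine Kripke structure) — giving branching degree at most $d = \#_{\Ex}(\phi)+1$. One then pads out to make every node have exactly $d$ children (e.g.\ by duplicating an existing child subtree), relabels the domain as $[1,d]^*$, and checks once more by induction that $\phi$ still holds at the root $\varepsilon$; the key point is that discarding edges cannot falsify a surviving $\Ex\psi$ (its witness path was kept) and cannot make a universal obligation $\All\psi$ true where it was false, which is handled by working with the negation normal form so that only $\Ex$ and $\Until$ impose ``positive'' existence requirements along chosen branches.

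The main obstacle, and the reason the theorem is cited rather than trivial, is the interaction between the constraint atoms and the branch-selection step: one has to be careful that the path chosen to witness an $\Ex\psi$ subformula is chosen \emph{consistently} with the subformulas nested inside $\psi$, including nested constraints and nested existential quantifiers, so that a single retained branch simultaneously witnesses everything required of it. The standard device is a Hintikka-style bookkeeping: annotate each tree node with the set of (sub)formulas true there, process existential requirements top-down, and when choosing a witness path for $\Ex\psi$ commit to the full set of path-subformula obligations along that path. Since constraint atoms are path formulas that depend only on the chosen path, this causes no extra difficulty beyond the classical $\CTL^*$ case — the valuation $\gamma$ travels with the branch — but it does mean the branch-count bound $\#_{\Ex}(\phi)$ must be justified at the level of distinct $\Ex$-subformulas rather than occurrences, exactly as in the statement. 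I would therefore structure the write-up as: (1) unravelling preserves truth; (2) a branching-reduction lemma bounding out-degree by $\#_{\Ex}(\phi)+1$ via witness selection; (3) padding to a full $d$-ary tree; and cite \cite{Gascon09} for the detailed bookkeeping, noting that the only novelty here is observing that atomic constraints, being path formulas over a fixed path, pose no obstruction.
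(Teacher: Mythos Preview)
The paper does not give its own proof of this theorem: it is stated with the citation ``cf.~\cite{Gascon09}'' and used as a black box, and later the paper remarks that the proof ``is the same as in \cite{Gascon09}.'' Your sketch is the standard unravelling-plus-branch-reduction argument that underlies that cited result, and you correctly isolate the only point where constraints enter (atomic constraints are path formulas depending on finitely many positions along a single path, so they transfer through unravelling and survive branch selection). Since you already plan to cite \cite{Gascon09} for the bookkeeping, your proposal matches the paper's treatment. One minor phrasing issue: you write that discarding edges ``cannot make a universal obligation $\All\psi$ true where it was false,'' but the direction you need is the converse---discarding branches preserves truth of $\All\psi$ (fewer paths to check), which together with negation normal form is what makes the pruning safe.
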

Note that for checking $(\Amc, \Kmc, \gamma) \models \phi$  we may ignore all
propositions $p \in \Prop$ that do not occur in $\phi$. Similarly,
only those values $\gamma(u,x)$, where $x$ is a variable that appears in $\phi$,
are relevant. Hence, if $\Var_\phi$ is the finite set of variables that occur in
$\phi$, then we can consider $\gamma$ as a mapping from $D \times \Var_\phi$ to
the domain of $\Amc$. Intuitively, we assign to each node $u \in D$ registers
that store the values $\gamma(u,x)$ for $x \in \Var_\phi$.

\section{Satisfiability of constraint $\CTL^*$ over a concrete domain}

When we talk about satisfiability for $\CTL^*(\Smc)$
our setting is as follows: We fix a concrete domain $\Amc=(A,I)$. Given
a $\CTL^*(\Smc)$-state formula $\phi$, we say that 
$\phi$ is $\Amc$-satisfiable if there
is an $\Amc$-constraint graph $\Cmc = (\Amc, \Kmc, \gamma)$ and a
node $v$ of $\Kmc$ such that $(\Cmc, v) \models \phi$.
With $\SAT(\Amc)$ we denote the following computational problem:
{\em Is a given state formula $\phi \in \CTL^*(\Smc)$ $\Amc$-satisfiable?}
The main result of this section is:

\begin{theorem} \label{thm-EHOMDef-SAT} 
Let $\Amc$ be a negation-closed $\Smc$-structure, which moreover has the 
property \EHomDef$(\Bool(\MSO,\WMSOB))$.
Then the problem $\SAT(\Amc)$ is decidable.
\end{theorem}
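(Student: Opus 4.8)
The plan is to reduce $\Amc$-satisfiability of a $\CTL^*(\Smc)$-state formula $\phi$ to the satisfiability of a single $\Bool(\MSO,\WMSOB)$-sentence over Kripke $d$-trees, where $d = \#_{\Ex}(\phi)+1$, and then invoke Theorem~\ref{thm-bojan-to2}. First I would put $\phi$ into negation normal form and apply the tree model property (Theorem~\ref{thm-tree-model}): $\phi$ is $\Amc$-satisfiable iff there is an $\Amc$-constraint $d$-tree $\Cmc = (\Amc,\Kmc,\gamma)$ with $(\Cmc,\varepsilon)\models\phi$. The key conceptual move is to separate the ``temporal/combinatorial'' part of such a witness from the ``concrete domain'' part. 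An $\Amc$-constraint $d$-tree is determined by (a) the labelling of the tree $[1,d]^*$ by atomic propositions, and (b) the valuation $\gamma$ of the finitely many variables $\Var_\phi = \{x_1,\dots,x_m\}$ occurring in $\phi$ into the domain $A$. I would \emph{abstract} the valuation $\gamma$: instead of remembering the actual elements of $A$, remember, for each node $u$ and each pair of variables (together with small shifts $\neX^i$ for $i$ up to the maximal constraint depth $d(\phi)$), which relation symbols from the finite subsignature $\sigma\subseteq\Smc$ actually used in $\phi$ hold. In other words, I pass to a \emph{$\sigma$-labelled tree} whose label at each node records the atomic propositions plus the ``constraint type'' of the local neighbourhood, and additionally I encode, over the whole tree, a $\sigma$-structure $\Bmc_T$ whose universe is (essentially) $[1,d]^* \times \Var_\phi$ with the relations induced by the recorded constraint types.

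The decomposition I would then prove is: $\phi$ is $\Amc$-satisfiable over $d$-trees if and only if there exists a Kripke $d$-tree $T$ carrying such a constraint-type labelling that is (i) \emph{internally consistent and satisfies $\phi$ purely temporally} — this is an ordinary $\CTL^*$ (hence MSO) condition over the enriched alphabet, obtained by treating each atomic constraint $R$ as a fresh atomic proposition whose intended truth value at a node is read off the recorded type, taking care to use negation-closedness so that $\neg R$ is handled correctly (it must be \emph{the} complement, not merely consistency); and (ii) \emph{realizable in $\Amc$} — the associated $\sigma$-structure $\Bmc_T$ admits a homomorphism into $\Amc$. For direction (i)$\Leftarrow$(ii) I would use the homomorphism $h:\Bmc_T\to\Amc$ to define the genuine valuation $\gamma(u,x) = h(u,x)$; since $h$ preserves all relations of $\sigma$ and the complements of those relations are positive-existentially definable over $\Amc$, the recorded types translate exactly into the truth values of the atomic constraints under $\gamma$, so $(\Cmc,\varepsilon)\models\phi$. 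For the forward direction, from a genuine $\Amc$-constraint $d$-tree one reads off the induced types and the identity-style map $\Bmc_T\to\Amc$ is a homomorphism. The one subtlety: the complement clauses use \emph{existential} quantifiers ranging over $A$, i.e. over fresh witness elements not among the $\gamma(u,x)$; I would handle this by building $\Bmc_T$ not just on $[1,d]^*\times\Var_\phi$ but on that set together with finitely many extra witness points per constraint-position, governed by the positive existential formulas $\varphi_r$ from the negation-closedness assumption. Because each $\varphi_r$ is a \emph{fixed} finite formula, only finitely many witnesses per position are needed, so $\Bmc_T$ remains a countable $\sigma$-structure uniformly MSO-definable from the labelling of $T$.

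Putting the pieces together: condition (i) is expressible by an MSO-sentence $\alpha$ over the enriched finite alphabet (standard translation of $\CTL^*$ into MSO on trees, plus the consistency constraints linking the auxiliary ``constraint propositions'' across the small neighbourhood and to the witness structure). Condition (ii), ``$\Bmc_T \homom \Amc$'', is where \EHomDef$(\Bool(\MSO,\WMSOB))$ enters: by hypothesis there is a computable $\sigma\mapsto\psi_\sigma$, a $\Bool(\MSO,\WMSOB)$-sentence, such that for every countable $\sigma$-structure $\Bmc$, $\Bmc\models\psi_\sigma$ iff $\Bmc\homom\Amc$. Since $\Bmc_T$ is interpreted inside $T$ by fixed MSO-formulas (first-order interpretation with the universe being a definable subset of a fixed finite product of copies of the tree), I would relativize $\psi_\sigma$ through this interpretation to obtain a $\Bool(\MSO,\WMSOB)$-sentence $\beta$ over $T$ asserting ``$\Bmc_T\models\psi_\sigma$''; crucially $\Bool(\MSO,\WMSOB)$ is closed under such MSO-interpretations because MSO is, and the bounding quantifier only adds a size constraint that is preserved under the interpretation. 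Then $\phi$ is $\Amc$-satisfiable iff the $\Bool(\MSO,\WMSOB)$-sentence $\alpha\wedge\beta$ has a Kripke $d$-tree model, which by Theorem~\ref{thm-bojan-to2} is decidable.

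The main obstacle I anticipate is the clean handling of the complement clauses and the interpretation of $\Bmc_T$ inside the tree: one must make sure that (a) the finite subsignature that actually has to be passed to the \EHomDef\ function is the \emph{closure} of the signature of $\phi$ under the relation symbols appearing in the positive-existential formulas $\varphi_r$ for negation-closedness (still finite), (b) the witness elements added to $\Bmc_T$ are correctly constrained so that a homomorphism $\Bmc_T\to\Amc$ forces exactly the intended truth pattern of constraints and their negations — not merely one implication — and (c) the relational structure $\Bmc_T$ is genuinely a \emph{countable} structure and is MSO-interpretable in $T$ by formulas independent of $T$, so that the relativization of $\psi_\sigma$ stays within $\Bool(\MSO,\WMSOB)$. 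A secondary technical point is that the auxiliary ``constraint propositions'' range over a neighbourhood of depth $d(\phi)$, so a node must carry information about its descendants up to that depth; since $d(\phi)$ is a constant this is a bounded-memory re-labelling and causes no essential difficulty, but it has to be stated carefully so that the atomic constraint $r(\neX^{i_1}x_1,\dots,\neX^{i_k}x_k)$ evaluated along a path agrees with the recorded type at the path's current node.
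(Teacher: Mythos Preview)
Your overall plan is the paper's: abstract atomic constraints by fresh propositions, build from the labelled tree a $\sigma$-structure on variable-copies, and reduce $\Amc$-satisfiability to an $\MSO$ condition (the abstracted $\CTL^*$ formula) conjoined with ``the induced structure admits a homomorphism to $\Amc$'', the latter expressed via \EHomDef\ and then checked by Theorem~\ref{thm-bojan-to2}.

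Two technical points are handled differently. For negated constraints the paper does not add witness points to $\Bmc_T$; instead it preprocesses $\phi$ into \emph{strong} negation normal form (Lemma~\ref{lemma-neg-closed}) by replacing each $\neg r(\neX^{i_1}x_1,\ldots,\neX^{i_k}x_k)$ with the quantifier-free matrix of the positive existential definition of $\neg r$, instantiated at fresh variables $\neX^d y'_1,\ldots,\neX^d y'_m$. After this step there are no negated constraints left, so the induced structure $\Gmc_\Tmc$ lives purely on $[1,d]^*\times\Var_\phi$ and ordinary homomorphism preservation gives both directions of Lemma~\ref{lemma-first-reduction} cleanly. Your variant (extra witness points in $\Bmc_T$, with the disjunct of $\varphi_r$ guessed in the labelling) is equivalent but organisationally heavier.

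The more substantive divergence is your assertion that $\Bool(\MSO,\WMSOB)$ is closed under $\MSO$-transductions with finite copying. The paper explicitly writes that ``it is not clear whether $\MSO$-transductions (or even first-order interpretations) are compatible with the logic $\WMSOB$'' and sidesteps the question: rather than interpreting the $m$ variable-copies inside the $d$-ary tree, it expands to a $(d{+}m)$-ary tree $\Tmc^e$ in which the copies become $m$ extra children of each node. Then $\Gmc_\Tmc$ is isomorphic to a first-order definable \emph{subset} of $\Tmc^e$, and plain relativisation of $\psi_\sigma$---which obviously preserves $\WMSOB$---suffices. Your claim is in fact correct: under an $m$-copying transduction one may translate $\Bound X\,\phi(X)$ to $\bigwedge_{i=1}^m \Bound X_i\,\exists X_1\cdots\exists X_{i-1}\exists X_{i+1}\cdots\exists X_m\,\phi'(X_1,\ldots,X_m)$, since $\sum_i|X_i|$ is bounded over all tuples satisfying $\phi'$ iff each coordinate is. But you assert this without argument; either supply that translation of the $\Bound$-quantifier explicitly, or adopt the paper's tree-expansion trick, which removes the issue altogether.
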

We say that a $\CTL^*(\Smc)$-formula $\varphi$
is in \emph{strong negation normal form} if negations
only occur  in front of atomic propositions (i.e., $\varphi$ is in
negation normal form and there is no subformula $\neg R$ where
$R$ is an atomic constraint).

Let us fix a  $\CTL^*(\Smc)$-state formula $\phi$ in negation normal
form and a negation-closed $\Smc$-structure $\Amc$ for the rest of
this section. We want to check whether $\phi$ is $\Amc$-satisfiable.
First, we reduce to formulas in strong negation normal form:

\begin{lemma} \label{lemma-neg-closed}
  Let $\Amc=(A,I)$ be a negation-closed $\Smc$-structure. 
  From a given  $\CTL^*(\Smc)$-state formula $\varphi$ one can compute a $\CTL^*(\Smc)$-state formula
  $\snnf{\varphi}$ in strong negation normal form such that 
  $\varphi$ is $\Amc$-satisfiable iff $\snnf{\varphi}$ is
  $\Amc$-satisfiable. 
\end{lemma}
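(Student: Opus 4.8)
The plan is to eliminate negated atomic constraints by exploiting the negation-closedness of $\Amc$. Recall that $\Amc$ being negation-closed means that for every relation symbol $r \in \Smc$ there is a computable positive existential first-order formula $\varphi_r(x_1,\ldots,x_{\arity{r}})$ defining the complement of $I(r)$ in $\Amc$. The starting point is the negation normal form of $\varphi$: negations occur only in front of atomic propositions $p$ and in front of atomic constraints $R = r(\neX^{i_1}x_1,\ldots,\neX^{i_k}x_k)$. So the only thing left to do is to rewrite each subformula $\neg R$ into an equivalent $\CTL^*(\Smc)$-path formula that uses no negated constraints.

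First I would fix such a subformula $\neg R$ with $R = r(\neX^{i_1}x_1,\ldots,\neX^{i_k}x_k)$ and consider the defining formula $\varphi_r(y_1,\ldots,y_k)$ for the complement of $I(r)$. Since $\varphi_r$ is positive existential, it has the shape $\exists z_1 \cdots \exists z_m\, \delta$, where $\delta$ is a positive boolean combination (using only $\wedge$ and $\vee$) of atomic formulas $s(u_1,\ldots,u_{\arity s})$ over $\Smc$, and each $u_j$ is one of the variables $y_1,\ldots,y_k,z_1,\ldots,z_m$. The key observation is that, in the constraint-graph semantics, the truth of $R$ at a path $\pi$ depends only on the values $\gamma(\pi(i_1),x_1),\ldots,\gamma(\pi(i_k),x_k)$, i.e., on a single tuple of register contents read off the first $d(R)+1$ states of $\pi$. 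We may therefore introduce $m$ fresh variables $x_{k+1},\ldots,x_{k+m} \in \Var$ (fresh with respect to $\varphi$ and used only here), intended to carry the witnesses $z_1,\ldots,z_m$, and read all of them off the current state (depth $0$). Concretely, I would define the translation by: replacing each occurrence of $y_j$ ($1\le j\le k$) by $\neX^{i_j}x_j$, each occurrence of $z_l$ by $\neX^0 x_{k+l}$, turning each atom $s(\cdots)$ into the corresponding atomic constraint of $\CTL^*(\Smc)$, and translating $\wedge,\vee$ of $\delta$ directly into $\wedge,\vee$ of path formulas; the existential quantifiers $\exists z_l$ simply disappear, their effect being realized by the freedom of choosing the register values for $x_{k+l}$. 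Call the resulting negation-free path formula $\rho_R$.

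Next I would argue equivalence. Given any $\Amc$-constraint graph $\Cmc=(\Amc,\Kmc,\gamma)$ and a $\Kmc$-path $\pi$: if $(\Cmc,\pi)\models\neg R$, then the tuple $(\gamma(\pi(i_1),x_1),\ldots,\gamma(\pi(i_k),x_k))$ lies outside $I(r)$, hence satisfies $\varphi_r$ in $\Amc$, so there are witnesses $a_1,\ldots,a_m\in A$ making $\delta$ true; modifying $\gamma$ so that $\gamma(\pi(0),x_{k+l})=a_l$ (this changes nothing relevant to $\varphi$ since the $x_{k+l}$ are fresh, in particular the freshness must be maintained across the different $\neg R$-subformulas so that no two rewrites collide) yields $(\Cmc',\pi)\models\rho_R$. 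Conversely, if $(\Cmc,\pi)\models\rho_R$, reading off the register values $a_l=\gamma(\pi(0),x_{k+l})$ gives witnesses showing $\Amc\models\varphi_r(\gamma(\pi(i_1),x_1),\ldots)$, hence the tuple is outside $I(r)$, i.e. $(\Cmc,\pi)\models\neg R$. Performing this replacement simultaneously for every negated-constraint subformula (with mutually disjoint fresh variable blocks) and leaving negated propositions untouched gives $\snnf\varphi$ in strong negation normal form; one has to check that $\snnf\varphi$ remains in negation normal form, which is clear since we never introduce new negations and the $\rho_R$ are built from atomic constraints, $\wedge$, and $\vee$. Finally, $\Amc$-satisfiability is preserved in both directions: any model of $\varphi$ extends to a model of $\snnf\varphi$ by picking register values for the fresh variables as above, and any model of $\snnf\varphi$ restricts to a model of $\varphi$ by simply forgetting those registers. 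Everything is computable because negation-closedness supplies the $\varphi_r$ effectively and the rewriting is syntactic.

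The main obstacle I anticipate is purely bookkeeping rather than conceptual: one must be careful that the same atomic constraint $R$ may occur several times (positively and negatively, possibly nested under different path quantifiers) and that distinct negated constraints need pairwise-disjoint blocks of fresh variables, so that the register-modification arguments for different subformulas do not interfere; and one must double-check that introducing constraints of the form $s(\neX^{i_j}x_j,\ldots,\neX^0 x_{k+l},\ldots)$ of depth $\max_j i_j = d(R)$ stays within the grammar of $\CTL^*(\Smc)$-path formulas (it does, since mixed-depth constraints are allowed). There is no blow-up issue of concern here beyond the number of fresh variables, which is bounded by the total size of the complement-defining formulas summed over the constraints occurring in $\varphi$.
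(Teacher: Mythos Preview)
Your overall strategy---replace each negated constraint by a positive boolean combination of constraints, using fresh variables to simulate the existential witnesses from the complement-defining formula---is exactly the paper's approach. However, your choice to read the fresh witness variables at depth~$0$ is a genuine error, and the forward direction of your equisatisfiability argument breaks.

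The problem is that the witnesses $a_1,\ldots,a_m$ depend on the whole tuple $(\gamma(\pi(i_1),x_1),\ldots,\gamma(\pi(i_k),x_k))$, which in turn depends on $\pi(0),\ldots,\pi(d)$ where $d=d(R)$. By storing the witnesses in registers at $\pi(0)$, you force a single witness choice to work for \emph{every} continuation of $\pi(0)$, which need not be possible. Concretely, take $\Amc=(\mathbb{Z},<,=,\mathrm{adj},(=_c)_{c\in\mathbb{Z}})$ with $\mathrm{adj}(x,y)\Leftrightarrow |x-y|\le 1$; this is negation-closed since $\neg\mathrm{adj}(x,y)\Leftrightarrow \exists z\,((x<z\wedge z<y)\vee(y<z\wedge z<x))$. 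Let $R=\mathrm{adj}(x,\neX x)$ and
\[
\varphi \;=\; \Ex({=_0}(x))\ \wedge\ \Ex({=_{10}}(\neX x))\ \wedge\ \Ex({=_{-10}}(\neX x))\ \wedge\ \All(\neg R).
\]
Then $\varphi$ is $\Amc$-satisfiable (root with $x=0$, two successors with $x=10$ and $x=-10$). Your translation replaces $\neg R$ by $\rho_R=(x<x'\wedge x'<\neX x)\vee(\neX x<x'\wedge x'<x)$ with $x'$ read at depth~$0$. In any putative model of $\snnf{\varphi}$, the root has $x=0$ and must have successors with $x=10$ and $x=-10$; the conjunct $\All\rho_R$ then forces the single value $\gamma(\text{root},x')$ to lie strictly between $0$ and $10$ and also strictly between $-10$ and $0$, which is impossible. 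Hence $\snnf{\varphi}$ is unsatisfiable, contradicting the claimed equisatisfiability.

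The paper avoids this by placing the fresh variables at depth $d=d(R)$, i.e.\ replacing $\neg R$ by $\psi(\neX^{i_1}x_1,\ldots,\neX^{i_k}x_k,\neX^d y'_1,\ldots,\neX^d y'_m)$, and then invoking the tree model property: in a $t$-tree, the node $\pi(d)$ together with $d$ uniquely determines the whole prefix $\pi(0),\ldots,\pi(d)$, so one can consistently set $\gamma'(\pi(d),y'_q)$ to a witness that depends on that entire prefix. Your converse direction (restricting a model of $\snnf{\varphi}$ to one of $\varphi$) is fine and matches the paper; only the placement of the witness registers needs to change.
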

\begin{proof}
    We can assume that $\varphi$ is
  in negation normal form.  Using induction, it suffices to eliminate a single
  negated atomic constraint $\theta = \neg r(\neX^{i_1} x_1, \dots, \neX^{i_k} x_k)$ in $\varphi$,
  where $k = \arity{r}$. Let  $d=\max\{i_1, \ldots, i_k\}$, which is the depth of the constraint
  $r(\neX^{i_1} x_1, \dots, \neX^{i_k} x_k)$. Since $\Amc$ is negation-closed, we can 
  compute a positive quantifier-free first-order formula 
  $\psi(y_1, y_2, \ldots, y_k, z_1, z_2, \dots, z_m)$ over the
  signature $\Smc$  such that
  \mbox{$\Amc\models  \neg r(a_1, \dots, a_k)$} if and only if
  $\Amc\models \exists z_1 \cdots \exists z_m  \, \psi(a_1, \dots, a_k,
  z_1, \ldots, z_m)$. 
  Let $y'_1, \dots, y'_m$ be fresh variables not occurring in
  $\varphi$.  We define the $\CTL^*(\Smc)$-state formula $\varphi'$
  by  replacing in $\varphi$
  every occurrence of the negated constraint $\theta$ by
  the path formula
  $$
  \psi(\neX^{i_1} x_1, \ldots, \neX^{i_k} x_k, \neX^d y'_1, \ldots, \neX^d y'_m) .
  $$
  So, we replace in $\psi(y_1, \ldots, y_k, z_1, \dots, z_m)$ every occurrence of a variable $y_p$ (resp., $z_q$) by
  $\neX^{i_p} x_p$ (resp., $\neX^d y'_q$).
  
  We first prove that
  $\varphi'$ is $\Amc$-satisfiable  if 
  $\varphi$ is $\Amc$-satisfiable.
  If $\varphi$ is $\Amc$-satisfiable, then by Thm.~\ref{thm-tree-model}
  there is an
  $\Amc$-constraint $t$-tree $\Cmc = (\Amc, \Kmc, \gamma)$ with $(\Cmc,\varepsilon) \models \varphi$,
  where $\Kmc = ([1,t]^*,\to,\rho)$ and $\gamma$ has domain
  $[1,t]^*\times \Var_\varphi$ for $\Var_\varphi$ the set of variables of $\varphi$. 
  By choice of the fresh variables, we have $\Var_\varphi \cap \{y'_1, \ldots, y'_m\} = \emptyset$.
  Now we extend $\gamma$ to 
  $\gamma': [1,t]^*\times(\Var_\varphi \cup \{y'_1, \ldots, y'_m\}) \to A$ as follows:
  Consider $w,v\in [1,t]^*$ such that
  $\lvert v \rvert=d$ and  let $\pi$ be a path in the tree $([1,t]^*,\to)$ starting
  at $w$ and passing $wv$, i.e., $\pi(0)=w$ and $\pi(d)=wv$.  Let $v_p = v[:i_p]$ for $1 \leq p \leq k$.
  \begin{itemize}
  \item   If $(\Kmc, \pi) \models 
    \theta = \neg r(\neX^{i_1} x_1, \dots, \neX^{i_k} x_k)$ then there
    are values $a_1, \dots, a_m\in A$ such that
    $\Amc\models
    \psi(\gamma(wv_1, x_1),\ldots,\gamma(wv_k, x_k),  a_1, \dots, a_m)$. 
    Note that the choice of $a_1, \dots, a_m$ can be made independent of the
    concrete choice of $\pi$ but only depending on
    $\gamma$ and $wv$. Thus, it is well-defined to set 
    $\gamma'(wv,y'_q) = a_q$ for all $1\leq q \leq m$. 
  \item If $(\Kmc, \pi) \models 
     r(\neX^{i_1} x_{1}, \ldots, \neX^{i_k} x_{k})$, then we choose
    $\gamma'(wv,y'_q)\in A$ arbitrarily.
  \end{itemize}
  Finally, for all $w$ such that $\lvert w \rvert < d$ we choose
  $\gamma'(w,y'_q)\in A$ arbitrarily.

  By induction on the structure of $\varphi$ we prove that for 
  $\Cmc'=(\Amc,\Kmc, \gamma')$ we have $(\Cmc',\varepsilon) \models \varphi'$. All steps
  are trivial except for the case that the subformula is
  $\theta = \neg r(\neX^{i_1}x_1, \ldots \neX^{i_k}x_k)$. In this
  case we assume that  $(\Cmc, \pi) \models \theta$ for a path $\pi$, and we have to
  show that 
  $$
  (\Cmc',\pi) \models  \theta' = \psi(\neX^{i_1} x_1, \ldots, \neX^{i_k} x_k, \neX^d y'_1, \ldots, \neX^d y'_m).
  $$
  By definition $\pi(d)=wv$ for some word $w=\pi(0)$ and some
  word $v$ such that $\lvert v \rvert =d$. Let $v_p = v[:i_p]$ for $1 \leq p \leq k$.
  Since $(\Cmc,\pi) \models \neg r(\neX^{i_1} x_1, \dots, \neX^{i_k} x_k)$, we conclude
  immediately that 
  \begin{align*}
    \Amc \models
    \psi(\gamma(wv_1,x_1),\ldots,\gamma(wv_k),
    x_k),  \gamma'(wv,y'_1), \ldots, \gamma'(wv,y'_m)).    
  \end{align*}
  Noting that $w (v[:d]) = wv$ we immediately conclude that
  $(\Cmc',\pi) \models \theta'$ which concludes the first
  direction. 

  In order to prove that $\varphi$ is $\Amc$-satisfiable if
  $\varphi'$ is $\Amc$-satisfiable, let us assume (using again Thm.~\ref{thm-tree-model})
  that $\Cmc'=(\Amc,\Kmc,\gamma')$ is an $\Amc$-constraint 
  $t$-tree such that $(\Cmc',\varepsilon) \models \varphi'$. 
  Let $\Cmc$ be the $\Amc$-constraint $t$-tree obtained from $\Cmc'$
  by restricting $\gamma'$ to the variables from $\Var_\varphi$.
  Again by induction on the structure of $\varphi$, we end
  up with the task to show that if $(\Cmc',\pi)\models \theta'$ for a path $\pi$,
  then $(\Cmc,\pi) \models \theta$.
  If 
  $$
  (\Cmc',\pi)\models \theta' = \psi(\neX^{i_1} x_1, \ldots, \neX^{i_k} x_k, \neX^d y'_1, \ldots, \neX^d y'_m),
  $$
  then there are values (namely, 
  $\gamma'(\pi(d),y'_1), \ldots, \gamma'(\pi(d),y'_m)$)
  witnessing
  \begin{align*}
    \Amc\models \exists z_1\cdots \exists z_m 
    \psi( \gamma( \pi(i_1),x_1), \ldots, \gamma(\pi(i_k),x_k), z_1, \ldots, z_m).    
  \end{align*}
  By choice of $\psi$ this implies that
  $\Amc\models \neg r(\gamma( \pi(i_1),x_1), \ldots, \gamma(\pi(i_k),x_k))$. Hence, we have
  $(\Cmc, \pi) \models 
  \neg r(\neX^{i_1} x_1, \dots, \neX^{i_k} x_k) = \theta$. 
\qed
\end{proof}
From now on let us assume that $\phi$ is in strong negation normal
form. 
Let $d = \#_{\Ex}(\phi) +1$.
Let $R_1,\ldots,R_n$ be a list of all atomic constraints that are
subformulas of $\phi$, and let $\Var_\varphi$ be the finite set of variables
that occur in $\varphi$.
Let us fix new propositions $p_1,\ldots,p_n$ (one for each $R_i$) that do not
occur in $\phi$. Let $d_i = d(R_i)$ be the depth of the constraint $R_i$.
We denote with $\phi^a$ the (ordinary)
$\CTL^*$-formula obtained from $\phi$ by replacing every occurrence
of a constraint $R_i$ by $\neX^{d_i} p_i$.
Given an $\Amc$-constraint $d$-tree $\Cmc = (\Amc, \Kmc, \gamma)$,
where $\Kmc = ([1,d]^*,\to,\rho)$ and $\rho(v) \cap
\{p_1,\ldots,p_n\}=\emptyset$ for all $v \in [1,d]^*$,
 we define a Kripke $d$-tree $\Cmc^a = ([1,d]^*,\to,\rho^a)$, where
$\rho^a(v)$ contains
\begin{itemize}
\item all propositions from $\rho(v)$ and
\item all propositions $p_i$ ($1 \leq i \leq n$) such that 
the following holds, where we assume that $R_i$ has the form 
$r(\neX^{j_1}x_1,\ldots,\neX^{j_k}x_k)$ with $k = \arity{r}$ 
(hence, $d_i = \max\{j_1,\ldots,j_k\}$):
\begin{itemize}
\item$v = su$ with $|u| = d_i$
\item$( \gamma(su_1,x_1), \ldots, \gamma(su_k,x_k)) \in I(r)$, where $u_l = u[:j_l]$ for $1 \leq l \leq k$.
\end{itemize}
\end{itemize}
Hence, the fact that proposition $p_i$ labels node $su$ with $|u|=d_i$
means that the constraint $R_i$
holds along every path that starts in node $s$ and descends in the tree
down via node $su$. The superscript ``$a$'' in $\Cmc^a$ stands for
``abstracted'' since we abstract from the concrete constraints and
replace them by new propositions.

Moreover, given a Kripke $d$-tree $\Tmc = ([1,d]^*,\to,\rho)$ 
(where the new propositions $p_1,\ldots, p_n$ are allowed to occur in $\Tmc$)
we define a countable $\Smc$-structure $\Gmc_{\Tmc} = ([1,d]^*\times\Var_\varphi, J)$ 
as follows:
The interpretation $J(r)$ of the 
relation symbol $r \in \Smc$ contains all $k$-tuples (where $k = \arity{r}$)
$((su_1,x_1), \ldots, (su_k, x_k))$ for which
there exist $1 \leq i \leq n$ and $u \in [1,d]^*$ with $|u|=d_i$
such that $p_i \in \rho(su)$,
$R_i = r(\neX^{j_1}x_1,\ldots,\neX^{j_k}x_k)$,
and $u_t = u[:j_t]$ for $1 \leq t \leq k$.

\begin{figure}[t]
\begin{center}
  \setlength{\unitlength}{1mm}
  \begin{picture}(100,35)(0,40)
    \gasset{Nframe=n,Nfill=y,AHnb=0,ELdist=0.4}
    \gasset{Nw=1.2,Nh=1.2,ExtNL=y,NLangle=90,NLdist=.7,dash={0.2 0.5}0}
    \node(eps)(50,70){}
   
    \node[NLangle=100](a)(20,60){$p_1 p_2$}
    \node[NLangle=-90](b)(80,60){$p_1$}
    \drawedge[ELside=r](eps,a){}
    \drawedge[ELside=r](eps,b){}

    \node(aa)(5,50){$p_1 p_2$}
    \node(ab)(35,50){}
    \drawedge[ELside=r](a,aa){}
    \drawedge[ELside=r](a,ab){}
    
    \node(ba)(65,50){}
    \node(bb)(95,50){}
    \drawedge[ELside=r](b,ba){}
    \drawedge[ELside=r](b,bb){}
    
    \node(aaa)(-3,42){$p_1$}
    \node(aab)(13,42){$p_2$}
    \drawedge[ELside=r](aa,aaa){}
    \drawedge[ELside=r](aa,aab){}
    
    \node(aba)(27,42){}
    \node[NLangle=-90](abb)(43,42){$p_1$}
    \drawedge[ELside=r](ab,aba){}
    \drawedge[ELside=r](ab,abb){}
    
    \node(baa)(57,42){}
    \node(bab)(73,42){$p_2$}
    \drawedge[ELside=r](ba,baa){}
    \drawedge[ELside=r](ba,bab){}

    \node[NLangle=120](bba)(87,42){$p_1 p_2$}
    \node(bbb)(103,42){$p_2$}
    \drawedge[ELside=r](bb,bba){}
    \drawedge[ELside=r](bb,bbb){}
    
    \gasset{Nw=.7,Nh=.7,NLangle=0,NLdist=.7}
     \node[NLangle=90](eps-x)(46,70){$1$}
    \node[NLangle=90](eps-y)(54,70){$2$}
    
    \node[NLangle=180](ax)(16,60){$2$}
    \node(ay)(24,60){$2$}
    \node[NLangle=180](bx)(76,60){$1$}
    \node(by)(84,60){$3$}

    \node[NLangle=180](aax)(1,50){$3$}
    \node(aay)(9,50){$3$}
    \node[NLangle=180](abx)(31,50){$2$}
    \node(aby)(39,50){$0$}
    
    \node[NLangle=180](bax)(61,50){$2$}
    \node(bay)(69,50){$0$}
    \node[NLangle=180](bbx)(91,50){$3$}
    \node(bby)(99,50){$0$}
    
    \node[NLangle=270](aaax)(-7,42){$0$}
    \node[NLangle=270](aaay)(1,42){$4$}
    \node[NLangle=270](aabx)(9,42){$2$}
    \node[NLangle=270](aaby)(17,42){$2$}
    
    \node[NLangle=270](abax)(23,42){$0$}
    \node[NLangle=270](abay)(31,42){$2$}
    \node[NLangle=270](abbx)(39,42){$0$}
     \node[NLangle=270](abby)(47,42){$3$}
     
    \node[NLangle=270](baax)(53,42){$0$}
    \node[NLangle=270](baay)(61,42){$2$}
    \node[NLangle=270](babx)(69,42){$0$}
    \node[NLangle=270](baby)(77,42){$0$}
    
    \node[NLangle=270](bbax)(83,42){$4$}
    \node[NLangle=270](bbay)(91,42){$4$}
    \node[NLangle=270](bbbx)(99,42){$3$}
    \node[NLangle=270](bbby)(107,42){$3$}  
     
   \gasset{curvedepth=-2,ELside=r,AHnb=1,dash={}{0}}
    \drawedge(ax,ay){$=$}
    \drawedge(aax,aay){$=$}
    \drawedge(aabx,aaby){$=$}
    \drawedge(babx,baby){$=$}
    \drawedge(bbax,bbay){$=$}
    \drawedge(bbbx,bbby){$=$}
    
    \drawedge[curvedepth=-3](eps-x,ay){$<$} 
    \drawbpedge(eps-x,-30,20,by,150,20){$<$} 
    \drawedge[curvedepth=0](ax,aay){$<$} 
    \drawedge[curvedepth=0](aax,aaay){$<$} 
    \drawbpedge(abx,-45,5,abby,135,5){$<$} 
    \drawedge[curvedepth=0,ELside=l](bbx,bbay){$<$} 
    \end{picture}
\end{center}
\caption{\label{fig-constraint-graph}The $(\mathbb{N},<,=)$-constraint 2-tree
$\Cmc$ from Ex.~\ref{ex-constraint-graph}, the Kripke 2-tree $\Tmc = \Cmc^a$, and 
the structure $\Gmc_{\Tmc}$.}
\end{figure}

\begin{example} \label{ex-constraint-graph}
Figure~\ref{fig-constraint-graph} shows an example, where we 
assume that $d=2$ and $n=2$, $R_1= [ <\!\!(x_1, \neX x_2) ]$, and
$R_2 = [ =\!\!(\neX x_1, \neX x_2) ]$. The figure shows an initial
part of an $(\mathbb{N},<,=)$-constraint 2-tree
$\Cmc = ((\mathbb{N},<,=), \Kmc, \gamma)$.
The edges of the Kripke $2$-tree $\Kmc$ are dotted.  We assume that $\Kmc$ is defined over the empty set of 
propositions. The node to the left (resp., right) of a tree node $u$ is labeled
by the value $\gamma(u,x_1)$ (resp.~$\gamma(u,x_2)$).
The figure shows the labeling of tree nodes with the two new propositions $p_1$ and $p_2$ (corresponding to 
$R_1$ and $R_2$) as well as the $\{<,=\}$-structure $\Gmc_{\Tmc}$ for $\Tmc = \Cmc^a$.
\end{example}

\begin{lemma} \label{lemma-first-reduction}
Let $\phi$ be a $\CTL^*(\Smc)$-state formula in strong negation normal form.
The formula $\phi$ is $\Amc$-satisfiable if and only if there exists
a Kripke $(\#_{\Ex}(\phi) +1)$-tree $\Tmc$ such that $(\Tmc,\varepsilon)
\models \phi^a$ and $\Gmc_{\Tmc} \homom \Amc$.
\end{lemma}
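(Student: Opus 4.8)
The plan is to prove both directions of the equivalence by translating between an $\Amc$-constraint $d$-tree (witnessing $\Amc$-satisfiability of $\phi$) and a pair consisting of a Kripke $d$-tree satisfying $\phi^a$ together with a homomorphism $\Gmc_{\Tmc} \homom \Amc$, where throughout $d = \#_{\Ex}(\phi)+1$.

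For the forward direction, suppose $\phi$ is $\Amc$-satisfiable. By Theorem~\ref{thm-tree-model} there is an $\Amc$-constraint $d$-tree $\Cmc = (\Amc,\Kmc,\gamma)$ with $\Kmc = ([1,d]^*,\to,\rho)$ and $(\Cmc,\varepsilon)\models\phi$; by the remark following that theorem we may assume $\rho(v)$ contains only propositions occurring in $\phi$, so in particular $\rho(v)\cap\{p_1,\ldots,p_n\}=\emptyset$. Then $\Cmc^a = ([1,d]^*,\to,\rho^a)$ is defined as in the text, and I set $\Tmc \defeq \Cmc^a$. First I claim $\Gmc_{\Tmc}\homom\Amc$: the map $h\colon [1,d]^*\times\Var_\varphi \to A$ given by $h(v,x) = \gamma(v,x)$ is a homomorphism. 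Indeed, if $((su_1,x_1),\ldots,(su_k,x_k))\in J(r)$ then by definition of $\Gmc_{\Tmc}$ there is an $i$ and $u$ with $|u|=d_i$, $R_i = r(\neX^{j_1}x_1,\ldots,\neX^{j_k}x_k)$, $u_t = u[:j_t]$, and $p_i\in\rho^a(su)$; but $p_i\in\rho^a(su)$ holds precisely because $(\gamma(su_1,x_1),\ldots,\gamma(su_k,x_k))\in I(r)$, which is $(h(su_1,x_1),\ldots,h(su_k,x_k))\in I(r)$. Next I prove, by induction on the structure of subformulas of $\phi$, that for every node $v$ and every $\Kmc$-path $\pi$, $(\Cmc,v)\models\chi \iff (\Tmc,v)\models\chi^a$ for state subformulas $\chi$ and $(\Cmc,\pi)\models\eta \iff (\Tmc,\pi)\models\eta^a$ for path subformulas $\eta$. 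The Boolean, $\neX$, $\Until$, and $\Ex$ cases are immediate since $\Cmc$ and $\Tmc$ share the same underlying tree $([1,d]^*,\to)$ and $\rho^a$ restricted to propositions of $\phi$ equals $\rho$; atomic propositions of $\phi$ are handled by that same observation. The only interesting base case is an atomic constraint $R_i = r(\neX^{j_1}x_1,\ldots,\neX^{j_k}x_k)$: here $(\Cmc,\pi)\models R_i$ iff $(\gamma(\pi(j_1),x_1),\ldots,\gamma(\pi(j_k),x_k))\in I(r)$; writing $\pi(d_i) = su$ with $|u|=d_i$ and $s=\pi(0)$, we have $\pi(j_t) = su[:j_t] = su_t$, so this holds iff $p_i\in\rho^a(su)$ by definition of $\rho^a$, iff $(\Tmc,\pi)\models\neX^{d_i}p_i = R_i^a$. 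Applying this at $v=\varepsilon$ gives $(\Tmc,\varepsilon)\models\phi^a$.

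For the backward direction, suppose there is a Kripke $d$-tree $\Tmc = ([1,d]^*,\to,\rho)$ with $(\Tmc,\varepsilon)\models\phi^a$ and a homomorphism $h\colon\Gmc_{\Tmc}\to\Amc$. I define the $\Amc$-constraint $d$-tree $\Cmc \defeq (\Amc,\Kmc,\gamma)$ where $\Kmc$ has the same edges and the labeling $\rho'(v) = \rho(v)\setminus\{p_1,\ldots,p_n\}$, and $\gamma(v,x) \defeq h(v,x)$ for $v\in[1,d]^*$, $x\in\Var_\varphi$. I then prove the same equivalence $(\Cmc,\pi)\models\eta \iff (\Tmc,\pi)\models\eta^a$ (and similarly for state formulas) by structural induction, with all cases again trivial except the atomic constraint case. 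For $R_i = r(\neX^{j_1}x_1,\ldots,\neX^{j_k}x_k)$: if $(\Tmc,\pi)\models\neX^{d_i}p_i$, then with $\pi(d_i) = su$, $|u|=d_i$, $s=\pi(0)$, we have $p_i\in\rho(su)$, so by the definition of $J(r)$ in $\Gmc_{\Tmc}$, $((su_1,x_1),\ldots,(su_k,x_k))\in J(r)$ where $u_t = u[:j_t]$; since $h$ is a homomorphism, $(h(su_1,x_1),\ldots,h(su_k,x_k))\in I(r)$, i.e.\ $(\gamma(\pi(j_1),x_1),\ldots,\gamma(\pi(j_k),x_k))\in I(r)$, so $(\Cmc,\pi)\models R_i$. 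For the converse within this case (needed because $\phi$ is in strong negation normal form, constraints appear only positively, so actually only the forward implication $R_i^a \Rightarrow R_i$ is strictly required for this direction — but I will note that the full biconditional holds only one way here, and that positivity is exactly what makes the induction go through): since constraints and their abstractions $\neX^{d_i}p_i$ occur only positively in $\phi$ and $\phi^a$, a one-directional implication at the atoms lifts to the one-directional implication $(\Tmc,\varepsilon)\models\phi^a \Rightarrow (\Cmc,\varepsilon)\models\phi$, which is what we need. Hence $(\Cmc,\varepsilon)\models\phi$ and $\phi$ is $\Amc$-satisfiable.

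The main subtlety — and the reason the lemma is stated for strong negation normal form — is the direction of the atomic equivalence in the backward construction: a homomorphism $h$ only guarantees that $J(r)$-tuples land in $I(r)$, not that non-$J(r)$-tuples avoid $I(r)$, so when $p_i\notin\rho(su)$ we learn nothing about whether the concrete constraint holds. This is harmless precisely because, after the reduction of Lemma~\ref{lemma-neg-closed}, every atomic constraint occurs only positively, so monotonicity of the remaining connectives ($\neg$ only in front of atomic propositions, $\wedge$, $\vee$, $\neX$, $\Until$, $\Rel$, $\Ex$, $\All$ are all monotone in positively-occurring atoms) ensures the one-sided atomic implication propagates to $\phi^a \Rightarrow \phi$. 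I would make this monotonicity argument explicit as the crux of the backward direction, perhaps phrasing the induction hypothesis asymmetrically (forward for the forward direction, and the single needed implication for the backward direction). Everything else is bookkeeping about prefixes $u[:j_t]$ and the fact that $\Cmc$, $\Tmc$, $\Gmc_{\Tmc}$ all share the tree skeleton $[1,d]^*$.
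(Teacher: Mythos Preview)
Your proof is correct and follows essentially the same approach as the paper: take $\Tmc = \Cmc^a$ with $\gamma$ as the homomorphism in the forward direction, and build $\Cmc$ from $\Tmc$ and $h$ in the backward direction, arguing by structural induction on subformulas in each case. Your explicit discussion of monotonicity and the role of strong negation normal form in the backward direction is a helpful clarification of what the paper leaves implicit (the paper simply states the one-sided implication as the induction hypothesis and checks the constraint case); note that in the forward direction the biconditional you claim really does hold at the constraint atoms because $\rho^a$ is \emph{defined} to make it so, whereas in the backward direction only $R_i^a \Rightarrow R_i$ is available, exactly as you observe.
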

\begin{proof}
Let us first assume that $\phi$ is $\Amc$-satisfiable and let
$\Cmc = (\Amc, \Kmc, \gamma)$ be an $\Amc$-constraint graph with
$\Amc=(A,I)$
and $v$ a node of $\Kmc$ such that $(\Cmc, v) \models \phi$.
By Thm.~\ref{thm-tree-model} we can assume that $\Kmc = ([1,d]^*,\to,\rho)$ is
a Kripke $d$-tree with $d=e+1$ and $v = \varepsilon$.
Let $m$, $n$, $R_i$ and $d_i$ ($1 \leq i \leq n$) have the same
meaning as above.
Take the Kripke $d$-tree $\Tmc = \Cmc^a = ([1,d]^*,\to,\rho^a)$.
We claim that $\gamma : [1,d]^* \times \Var_\varphi \to A$
is a homomorphism from $\Gmc_{\Tmc}$ to $\Amc$. For this,
assume that $((su_1,x_1), \ldots, (su_k, x_k))$ belongs to the
interpretation of $r$ in $\Gmc_{\Tmc}$.
Hence, there exist $1 \leq i \leq n$ and $u \in [1,d]^*$ with $|u|=d_i$
such that $p_i \in \rho^a(su)$,
$R_i = r(\neX^{j_1}x_1,\ldots,\neX^{j_k}x_k)$,
and $u_q = u[:j_q]$ for $1 \leq q \leq k$.
Since $\Tmc = \Cmc^a$ and $p_i \in \rho^a(su)$, it follows that
the tuple
$(\gamma(su_1,x_1), \ldots, \gamma(su_k,x_k))$
belongs to the interpretation of $r$ in $\Amc$. Hence,
$\gamma$ is indeed a homomorphism.

In order to show $(\Tmc,\varepsilon)\models \phi^a$ we prove by
induction on the structure of formulas the following implication, where
$\psi$ is a state or path subformula of $\phi$, $v \in [1,d]^*$ is a
node and $\pi$ is a $\Kmc$-path (and hence also a $\Tmc$-path):
If $(\Cmc, v) \models \psi$, then $(\Tmc,v)\models \psi^a$, and if
$(\Cmc, \pi) \models \psi$ then $(\Tmc,\pi)\models \psi^a$.
\begin{itemize}
\item $\psi = p \in P$: We have $\psi^a = p$.
  If $v$ is such that $(\Cmc, v) \models p$, we have $p \in \rho (v)$
  and, since $\rho(v) \subseteq \rho^a(v)$,
  $(\Tmc,v)\models p$. If $\pi$ is a path such that $(\Cmc, \pi)
  \models p$, then 
  $(\Cmc, \pi(0)) \models p$. Using what we have just proven,
  $(\Tmc,\pi(0))\models p$ and thus $(\Tmc,\pi)\models p$.
\item $\psi = \neg p$ with $p \in P$ (recall that negations
  only occurs in front of atomic propositions): We have $\psi^a = \neg p$:
  If $v$ is such that $(\Cmc, v) \models \neg p$, we have $p \not\in
  \rho(v)$. Note that $p \not\in \{p_1,\ldots,p_n\}$. Since
  $\rho(v) = \rho^a \setminus \{p_1,\ldots,p_n\}$ we have
  $p \not\in \rho^a(v)$. Hence,  $(\Tmc, v) \models \neg p$.
  For a path $\pi$ with $(\Cmc, \pi) \models \neg p$ we can argue
  in the same way.
\item $\psi = R_i$ for some $1 \leq i \leq n$:
  Suppose that $R_i = r(\neX^{j_1}y_1,\ldots,\neX^{j_k}y_k)$
  where $d_i = \max\{j_1,\ldots,j_k\}$ is the depth of $R_i$.
  We have $\psi^a = \neX^{d_i} p_i$. Let $\pi$ be a path such that
  $(\Cmc, \pi) \models R_i$.
  By definition  $(\gamma(\pi(j_1),y_1),\ldots, \gamma(\pi(j_k),y_k))
  \in I(r)$ and 
  therefore $p_i \in \rho^a(\pi(d_i))$. This means that
  $(\Tmc,\pi^{d_i})\models p_i$ and
  consequently that $(\Tmc,\pi)\models \neX^{d_i}p_i$.
\item $\psi = \psi_1 \circ \psi_2$ for $\circ \in \{
  \wedge,\vee\}$ and state or path formulas $\psi_1$ and $\psi$:
  Then we have $\psi^a = \varphi_1^a \circ \varphi_2^a$, and we can
  directly argue by induction.
\item $\psi = \Ex \varphi$: We have $\psi^a = \Ex \varphi^a$.
  If $(\Cmc, v) \models \Ex\varphi$ then there must be a
  path $\pi$ with $\pi(0)=v$ and $(\Cmc, \pi) \models \varphi$. By
  induction, we have $(\Tmc, \pi) \models \varphi ^a$ and therefore
  $(\Tmc, v) \models \Ex\varphi^a $. The case $\psi = \All \varphi$ is
  treated similarly.  Moreover, the case that $\Ex \varphi$ or $\All
  \varphi$ is interpreted as a 
  path formula directly reduces to the case of a state formula.
\item  $\psi=\neX \varphi$: We have $\psi^a =  \neX
  \varphi^a$. Let $\pi$ be a path such that $(\Cmc, \pi) \models \neX
  \varphi$. Then $(\Cmc, \pi^1) \models \varphi$. By induction,
  $(\Tmc, \pi^1) \models \varphi^a$ and hence $(\Tmc, \pi)
  \models \neX \varphi^a $.
\item  $\psi=\varphi_1 \Until \varphi_2$: We have $\psi^a =
  \varphi_1^a \Until \varphi_2^a$. Let $\pi$ be a path such that
  $(\Cmc, \pi) \models \varphi_1 \Until \varphi_2$. Then there exists $i
  \geq 0$ such that $(\Cmc, \pi^i) \models \varphi_2$
  and $(\Cmc, \pi^j) \models \varphi_1$ for all $0\leq j<i$. By
  induction we obtain  $(\Tmc, \pi^i) \models \varphi_2^a$ and  $(\Tmc,
  \pi^j) \models \varphi_1^a$ for all $0\leq j<i$. From this we get
  $(\Tmc, \pi) \models \varphi_1^a \Until \varphi_2^a $.
\item  $\psi = \varphi_1\Rel \varphi_2$: We have $\psi^a =
  \varphi_1^a \Rel \varphi_2^a$. Let $\pi$ be a path such that $(\Cmc,
  \pi) \models \varphi_1 \Rel \varphi_2$. This means that
  $(\Cmc, \pi^i) \models\varphi_2$ for all $i \geq 0$, or there exists
  $i \geq 0$ such that  $(\Cmc, \pi^i) \models\varphi_1$ and
  $(\Cmc, \pi^j) \models\varphi_2$ for all $0 \leq j \leq i$.
  Again, using induction, we get: $(\Tmc, \pi^i) \models\varphi^a_2$
  for all $i \geq 0$, or there exists 
  $i \geq 0$ such that  $(\Tmc, \pi^i) \models\varphi^a_1$ and
  $(\Tmc, \pi^j) \models\varphi^a_2$ for all $0 \leq j \leq i$.
  But this means that $(\Tmc, \pi) \models \varphi_1^a \Rel \varphi_2^a $.
\end{itemize}
This concludes the proof of the ``only if'' direction from the lemma.
For the other direction, assume that there exists
a Kripke $d$-tree $\Tmc = ([1,d]^*, \rightarrow, \rho_{\Tmc})$ such that $(\Tmc,\varepsilon)
\models \phi^a$ and there exists a homomorphism $h$ from
$\Gmc_{\Tmc}$ to $\Amc$. Define the $\Amc$-constraint
graph $\Cmc = (\Amc, \Kmc, h)$, where  $\Kmc = ([1,d]^*, \rightarrow, \rho)$ with $\rho(v)
= \rho_{\Tmc} (v) \backslash \{p_1, \dots, p_n \}$ for all $v \in
[1,d]^*$. 
We claim that $(\Cmc,\varepsilon) \models \phi$.

Again, we can prove by induction that for all (state or path)
subformulas $\psi$ of $\phi$, for all $v \in [1,d]^*$, and for all
$\Tmc$-paths $\pi$, if $(\Tmc, v) \models \psi^a$ then $(\Cmc, v)
\models \psi$, and if $(\Tmc, \pi) \models \psi^a$ then $(\Cmc, \pi) \models \psi$.
The only nontrivial part is the case that $\psi$ is one of the atomic
constraints $R_i = r (X^{j_1}x_1, \dots, X^{j_k}x_k)$, where
$k=\arity{r}$. This means that $\psi ^a = X^{d_i} p_i$, where $d_i =
\max\{j_1,\ldots,j_k\}$ is the depth of $R_i$. If $\pi$ is such that
$(\Tmc, \pi) \models \psi^a$, this means that $p_i \in
\rho_{\Tmc}(\pi(d_i))$. Then, according to the definition of
$\Gmc_{\Tmc}$, the interpretation of $r$ in $\Gmc_{\Tmc}$ contains the
$k$-tuple $((\pi(j_1),x_1),\ldots,(\pi(j_k),x_k))$. Since $h$ is a
homomorphism from 
$\Gmc_{\Tmc}$ to $\Amc$, we have
$(h(\pi(j_1),x_1),\dots,h(\pi(j_k),x_k)) \in I(r)$. By definition of $\Cmc$ this means that
$(\Cmc,\pi) \models  r(X^{j_1}x_1, \ldots, X^{j_k}x_k)$.
\qed
\end{proof}
Let $\theta = \phi^a$ for the further discussion. Hence, $\theta$ 
is an ordinary $\CTL^*$-state formula, where negations only occur in front of 
propositions from $\Prop \setminus \{p_1,\ldots,p_m\}$,
and $d=\#_{\Ex}(\theta)+1$. By Lemma~\ref{lemma-first-reduction}, we have to check,
whether there exists a Kripke $d$-tree $\Tmc$ such
that $(\Tmc,\varepsilon)\models \theta$ and 
$\Gmc_{\Tmc} \homom \Amc$. 

Let $\sigma \subseteq \Smc$ be the finite subsignature consisting of
all predicate symbols that occur in our initial $\CTL^*(\Smc)$-formula
$\phi$. Note that $\Gmc_{\Tmc}$ is actually a $\sigma$-structure. 
Since the concrete domain $\Amc$ has the 
property \EHomDef$(\Bool(\MSO,\WMSOB))$, one can compute from $\sigma$
a $\Bool(\MSO,\WMSOB)$-formula $\alpha$ such that
for every countable $\sigma$-structure $\Bmc$ we have 
$\Bmc \models \alpha$ if and only if $\Bmc \homom\Amc$. Hence, our new goal is to decide,
whether there exists a Kripke $d$-tree $\Tmc$ such
that 
$(\Tmc,\varepsilon)\models \theta$ and $\Gmc_{\Tmc} \models \alpha$ (note that
$\Gmc_{\Tmc}$ is countable).
It is well known that every $\CTL^*$-state formula can be effectively transformed
into an equivalent $\MSO$-formula with a single 
free first-order variable. Since the root $\varepsilon$ of a tree is first-order definable,
we get an $\MSO$-sentence $\psi$ such that 
$(\Tmc,\varepsilon)\models \theta$ if and only if $\Tmc \models \psi$.
Hence, we have to check 
whether there exists a Kripke $d$-tree $\Tmc$ such
that 
$\Tmc\models \psi$ and $\Gmc_{\Tmc} \models \alpha$. 
If we can translate the $\Bool(\MSO,\WMSOB)$-formula $\alpha$
back into a $\Bool(\MSO,\WMSOB)$-formula $\alpha'$ such that
($\Gmc_{\Tmc} \models \alpha \Leftrightarrow \Tmc \models \alpha'$), 
then we can finish the proof.

Recall the construction of $\Gmc_{\Tmc}$: For every node 
$v \in D$ of $\Tmc = (D,\to,\rho)$ we introduce $m := |\Var_\varphi|$ copies
$(v,x)$ for $x \in \Var_\varphi$.
The $\Smc$-relations between these nodes are determined
by the propositions $p_1, \ldots, p_n$: The interpretation of 
$r \in \Smc$ contains all $k$-tuples ($k = \arity{r}$)
$((su_1,y_1), \ldots, (su_k, y_k))$ for which
there exist $1 \leq i \leq n$ and $u \in [1,d]^*$ with $|u|=d_i$,
$p_i \in \rho(su)$, $R_i = r(\neX^{j_1}y_1,\ldots,\neX^{j_k}y_k)$,
and $u_t = u[:j_t]$ for $1 \leq t \leq k$.
This is a particular case of an
$\MSO$-transduction \cite{Courcelle91a} with copy number $m$. It is therefore
possible to compute from a given $\MSO$-sentence $\eta$ over the signature 
$\Smc$ an $\MSO$-sentence $\eta'$  
such that $\Gmc_{\Tmc} \models \eta \Leftrightarrow 
\Tmc \models \eta'$. But the problem is that in our situation
$\eta$ is the $\Bool(\MSO,\WMSOB)$-formula $\alpha$, and it is not
clear whether $\MSO$-transductions (or even first-order
interpretations) are compatible with the logic $\WMSOB$. 
Nevertheless, there is a simple solution.
Let $\Var_\varphi = \{x_1, \ldots, x_m\}$.
From a Kripke $d$-tree $\Tmc = ([1,d]^*,\to,\rho)$ we build an extended $(d+m)$-Kripke
tree $\Tmc^e = ([1,d+m]^*,\to,\rho^e)$ as follows: 
Let us fix new propositions $q_1,\ldots,q_m$ (one for each variable $x_i$) that do not
occur in the $\MSO$-sentence $\psi$ and such that $\rho(v) \cap \{q_1,\ldots,q_m\} =
\emptyset$ for all $v \in [1,d]^*$. 
We define the new labeling function $\rho^e$ as
follows: 
\begin{eqnarray*}
\rho^e(v) & = & \rho(v) \text{ for } v \in [1,d]^* \\
\rho^e(vi) & = & \{q_{i-d}\} \text{ for } v \in [1,d]^*, d+1 \leq i \leq d+m \\
\rho^e(viu) & = & \emptyset \text{ for } v \in [1,d]^*, d+1 \leq i
\leq d+m, u \in [1,d+m]^+
\end{eqnarray*}
It is easy to write down an MSO-sentence $\beta$ such that 
for every $(d+m)$-Kripke tree $\Tmc'$ we have
$\Tmc' \models \beta$ if and only if $\Tmc' \cong \Tmc^e$ 
for some Kripke $d$-tree $\Tmc$.
Moreover, since the old Kripke $d$-tree $\Tmc$ is $\MSO$-definable within
$\Tmc^e$, we can construct from the $\MSO$-sentence $\psi$
a new $\MSO$-sentence $\psi^e$ such that 
$\Tmc\models \psi$ if and only if $\Tmc^e \models \psi^e$.
Finally, let $q(x) = \bigvee_{i=1}^m q_i(x)$. Then, the nodes of 
$\Gmc_{\Tmc}$ are in a natural bijection with the nodes of $\Tmc^e$ that
satisfy $q(x)$: If $\Tmc^e \models q(u)$ for $u \in [1,d+m]^*$, then 
there is a unique $i \in [1,m]$ such that 
$\Tmc^e \models q_i(u)$ and $u = v (i+d)$. Then we associate
the node $u$ with node $(v,x_i)$ of $\Gmc_{\Tmc}$.
By relativizing all quantifiers in  the
$\Bool(\MSO,\WMSOB)$-formula $\alpha$ to $q(x)$,
we can construct a
$\Bool(\MSO,\WMSOB)$-formula $\alpha^e$ such that
$\Gmc_{\Tmc} \models \alpha$ if and only if 
$\Tmc^e \models \alpha^e$.

It follows that there is
a Kripke $d$-tree $\Tmc$ such
that $\Tmc\models \psi$ and $\Gmc_{\Tmc} \models \alpha$ 
if and only if there is
a Kripke $(d+m)$-tree $\Tmc'$ such
that $\Tmc'\models (\beta \wedge \psi^e \wedge \alpha^e)$.
Since $\beta \wedge \psi^e \wedge \alpha^e$ is a 
$\Bool(\MSO,\WMSOB)$-formula, the latter is decidable by 
Thm.~\ref{thm-bojan-to2}.

\section{Concrete domains over the integers}

The main technical result of this section is:
\begin{proposition}\label{second}
 $\EZmc$ from \eqref{zmc^m} has the property
$\EHomDef(\Bool(\MSO,\WMSOB))$.
\end{proposition}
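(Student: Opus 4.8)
The plan is, for each finite subsignature $\sigma\subseteq\Smc$, to describe an explicit $\Bool(\MSO,\WMSOB)$-sentence $\varphi_\sigma$ that holds in a countable $\sigma$-structure $\Bmc=(B,J)$ if and only if $\Bmc\homom\EZmc$, by first translating $\Bmc\homom\EZmc$ into a purely combinatorial condition on $\Bmc$ and then formalising that condition. Note that $\sigma$ mentions at most the symbols $<$ and $=$, finitely many symbols $=_a$ with $a$ ranging over a finite set $A$ (put $a_{\min}=\min A$, $a_{\max}=\max A$), and finitely many symbols $\equiv_{a,b}$; let $M$ be the $\lcm$ of the moduli occurring, so $A$ and $M$ are fixed data computable from $\sigma$.

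First I would eliminate $=$: let $\sim$ be the equivalence closure of $J(=)$, which is MSO-definable over $\Bmc$ via the $\reach$-construction of Example~\ref{exa:WMSOBFormulas} applied to $x=y\vee J(=)(x,y)\vee J(=)(y,x)$. Every homomorphism $h\colon\Bmc\to\EZmc$ factors through $B/{\sim}$, so I pass to the induced relation $\prec$ on $\sim$-classes, where $[b]\prec[b']$ iff $J(<)$ holds between some $\sim$-representatives; reachability in $\prec$, its transitive closure, and $\prec$-paths of any fixed length are all MSO over $\Bmc$ (and their finite-induced-subgraph variants are WMSO), again via $\reach$. I would then collect the necessary conditions for $\Bmc\homom\EZmc$ and sort them by logical strength. \emph{(a)} $\prec$ is acyclic: an MSO condition, a negated $\ExistsCycle$-style formula relativised to $\prec$. \emph{(b)} Local consistency of the unary predicates: no $\sim$-class meets $J(=_a)$ and $J(=_{a'})$ for $a\neq a'$; the congruence constraints attached to one $\sim$-class admit a common residue; and if a $\sim$-class meets some $J(=_a)$ then $a$ satisfies all its congruence constraints. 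As $A$, $M$ and the set of moduli are fixed, only finitely many compatibility checks arise, so (b) is MSO (essentially first-order modulo $\sim$). \emph{(c)} The discreteness condition: for every pair of $\sim$-classes the lengths of $\prec$-paths between them are bounded. By Example~\ref{exa:WMSOBFormulas} this is exactly $\forall x\,\forall y\;\BoundedPaths_{\varphi_{\prec}}(x,y)$, where $\varphi_{\prec}$ defines the $\prec$-graph, acyclic once (a) holds; this is the only place the bounding quantifier $\Bound$ is genuinely needed, and it is precisely the obstruction distinguishing $\EZmc$ from a dense domain like $\Qmc$ (cf.\ Example~\ref{ex2}). \emph{(d)} Compatibility of the order with the finitely many pinned values and moduli: call a $\sim$-class \emph{bounded} if it has both a $\prec$-predecessor and a $\prec$-successor meeting some $J(=_a)$ (the bounded classes form an MSO-definable set, and any homomorphism must send them into $\{a_{\min},a_{\min}+1,\dots,a_{\max}\}$); the requirement is that there exists a map from the bounded classes into this finite set which is strictly $\prec$-monotone, extends the $=_a$-pins, and respects all congruence constraints. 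Since the target set is finite, ``there is such a map'' is an existential MSO statement that guesses the finitely many preimages of the values as set variables and checks partition, monotonicity, pins, and residues. The sentence $\varphi_\sigma$ is the conjunction of the formalisations of (a)--(d).

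Soundness (each of (a)--(d) is necessary) is straightforward. The substantive work, and the main obstacle, is completeness: assuming (a)--(d), build a homomorphism into $\EZmc$. I would assign $\mathbb{Z}$-values to the $\sim$-classes, processing them along a topological order of $\prec$: on bounded classes use the map from (d); a remaining class is classified by whether it has a pinned $\prec$-predecessor and/or successor, and using (a), (c), and the fact that (by (c)) $\ell(c,c'):=$ the supremum of the lengths of $\prec$-paths from $c$ to $c'$ is always finite, one shows greedily that each such class can receive the least value (respectively, in a dual reverse-order phase for classes with a pinned successor but no pinned predecessor, the greatest value) consistent with the already-placed predecessors (resp.\ successors), its forced value and congruence constraints, and leaving the room demanded by its successors; the monotonicity of ``having a pinned predecessor'' and ``having a pinned successor'' along $\prec$ keeps this layered construction coherent. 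Getting this bookkeeping right, and in particular confirming that (a)--(d) is a complete list of obstructions — especially the interaction of the global boundedness condition (c) with the local but value-capped data of (b) and (d) — is the delicate part of the argument.
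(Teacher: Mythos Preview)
There is a genuine gap: your list (a)--(d) is not sufficient for $\Bmc\homom\EZmc$. Condition (c) bounds $\prec$-paths only between each \emph{fixed} pair of classes, but for classes that are pinned on just one side a stronger, one-sided uniform bound is needed. Concretely, over $\sigma=\{<,=,=_0\}$ take elements $b_i$ ($i\geq 1$), $g_{i,j}$ ($i\geq 1$, $1\leq j\leq i$), and a single element $a$; put each $b_i$ into $J(=_0)$, let $J(=)=\emptyset$, and let $J(<)$ consist exactly of the pairs $(b_i,g_{i,1})$, $(g_{i,j},g_{i,j+1})$ for $j<i$, and $(g_{i,i},a)$. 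Then $\sim$ is trivial and $\prec=J(<)$; between any two fixed nodes there is at most one $\prec$-path, so (c) holds, and (a), (b), (d) are immediate (the bounded part is $\{b_i:i\geq 1\}$, all sent to~$0$). Yet there is no homomorphism to $\EZmc$: one is forced to $h(b_i)=0$, hence $h(g_{i,i})\geq i$, hence $h(a)>i$ for every $i$. Your greedy construction breaks exactly here: when you reach $a$, its already-placed predecessors $g_{i,i}$ carry unbounded values, and nothing in (a)--(d) prevents this; ``leaving room for successors'' cannot help because $a$ has no successors and imposes no upper bound on the $g_{i,i}$.

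The paper closes this gap by refining the non-bounded part into three pieces $G$ (pinned $\prec^*$-predecessor, no pinned $\prec^*$-successor), $S$ (dually), and $R$ (neither), and requiring not merely that $\Amc{\restriction}_G$ embed into $\EZmc$ but that $\Amc{\restriction}_G\homom\ENmc$ (and dually $\Amc{\restriction}_S\homom\ENegmc$). The characterisation of $\homom(\mathbb{N},<,=)$ is the strictly stronger statement that for each node $y$ there is a uniform bound on the length of \emph{all} $\prec$-paths ending at $y$, regardless of their start; this is still in $\WMSOB$, via $\forall y\;\Bound Z\;\exists x\;\Path_{\varphi_\prec}(x,y,Z)$, i.e.\ the starting-point quantifier is pulled \emph{inside} the bounding quantifier. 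In the example above this fails at $y=a$ inside $G$. Once these two one-sided conditions are added to your (a)--(d), the greedy argument can be replaced by a straightforward shift-and-glue of four partial homomorphisms (one per piece), and completeness follows.
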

Since $\EZmc$ is negation-closed 
(see Ex.~\ref{ex1})
our main result follows by Thm.~\ref{thm-EHOMDef-SAT}:
\begin{theorem}\label{thm-main}
 $\SAT(\EZmc)$ is decidable.
\end{theorem}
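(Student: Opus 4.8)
The plan is to establish Proposition~\ref{second}, i.e.\ that $\EZmc$ has the property $\EHomDef(\Bool(\MSO,\WMSOB))$; the theorem $\SAT(\EZmc)$ decidable then follows immediately from Ex.~\ref{ex1} and Thm.~\ref{thm-EHOMDef-SAT}, so the real work is entirely in the proposition. Given a finite subsignature $\sigma\subseteq\Smc$ (so $\sigma$ mentions $<$, $=$, and finitely many of the predicates $=_a$ and $\equiv_{a,b}$), I need to compute a $\Bool(\MSO,\WMSOB)$-sentence $\psi_\sigma$ that holds in a countable $\sigma$-structure $\Bmc=(B,J)$ exactly when $\Bmc\homom\EZmc$. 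The guiding intuition is a characterisation ``à la'' Ex.~\ref{ex2}: a homomorphism into $(\mathbb{Q},<,=)$ exists iff there is no directed cycle through an edge of $<$ in the mixed graph built from $J(<)\cup J(=)\cup J(=)^{-1}$. Over $\mathbb{Z}$ with the modulo predicates, the obstruction to a homomorphism must additionally rule out (i) finite $<$-chains that are forced by the $=_a$ constants to span a bounded numeric interval while being arbitrarily long, and (ii) modular inconsistencies. So the characterisation will be a conjunction: an acyclicity-type condition (no $<$-edge lies on a cycle of the $\le$-closure, where $\le$ is $(J(<)\cup J(=)\cup J(=)^{-1})^*$), a condition that between any two nodes constrained to fixed values $=_a$ and $=_b$ with $a\le b$ every $<$-path has length bounded by $b-a$, a uniform bound on the length of all increasing paths that are ``pinned'' from below or above by a constant, and local consistency of the $\equiv_{a,b}$ constraints propagated along $=$-edges and against the $<$-edges' forced increments. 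Crucially, the pinned-chain-length bounds are exactly what the bounding quantifier $\Bound$ expresses, using the $\BoundedPaths_\varphi$ macro from Ex.~\ref{exa:WMSOBFormulas}: the homomorphism exists iff, in addition to acyclicity, all the relevant families of $<$-paths have bounded length, which is an $\Bool(\MSO,\WMSOB)$ statement.

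Concretely, I would proceed as follows. First, replace $\Bmc$ by a ``normalised'' structure in which all modular constraints are made explicit: contract $=$-classes (or at least reason modulo the $=$-relation, which need not be the identity on $B$ — this subtlety is already flagged in Ex.~\ref{ex2}) and, using negation-closedness and a fixed-point propagation along $<$-edges, detect whether the constants and congruences can be simultaneously satisfied; all of this is $\MSO$-expressible because it only involves reachability in finite induced subgraphs via $\reach_\varphi$. Second, write the strict order $<^*$ induced by $J$ and express ``no $<$-edge is on a $\le$-cycle'' using $\ExistsCycle_\varphi$ adapted so that the cycle uses at least one strict edge — call the resulting $\MSO$-sentence the acyclicity part. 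Third, and this is the heart of the matter, handle the integrality: whenever a node $u$ carries $=_a$ and a node $w$ carries $=_b$ with $a<b$ and $u<^*w$, any $<$-path from $u$ to $w$ has at most $b-a$ edges, and more generally whenever a $<$-path starts at a node pinned by some $=_a$ (or ends at one), its length is bounded; conversely, if all such pinned $<$-paths are length-bounded and the modular/constant data are consistent, one can realise the homomorphism greedily by assigning integers level by level, choosing each value large enough (or small enough) to respect $<$, inside the residue class forced by the congruences, and equal to the forced constant when one is present — the bound guarantees this greedy assignment never runs out of room between two pinned values. Encoding ``all pinned $<$-paths have bounded length'' is done with a formula of the shape $\Bound Z\colon \Path_{<}(x,y,Z)$ guarded by ``$x$ or $y$ is pinned,'' exactly the $\BoundedPaths$ construction; the finitely many constants and moduli from $\sigma$ enter only as parameters in these guards, so the whole sentence $\psi_\sigma$ is uniformly computable from $\sigma$.

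I would then verify the two directions of the biconditional. For ``$\Bmc\homom\EZmc$ implies $\Bmc\models\psi_\sigma$'': a homomorphism $h\colon B\to\mathbb{Z}$ forces $h$ to be constant on $=$-classes and monotone along $<$-edges, so a $\le$-cycle through a strict edge would give $h(u)<h(u)$, impossible; and a $<$-path from a node with value $a$ to a node with value $b$ has strictly increasing $h$-values, all lying in $[a,b]$, hence length $\le b-a$, giving the bound; the congruence consistency is immediate. For the converse, one builds $h$ by a level-by-level (breadth-first) greedy construction on the countable structure: process nodes in an order compatible with $<^*$ (possible by acyclicity, after quotienting out $=$-classes), and when assigning $h(v)$ pick the least integer (or, symmetrically, the greatest, depending on which side is pinned) that exceeds all already-assigned $<$-predecessors, lies below all pinned $<$-successors, equals the forced constant if present, and sits in the forced residue class; the $\Bound$-conditions plus consistency of the modular data guarantee such an integer always exists. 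The main obstacle I anticipate is precisely formulating and proving this greedy realisation lemma in the presence of all three features at once — strict order, constants, and congruences — and in the degenerate regime where $=$ is not the identity on $B$ and where pinned successors and predecessors interact; getting the exact set of $\WMSOB$-expressible conditions that is both necessary and sufficient (neither too weak, which breaks the converse, nor too strong, which breaks the forward direction, e.g.\ over-restricting unpinned chains that can be stretched freely in $\mathbb{Z}$) is where the care is needed. Everything else — the reductions of Lemma~\ref{lemma-neg-closed} and Lemma~\ref{lemma-first-reduction}, the $\MSO$-transduction manipulations, and the final appeal to Thm.~\ref{thm-bojan-to2} — is already done in the preceding sections, so the decidability of $\SAT(\EZmc)$ follows once Proposition~\ref{second} is in hand.
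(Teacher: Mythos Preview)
There is a genuine gap in your characterisation. You propose to bound only those $<$-paths that are ``pinned'' by a constant $=_a$ at one end, on the grounds that ``unpinned chains can be stretched freely in $\mathbb{Z}$.'' This is false: a single chain can be stretched, but once two nodes $a,b$ are fixed, the value $h(b)-h(a)$ bounds the length of \emph{every} $<$-path from $a$ to $b$, whether or not $a$ or $b$ carries a constant. Concretely, take the $\{<\}$-structure on $\{a,b\}\cup\{c_{i,j}\mid i\ge 1,\ 1\le j\le i\}$ with $a<c_{i,1}<c_{i,2}<\cdots<c_{i,i}<b$ for every $i$: it is acyclic, carries no constants (so your conditions are vacuous), yet admits no homomorphism to $(\mathbb{Z},<)$ because the paths from $a$ to $b$ have unbounded length. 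The paper's Lemma~\ref{lemma1} gives the correct condition for $(\mathbb{Z},<)$: acyclicity \emph{and} for all $a,b$ a bound on the length of paths from $a$ to $b$. The $\WMSOB$-sentence thus needs the unguarded $\forall x\,\forall y\,\BoundedPaths_{\varphi_<}(x,y)$, not a guarded version; the constants $=_c$ are handled separately by partitioning the structure into a bounded part (mapped into a finite interval, expressible in pure $\MSO$), a part above it, a part below it, and a disconnected rest (Lemmas~\ref{lem:Homomorphism-BoundedPart} and~\ref{lem:splitinFour}).

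Two further points. First, ``process nodes in an order compatible with $<^*$'' presupposes a well-founded linear extension, which a countable acyclic graph need not have (e.g.\ $(\mathbb{Z},<)$ itself); the paper's proof of Lemma~\ref{lemma1} instead enumerates the nodes arbitrarily as $a_0,a_1,\ldots$ and maintains a partial homomorphism on the growing convex hull $S_n=\{a\mid\exists i,j\le n:(a_i,a),(a,a_j)\in E^*\}$, where the bounded-paths condition is precisely what makes each extension step possible. Second, $<$-edges do not force unit increments, so modular constraints cannot be ``propagated against the $<$-edges' forced increments''; the paper treats the moduli orthogonally, checking non-contradiction within $\sim$-classes and then scaling a $(\mathbb{Z},<,=)$-homomorphism by $\prod_{b\in D}b$ (Lemma~\ref{lem:reducemodulos}).
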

We prove Prop.~\ref{second} in three steps. First, we show  that  the structure $(\mathbb{Z}, <)$ has the
property \EHomDef($\WMSOB$). Then we extend this result to the
structure $(\mathbb{Z}, <,=)$ and, finally, to the full structure $\EZmc$.

As a preparation of the proof,  we first define some terminology and
then we characterize structures that allow homomorphisms to $(\mathbb{Z}, <)$ in
terms of their paths.  Let $\Amc = (A,I)$ be a countable  $\{<\}$-structure. 
We identify $\Amc$ with the directed graph $(A,E)$ where $E = I(<)$.
When talking about paths, we always refer to finite directed $E$-paths.
The length of a path $(a_0, a_1, \ldots, a_n)$ (i.e., $(a_{i-1}, a_{i}) \in E$ for 
$1 \leq i \leq n$) is $n$. 
For $S \subseteq A$ and  $x \in A \setminus S$, a path
 from $x$ to $S$ is a path from $x$ to some node $y \in S$.
 A path from $S$ to $x$ is defined in a symmetric way.

\begin{lemma}\label{lemma1}
We have $\Amc  \homom (\mathbb{Z}, <)$ if and only if 
\begin{enumerate}[(H1)]
\item \label{H1} $\Amc$ does not contain cycles, and
\item \label{H2} for all $a,b \in A$ there is $c \in \mathbb{N}$
  such that the  length of all paths from $a$ to $b$ is bounded by $c$. 
\end{enumerate}
\end{lemma}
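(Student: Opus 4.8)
The plan is to prove both directions directly. For the ``only if'' direction, suppose $h \colon \Amc \to (\mathbb{Z}, <)$ is a homomorphism. Then $h$ strictly increases along every $E$-edge, so along any path $(a_0, a_1, \ldots, a_n)$ we have $h(a_0) < h(a_1) < \cdots < h(a_n)$, which immediately rules out cycles, giving (H\ref{H1}). Moreover, if $(a_0, \ldots, a_n)$ is a path from $a$ to $b$, then $h(a) + n \le h(b)$ (since each step increases $h$ by at least $1$), so $n \le h(b) - h(a)$; thus $c = h(b) - h(a)$ is a uniform bound on the lengths of all $a$-to-$b$ paths, giving (H\ref{H2}).

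For the ``if'' direction, assume (H\ref{H1}) and (H\ref{H2}). By (H\ref{H1}) the graph $(A,E)$ is acyclic, so $E^+$ is a strict partial order. I would first reduce to the connected case: if the underlying undirected graph of $\Amc$ decomposes into connected components $(A_j)_{j \in J}$ with $J$ countable, it suffices to find a homomorphism $h_j \colon A_j \to (\mathbb{Z},<)$ for each component and then shift the $j$-th one by a sufficiently large offset (e.g.\ place $h_j(A_j)$ into a disjoint interval of $\mathbb{Z}$), using that each component, being countable and satisfying (H\ref{H2}), still has bounded path lengths between any two of its vertices. So assume $\Amc$ is connected. Fix a base point $a_0 \in A$. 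For $a \in A$, define
\[
  h(a) \;=\; \sup\{\, n \mid \text{there is a path of length } n \text{ from } a_0 \text{ to } a \,\} \;-\; \sup\{\, n \mid \text{there is a path of length } n \text{ from } a \text{ to } a_0 \,\},
\]
with the convention that an empty supremum is $0$. By (H\ref{H2}) applied to the pairs $(a_0, a)$ and $(a, a_0)$, both suprema are finite, so $h(a) \in \mathbb{Z}$ is well defined. It remains to check that $h$ is a homomorphism: if $(a,b) \in E$ then any path from $a_0$ to $a$ extends to a path from $a_0$ to $b$ (one step longer), and any path from $b$ to $a_0$ extends to a path from $a$ to $a_0$ (one step longer), and these are the only ways lengths change; a short case analysis on the four suprema shows $h(b) \ge h(a) + 1 > h(a)$. (Here acyclicity is used to ensure that prepending/appending the edge $(a,b)$ to a path never creates a repeated vertex, so ``path'' may be read as ``walk'' throughout without affecting the suprema.)

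The main obstacle I expect is the connectivity reduction and the precise bookkeeping in the case analysis for the homomorphism property: one must be careful that the two suprema defining $h(a)$ interact correctly across an edge, and that bounded path length between $a$ and $b$ in the whole structure really does survive restriction to a connected component (it does, trivially, since every $a$-to-$b$ path lies inside the component of $a$). A cleaner alternative for the connected case would be to define $h(a)$ as the length of a longest path from $a_0$ to $a$ when $a$ is ``above'' $a_0$ and minus the length of a longest path from $a$ to $a_0$ when $a$ is ``below'', but the symmetric difference-of-suprema formula above avoids having to argue that these two cases are exhaustive and mutually exclusive, so I would prefer it.
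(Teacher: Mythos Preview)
Your ``only if'' direction is fine and matches the paper's argument.

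The ``if'' direction has a genuine gap: your difference-of-suprema formula is \emph{not} a homomorphism in general, even on a single undirected connected component. The problem is that undirected connectivity does not guarantee that every vertex is comparable to the basepoint in the directed reachability order $E^*$. Concretely, take $A=\{a_0,a,b,c,d\}$ with edges
\[
c\to a,\quad c\to a_0,\quad a\to b,\quad b\to d,\quad a_0\to d .
\]
This graph is acyclic, satisfies (H2) trivially (it is finite), and is undirectedly connected. There is no directed path from $a_0$ to $a$ and none from $a$ to $a_0$, so both suprema for $a$ are empty and $h(a)=0$; the same holds for $b$, so $h(b)=0$. But $(a,b)\in E$, so $h$ is not order-preserving. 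Your proposed case analysis cannot close this: from $(a,b)\in E$ you only get $p(b)\ge p(a)$ and $q(a)\ge q(b)$, hence $h(b)\ge h(a)$, and neither inequality is strict when $a$ and $b$ are both incomparable to $a_0$. Your alternative (``above/below $a_0$'') fails for the same reason, as you already suspected.

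The paper avoids this by not fixing a single basepoint. It enumerates $A=\{a_0,a_1,\ldots\}$ and builds an increasing chain of partial homomorphisms $h_n:S_n\to\mathbb{Z}$, where $S_n$ is the $E^*$-convex hull of $\{a_0,\ldots,a_n\}$. The invariant that $h_n(S_n)$ is bounded in $\mathbb{Z}$ lets one place the next element $a_{n+1}$ (and everything on paths between it and $S_n$) either above or below the current range, using the bound $c_n^a$ on path lengths from (H2). This step-by-step extension is exactly what handles vertices that are incomparable to everything seen so far; your one-shot formula has no mechanism for that.

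A minor side remark: in your reduction to components, the ``shift into disjoint intervals'' is unnecessary (there are no edges between components, so any choice works), and would not literally be possible anyway since a component's image may be unbounded (e.g.\ an infinite directed chain).
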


\begin{proof}
Let us first show the ``only if'' direction of the lemma.
Suppose $h$ is a homomorphism from $\Amc$ to $(\mathbb{Z}, <)$. 
The presence of a cycle $(a_0 ,\ldots a_{k-1})$ in $\Amc$ ($k \geq 1$, $(a_i, a_{i+1 \text{ mod } k}) \in E$ for $0 \leq i \leq k-1$)
would imply the existence of integers $z_0 ,\ldots z_{k-1}$ with $z_i < z_{i+1 \text{ mod } k}$ for $0 \leq i \leq k-1$
(where $z_i = h(a_i)$), which is not possible. Hence, (H1) holds.

Suppose now that $a, b \in A$ are such that for every $n$ there is a path of length at least $n$ 
from $a$ to $b$. If $d = h(b) - h(a)$, we can find a path $(a_0, a_1 \ldots, a_k)$ with
$a_0=a$, $a_k = b$ and $k > d$. Since $h$ is a homomorphism, this path
will be mapped to an increasing sequence of integers $h(a)=h(a_0) < h(a_1)
< \cdots < h(a_k)=h(b)$. But this contradicts $h(b) - h(a) = d < k$. Hence, (H2) holds.

For the ``if'' direction of the lemma assume that $\Amc$ is acyclic (property (H1)) and that
(H2) holds. Fix an enumeration  $a_0, a_1, a_2,
\ldots$ of the countable set $A$. For $n \geq 0$ let
$S_n \defeq\{ a\in A\mid \exists i,j \leq n : (a_i,a), (a,a_j) \in E^* \}$, which
has the following properties:
\begin{enumerate}[(P1)]
\item \label{P1}  $S_n$ is convex  w.r.t.~the partial order $E^*$: 
If $a,c \in S_n$  and $(a,b), (b,c) \in E^*$, then $b \in S_n$.
\item \label{P2}  For $a\in A\setminus S_n$ all paths between $a$ and $S_n$ are 
  ``one-way'', i.e.,  there do not exist $b,c \in S_n$ such that $(b,a), (a,c) \in E^*$.
  This follows from (P\ref{P1}).
\item \label{P4} For all $a\in A \setminus S_n$ there exists a bound
  $c \in\mathbb{N}$ such that all paths between $a$ and $S_n$ have length at most $c$.  Let $c_n^a \in \mathbb{N}$
  be the smallest such bound (hence, we have $c_n^a=0$ if there do not exist paths between $a$ and $S_n$). 
\end{enumerate}
To see (P\ref{P4}), assume that there only exist paths from $S_n$ to $a$ but not the other way
round (see  (P\ref{P2})); the other case is symmetric. If there is no bound on the length of paths from
$S_n$ to $a$, then by definition of $S_n$, there is no bound on the length of paths from
$\{a_0,\ldots,a_n\}$ to $a$. By the pigeon principle, there exists $0 \leq i \leq n$ such that 
there is no bound on the length of paths from
$a_i$ to $a$.  But this contradicts property (H2).

We build our homomorphism $h$ inductively. For every $n \geq 0$ 
we define functions $h_n:S_n \to \mathbb{Z}$ such that
the following invariants hold for all $n \geq 0$.
\begin{enumerate}[(\text{I}1)]
\item\label{P0}  If $n > 0$ then
$h_{n}(a)=h_{n-1}(a)$ for all $a \in S_{n-1}$
\item \label{P3} $h_n(S_n)$ is bounded in $\mathbb{Z}$, i.e., there exist $z_1, z_2 \in \mathbb{Z}$
such that $h_n(S_n) \subseteq [z_1,z_2]$.
\item \label{P5} $h_n$ is a homomorphism from the subgraph $(S_n,  E \cap (S_n \times S_n))$ to $(\mathbb{Z}, <)$.
\end{enumerate}
For $n=0$ we have $S_0 = \{a_0\}$. We set $h_0(a_0) = 0$ (any other
integer  would 
be also fine). Properties (I\ref{P0})--(I\ref{P5}) are easily verified.
For $n > 0$, there are four cases. 

\smallskip
\noindent
{\em Case 1.} $a_n\in \SNN$, thus $S_n=\SNN$.
We set $h_n = h_{n-1}$. Clearly, (I\ref{P0})--(I\ref{P5}) hold for $n$.

\smallskip
\noindent
{\em Case 2.} 
$a_n\notin \SNN$ and there is no path from $a_n$ to $\SNN$ or
vice versa. We set 
$h_n(a_n) \defeq 0 $ (and $S_n=\SNN \cup\{a_n\}$).
In this case (I\ref{P0})--(I\ref{P5}) follow easily from the
induction hypothesis. 

\smallskip
\noindent
{\em Case 3.}  
$a_n\notin \SNN$ and 
there exist paths from $a_n$ to $\SNN$. Then, by 
(P\ref{P2}) there do not exist paths from $\SNN$ to $a_n$.
Hence, we have 
$$
S_n = \SNN \cup \{ a \in A \mid \exists b \in \SNN : (a_n,a), (a,b) \in E^* \}. 
$$
We have to assign a value $h_n(a)$ for all 
$a \in A \setminus \SNN$ that lie along a path from $a_n$ to $\SNN$.
By (I\ref{P3}) there exist $z_1, z_2  \in \mathbb{Z}$ with $h_{n-1}(\SNN) \subseteq [z_1,z_2]$. 
Recall the definition of $c_{n-1}^a$ from (P\ref{P4}). 
For all $a \in A \setminus \SNN$ that lie on a path from $a_n$ to $\SNN$, 
we set $h_n(a) := z_1 - c_{n-1}^a$.  Since there are paths from $a$ to $\SNN$,
we have $c_{n-1}^a > 0$. Hence, for all $a \in
S_n \setminus \SNN$, $h_n(a) < z_1$. 
Let us check that $h_n: S_n \to \mathbb{Z}$ satisfy (I\ref{P0})-- (I\ref{P5}):
Invariant (I\ref{P0}) holds by definition of $h_n$. For (I\ref{P3}) note that
$h_n(S_n) \subseteq [z_1 - c_{n-1}^{a_n},z_2]$.

 It remains to show  (I\ref{P5}), i.e.,  that $h_n$ is a homomorphism 
 from $(S_n,  E \cap (S_n \times S_n))$ to $(\mathbb{Z}, <)$.
 Hence, we have to show that  $h(b_1) < h(b_2)$ 
 for all  $(b_1,b_2) \in E \cap (S_n \times S_n)$.
  \begin{itemize}
  \item If $b_1,b_2 \in \SNN$, then  
    $h_n(b_1)= h_{n-1}(b_1)  <  h_{n-1}(b_2) = h_n(b_2)$ by induction
    hypothesis. 
  \item 
    If $b_1 \in S_n\setminus \SNN$ and $b_2 \in \SNN$, we know that 
    $h_n(b_2)=h_{n-1}(b_2) \geq z_1$ while $h_n(b_1)<z_1$
    by construction. This directly implies $h_n(b_1) < h_n(b_2)$. 
  \item 
    If $b_2 \in S_n\setminus \SNN$ and $b_1 \in \SNN$, then 
    $(b_1,b_2)\in E$ and by  assumption $b_2$ must be on 
    a path from $a_n$ to $\SNN$ which contradicts (P\ref{P2}). 
  \item 
    If both $b_1$ and $b_2$ belong to $S_n \setminus \SNN$
    then $h_n(b_i) \defeq z_1 - c_{n-1}^{b_i}$
    for $i\in \{1,2\}$  
    Since $(b_1,b_2) \in E$, we have $c_{n-1}^{b_1} > c_{n-1}^{b_2}$. 
    This implies $h_n(b_1) < h_n (b_2)$. 
    \end{itemize}
{\em Case 4.}  
$a_n\notin \SNN$ and 
there exist paths from $\SNN$ to $a_n$. 
For all $a \in S_n \setminus \SNN =  \{  a \in A \setminus \SNN \mid a \text{ belongs to a
  path from } \SNN \text{ to } a_n \}$, set
  $h_n(a) = z_2 + c_{n-1}^a$. The rest of the argument goes analogously to Case~3.
  
  This concludes the construction of  $h_n$.
 By (I\ref{P0}) limit function 
 $h = \bigcup_{i\in \mathbb{N}} h_i$ exists. By (I\ref{P5}) and $A
 =\bigcup_{i \in \mathbb{N}} S_i$, 
 $h$ is a homomorphism from $\Amc$ to $(\mathbb{Z}, <)$.
\qed
\end{proof}

\begin{proposition}\label{first}
 $(\mathbb{Z}, <)$ has the property
\EHomDef$(\WMSOB)$.
\end{proposition}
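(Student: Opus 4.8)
The plan is to read off the required $\WMSOB$-sentence directly from the graph-theoretic characterization in Lemma~\ref{lemma1} together with the auxiliary formulas collected in Example~\ref{exa:WMSOBFormulas}. Since the signature of $(\mathbb{Z},<)$ consists of the single symbol $\{<\}$, it has only the two finite subsignatures $\emptyset$ and $\{<\}$; for $\sigma=\emptyset$ every map is a homomorphism, so we may output a tautology such as $\forall x\,(x=x)$, and the only real case is $\sigma=\{<\}$. Moreover the sentence we produce will not depend on the input structure, so computability of the map $\sigma\mapsto\varphi_\sigma$ is immediate, and the whole content of the proposition lies in exhibiting the sentence and checking its correctness.

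Concretely, let $\chi(x,y)\defeq(x<y)$, so that in a countable $\{<\}$-structure $\Amc=(A,I)$ the binary relation $E$ defined by $\chi$ equals $I(<)$. I would take
\[
  \varphi_{\{<\}}\ \defeq\ \neg\,\ExistsCycle_{\chi}\ \wedge\ \forall x\,\forall y\ \BoundedPaths_{\chi}(x,y),
\]
where $\ExistsCycle_\chi$ and $\BoundedPaths_\chi$ are obtained by instantiating the schemas of Example~\ref{exa:WMSOBFormulas} with $\chi$. This is a $\WMSOB$-sentence: $\ExistsCycle_\chi$ is a $\WMSO$-sentence, $\BoundedPaths_\chi(x,y)=\Bound Z\colon\Path_\chi(x,y,Z)$ is a $\WMSOB$-formula, and $\WMSOB$ is closed under the Boolean operations and first-order quantification used here. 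The correctness argument then splits according to whether $(A,E)$ is acyclic. If $(A,E)$ contains a cycle, then $\Amc\models\ExistsCycle_\chi$, hence $\Amc\not\models\varphi_{\{<\}}$, and also $\Amc\not\homom(\mathbb{Z},<)$ by condition~(H1) of Lemma~\ref{lemma1}; so both sides of the desired equivalence are false. If $(A,E)$ is acyclic, then $\Amc\models\neg\ExistsCycle_\chi$, i.e.\ condition~(H1) holds, and by the final part of Example~\ref{exa:WMSOBFormulas} (stated precisely for acyclic graphs) we have $\Amc\models\BoundedPaths_\chi(a,b)$ iff there is a bound on the length of all $E$-paths from $a$ to $b$; hence $\Amc\models\forall x\forall y\,\BoundedPaths_\chi(x,y)$ iff condition~(H2) holds. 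Thus $\Amc\models\varphi_{\{<\}}$ iff both (H1) and (H2) hold, which by Lemma~\ref{lemma1} is equivalent to $\Amc\homom(\mathbb{Z},<)$, as desired.

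The only point demanding a little care is that the semantic guarantee for $\BoundedPaths_\chi$ supplied by Example~\ref{exa:WMSOBFormulas} is conditional on acyclicity, so one must dispose of the cyclic case separately (as above) rather than reasoning about the second conjunct in isolation. Beyond that I expect no real obstacle: the genuine work — the back-and-forth construction realizing the homomorphism, and the observation that condition~(H2) is exactly a boundedness statement about the finite ``path sets'' $Z$ that the $\Bound$ quantifier can capture — has already been carried out in Lemma~\ref{lemma1} and Example~\ref{exa:WMSOBFormulas}, so this proof merely assembles those pieces.
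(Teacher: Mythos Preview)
Your proposal is correct and follows essentially the same route as the paper: the paper also sets $\varphi_{\{<\}} = \neg\ExistsCycle_< \land \forall x\,\forall y\,\BoundedPaths_<(x,y)$ and appeals to Lemma~\ref{lemma1} and Example~\ref{exa:WMSOBFormulas}. Your treatment is in fact slightly more careful, explicitly handling the empty subsignature and isolating the cyclic case before invoking the (acyclicity-conditioned) semantics of $\BoundedPaths$.
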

\begin{proof}
We  translate the 
conditions (H1) and (H2) from Lemma~\ref{lemma1} into  $\WMSOB$.
Cycles are excluded
by the sentence $\neg \ExistsCycle_<$ (Example~\ref{exa:WMSOBFormulas}). Moreover,   
for an acyclic $\{<\}$-structure $\Amc$ we have
$\Amc \models \forall x \forall y \; \BoundedPaths_<(x,y)$ (see also Example~\ref{exa:WMSOBFormulas}) if and only if 
for all $a,b\in A$ there is a bound $b \in \mathbb{N}$ 
on the length of  paths from $a$ to $b$. 
Thus, $\Amc \homom (\mathbb{Z}, <)$ if and only if
$\Amc \models \neg \ExistsCycle_< \land  \forall x \forall y\  \BoundedPaths_<(x,y)$.
 \qed
\end{proof}
Next, we extend Prop.~\ref{first} to the negation-closed structure $(\mathbb{Z},<,=)$.  
To do so let us fix a countable $\{<,=\}$-structure 
$\Amc = (A,I)$. Note that $I(=)$ is not 
necessarily the identity relation on $A$. Let $\sim \; = (I(=) \cup I(=)^{-1})^*$ be the smallest
equivalence relation on $A$ that contains $I(=)$.
Since $\sim$ is the reflexive and transitive closure of the first-order
definable relation $I(=) \cup I(=)^{-1}$, we can construct a $\WMSO$-formula 
$\tilde{\varphi}(x,y)$  (using the $\reach$-construction from Ex.~\ref{exa:WMSOBFormulas})
that defines $\sim$.
Let 
\begin{align} \label{def-E_<}
&E_{<} = \; \sim \circ \, I(<) \, \circ \sim \text{i.e., the relation defined
by the formula }\\
&\label{formula-phi_<}
\varphi_<(x,y) = \exists u\; \exists v \; (\tilde{\varphi}(x,u) \land u<v \land \tilde{\varphi}(v,y)).
\end{align}
With $\tilde{\Amc} = (\tilde{A}, \tilde{I})$ we denote the $\sim$-quotient of $\Amc$: It is a $\{<\}$-structure,
its domain is the set $\tilde{A} = \{ [a]_\sim \mid a \in A\}$  of all $\sim$-equivalence classes.
and for two equivalence classes $[a]_\sim$ and $[b]_\sim$ we have 
$([a]_\sim, [b]_\sim) \in \tilde{I}(<)$ iff there are $a' \sim a$ and $b'\sim b$ such that 
$(a',b') \in I(<)$. Let us write $[a]$ for $[a]_\sim$.
We have:

\begin{lemma} \label{lemma-quotient}
$\Amc \homom (\mathbb{Z},<,=)$ if and only
$\tilde{\Amc} \homom (\mathbb{Z},<)$. 
\end{lemma}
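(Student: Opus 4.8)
The plan is to use the fact that in the target structure $(\mathbb{Z},<,=)$ the symbol $=$ is interpreted by genuine equality of integers. This forces any homomorphism $h\colon\Amc\to(\mathbb{Z},<,=)$ to be constant on $\sim$-classes, so that $h$ factors through the quotient map $a\mapsto[a]$; conversely, composing any homomorphism $\tilde{\Amc}\to(\mathbb{Z},<)$ with that quotient map yields a homomorphism $\Amc\to(\mathbb{Z},<,=)$. So the whole lemma should come out of checking that the quotient construction and the definition of $\tilde{I}(<)$ match up correctly, which is exactly what they are designed to do.

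For the ``only if'' direction I would argue as follows. Let $h$ be a homomorphism from $\Amc$ to $(\mathbb{Z},<,=)$. Since $h$ preserves $I(=)$ and $=$ denotes actual equality on $\mathbb{Z}$, we get $h(a)=h(b)$ whenever $(a,b)\in I(=)$; as $\sim$ is the equivalence relation generated by $I(=)$, the map $h$ is constant on every $\sim$-class, so $\tilde{h}([a]):=h(a)$ is well defined on $\tilde{A}$. To see that $\tilde{h}$ is a homomorphism from $\tilde{\Amc}$ to $(\mathbb{Z},<)$, take $([a],[b])\in\tilde{I}(<)$: by definition of $\tilde{I}(<)$ there are $a'\sim a$ and $b'\sim b$ with $(a',b')\in I(<)$, hence $\tilde{h}([a])=h(a')<h(b')=\tilde{h}([b])$.

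For the ``if'' direction, given a homomorphism $\tilde{h}$ from $\tilde{\Amc}$ to $(\mathbb{Z},<)$ I would set $h(a):=\tilde{h}([a])$. If $(a,b)\in I(=)$ then $[a]=[b]$, so $h(a)=h(b)$; and if $(a,b)\in I(<)$ then $([a],[b])\in\tilde{I}(<)$ (witnessed by $a$ and $b$ themselves), so $h(a)<h(b)$. Thus $h$ is a homomorphism from $\Amc$ to $(\mathbb{Z},<,=)$. I do not expect a real obstacle here; the only point to watch is the degenerate case of a $<$-edge inside a single $\sim$-class, which produces a self-loop in $\tilde{\Amc}$. On the right-hand side such a self-loop is ruled out by the acyclicity condition (H1) of Lemma~\ref{lemma1} applied to $\tilde{\Amc}$, and on the left-hand side it is equally incompatible with a homomorphism into $(\mathbb{Z},<,=)$, since the two endpoints of that edge would have to receive the same integer value. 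So both sides of the equivalence behave the same way in this case, and nothing special needs to be done.
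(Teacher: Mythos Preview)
Your proof is correct and matches the paper's argument essentially line for line: factor a homomorphism $h\colon\Amc\to(\mathbb{Z},<,=)$ through the quotient to get $\tilde h$, and conversely precompose a homomorphism $\tilde h\colon\tilde\Amc\to(\mathbb{Z},<)$ with the quotient map. Your closing remark about a $<$-edge inside a single $\sim$-class is a fine sanity check but not needed as a separate case---the homomorphism hypothesis on either side already rules it out, and the paper's proof simply omits it.
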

\begin{proof}
Suppose $h:\Amc \rightarrow (\mathbb{Z},<,=)$ is a homomorphism. 
Since $a \sim b$ implies $h(a)=h(b)$, we can define a mapping $h' : \tilde{A}
\to \mathbb{Z}$ by  $h'([a])=h(a)$ for all $[a] \in \tilde{A}$. 
Now let $a,b\in A$ such that 
$([a],[b]) \in \tilde{I}(<)$. Then there are $a' \sim a$ and $b'\sim b$
such that $(a',b') \in I(<)$.
Therefore $h'([a])=h(a') < h(b')=
h'([b])$. Hence $h'$ is a homomorphism.
 
For the other direction, suppose that 
$h: \tilde{\Amc} \to (\mathbb{Z},<)$ is a homomorphism.
We define $h': A \rightarrow \mathbb{Z}$ by
$h'(a) = h([a])$ for all $a \in A$. If $a,b \in A$ 
are such that $(a,b) \in I(=)$ then $[a]=[b]$ and therefore
$h'(a)=h'(b)$. If $a,b \in A$ are such that $(a,b) \in I(<)$ then
$([a],[b]) \in \tilde{I}(<)$, whence $h'(a) = h([a]) < h([b])=h'(b)$.
Thus,  $h'$ is a homomorphism.
\qed
\end{proof}

In the next lemma, we translate the conditions for the 
existence of a homomorphism from $\tilde{\Amc}$
to $(\mathbb{Z},<)$ into conditions in terms of
$\Amc$.

\begin{lemma}  \label{lemma-quotient-2}
The following conditions are
equivalent: 
\begin{itemize}
\item $\tilde{\Amc}$ satisfies the conditions (H1) and (H2) from
  Lemma~\ref{lemma1}.
 \item The graph $(A,E_<)$ is acyclic  and for all $a,b \in A$ there
   is a bound $c \in \mathbb{N}$ such that  all $E_<$-paths 
   from $a$ to $b$ have length at most $c$.
 \end{itemize}
\end{lemma}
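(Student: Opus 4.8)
The plan is to reduce the whole statement to one elementary correspondence between the edge relation $E_<$ on $A$ and the edge relation $\tilde I(<)$ on $\tilde A$, and then transport this correspondence to paths and cycles.

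First I would record the basic observation that for all $a,b \in A$,
$$(a,b) \in E_< \quad\Longleftrightarrow\quad ([a],[b]) \in \tilde I(<) .$$
This is just unwinding the definitions: by \eqref{def-E_<}, $(a,b) \in E_<$ means there are $u \sim a$ and $v \sim b$ with $(u,v) \in I(<)$, and, since $u \sim a$ iff $[u]=[a]$, this is precisely the defining condition of $([a],[b]) \in \tilde I(<)$.

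Next I would push this through paths in both directions. Any $E_<$-path $a_0 \to a_1 \to \cdots \to a_n$ in $(A,E_<)$ projects, by the observation, to an $\tilde I(<)$-path $[a_0] \to [a_1] \to \cdots \to [a_n]$ of the same length in $\tilde\Amc$; in particular a cycle in $(A,E_<)$ projects to a cycle in $\tilde\Amc$. Conversely, given an $\tilde I(<)$-path $[c_0] \to [c_1] \to \cdots \to [c_n]$ in $\tilde\Amc$, any choice of representatives $a_i \in [c_i]$ yields, again by the observation, $(a_{i-1},a_i) \in E_<$ for every $i$, so $a_0 \to a_1 \to \cdots \to a_n$ is an $E_<$-path of the same length with $[a_i] = [c_i]$ throughout; when the given path is a cycle (so $[c_n]=[c_0]$) we may in addition insist $a_n = a_0$, obtaining a genuine cycle in $(A,E_<)$. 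Since every element of $\tilde A$ equals $[a]$ for some $a \in A$, it follows that: (i) $(A,E_<)$ contains a cycle iff $\tilde\Amc$ contains a cycle; and (ii) for all $a,b \in A$, the set of lengths of $E_<$-paths from $a$ to $b$ equals the set of lengths of $\tilde I(<)$-paths from $[a]$ to $[b]$, hence one of these sets is bounded iff the other is.

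Finally I would assemble the equivalence. By (i), condition (H1) for $\tilde\Amc$ (acyclicity) is equivalent to acyclicity of $(A,E_<)$. Condition (H2) for $\tilde\Amc$ asserts that for all $\tilde a,\tilde b \in \tilde A$ the lengths of $\tilde I(<)$-paths from $\tilde a$ to $\tilde b$ are bounded; writing $\tilde a = [a]$, $\tilde b = [b]$ and applying (ii), this is equivalent to saying that for all $a,b \in A$ the lengths of $E_<$-paths from $a$ to $b$ are bounded. Conjoining the two equivalences gives exactly the second item of the lemma, which completes the proof. The argument is essentially bookkeeping; the only step deserving a moment of care is the lifting of a cycle, where one must use that the representatives may be chosen freely, so that the first and last class of the cycle can be lifted to the same element and the lift is a cycle rather than merely a path with $\sim$-equivalent endpoints.
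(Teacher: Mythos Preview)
Your proposal is correct and follows essentially the same approach as the paper: both arguments rest on the single observation that $(a,b)\in E_<$ iff $([a],[b])\in\tilde I(<)$, from which the correspondence of paths (and hence of cycles and of path-length bounds) is immediate. Your write-up is more detailed than the paper's, and in particular you are right to flag the one point that deserves care---namely that when lifting a cycle from $\tilde\Amc$ one must choose the first and last representative to coincide---which the paper leaves implicit.
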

\begin{proof}
  The proof is straightforward once we notice that any path in
$\tilde{\Amc}$ corresponds to a path in $(A,E_<)$. 
More precisely,  $([a_0], \ldots, [a_k])$ is a path in
the graph $\tilde{\Amc}$ (i.e.,   $([a_i],[a_{i+1}]) \in \tilde{I}(<)$  for all 
$0\leq i < k$) if and only if $(a_0, \ldots, a_k)$ is a path in $(A,E_<)$.
It follows directly, that there is a cycle in $(A,E_<)$
if and only if there is a cycle in $\tilde{\Amc}$.
Moreover,  for all  $a,b \in A$, there is a bound $c \in \mathbb{N}$ 
on the length of $E_<$-paths from $a$ to $b$ if and only if 
there is a bound on the length of paths  between $[a]$ and
$[b]$ in $\tilde{\Amc}$.
\qed
\end{proof}

\begin{proposition}\label{equal}
$(\mathbb{Z}, <, =)$ has the property \EHomDef$(\WMSOB)$.
\end{proposition}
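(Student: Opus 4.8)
The plan is to assemble the three preceding lemmas into a single $\WMSOB$-sentence over the signature $\{<,=\}$. By Lemma~\ref{lemma-quotient}, a countable $\{<,=\}$-structure $\Amc=(A,I)$ satisfies $\Amc\homom(\mathbb{Z},<,=)$ iff $\tilde{\Amc}\homom(\mathbb{Z},<)$; by Lemma~\ref{lemma1} the latter holds iff $\tilde{\Amc}$ satisfies (H\ref{H1}) and (H\ref{H2}); and by Lemma~\ref{lemma-quotient-2} this is in turn equivalent to the graph $(A,E_<)$ being acyclic and admitting, for every pair $a,b\in A$, a bound on the lengths of $E_<$-paths from $a$ to $b$. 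So it suffices to express this last, purely $\Amc$-internal condition by one $\WMSOB$-sentence. Since $\{<,=\}$ has only finitely many subsignatures, exhibiting such a sentence also trivially yields the computable map required by \EHomDef.

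Next I would invoke Example~\ref{exa:WMSOBFormulas}. Recall from~\eqref{formula-phi_<} that $E_<$ is uniformly defined on every $\{<,=\}$-structure by the formula $\varphi_<(x,y)$, which is a $\WMSO$-formula with two free first-order variables (it only uses the $\reach$-construction through $\tilde{\varphi}$). Feeding $\varphi_<$ into the templates of Example~\ref{exa:WMSOBFormulas}, the sentence $\neg\ExistsCycle_{\varphi_<}$ says that $(A,E_<)$ is acyclic, and $\forall x\,\forall y\ \BoundedPaths_{\varphi_<}(x,y)$ says --- on any acyclic graph --- that every pair of nodes has $E_<$-paths of bounded length. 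I would therefore take
\[
  \psi_{\{<,=\}} \;=\; \neg\ExistsCycle_{\varphi_<} \;\wedge\; \forall x\,\forall y\ \BoundedPaths_{\varphi_<}(x,y),
\]
which is a $\WMSOB$-sentence, and claim that for every countable $\{<,=\}$-structure $\Amc$ we have $\Amc\models\psi_{\{<,=\}}$ iff $\Amc\homom(\mathbb{Z},<,=)$.

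For correctness I would argue as follows: if $(A,E_<)$ has a cycle then the first conjunct fails, and the chain of equivalences above already shows $\Amc\not\homom(\mathbb{Z},<,=)$; if $(A,E_<)$ is acyclic then, by the semantics of $\BoundedPaths$ established in Example~\ref{exa:WMSOBFormulas}, the second conjunct holds iff every pair $a,b$ has bounded $E_<$-paths, and then Lemmas~\ref{lemma-quotient-2}, \ref{lemma1} and~\ref{lemma-quotient} yield $\Amc\homom(\mathbb{Z},<,=)$. There is essentially no hard step left here: all the substance sits in Lemmas~\ref{lemma1}--\ref{lemma-quotient-2}. The only subtlety worth flagging is that $\BoundedPaths_{\varphi_<}$ faithfully captures path-boundedness only over acyclic graphs, which is precisely why it is conjoined with the acyclicity clause --- the bounding quantifier is thus applied only in the situation where $\Path_{\varphi_<}$ genuinely describes $E_<$-paths.
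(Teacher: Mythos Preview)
Your proposal is correct and follows essentially the same route as the paper: reduce via Lemmas~\ref{lemma-quotient}, \ref{lemma1} and~\ref{lemma-quotient-2} to the acyclicity and bounded-path conditions on $(A,E_<)$, then plug the $\WMSO$-definable relation $\varphi_<$ into the templates $\neg\ExistsCycle$ and $\BoundedPaths$ from Example~\ref{exa:WMSOBFormulas} (equivalently, reuse the formula from Proposition~\ref{first} with $<$ replaced by $\varphi_<$). Your explicit remark that $\BoundedPaths_{\varphi_<}$ only faithfully encodes path-boundedness under the acyclicity hypothesis is a welcome clarification that the paper leaves implicit.
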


\begin{proof}
Our aim is to find a $(\WMSOB)$-formula $\phi$ such
that for all  $\{<,=\}$-structures $\Amc$, $\Amc \models \phi$ if and only if $\Amc \homom (\mathbb{Z}, <, =)$.
Let $\Amc = (A,I)$ be a $\{<,=\}$-structure.  We use the notations 
introduced before Lemma~\ref{lemma-quotient}.
By Lemma~\ref{lemma-quotient} and \ref{lemma-quotient-2} we have to construct
a $(\WMSOB)$-formula expressing that $\Amc$ has no $E_<$-cycles and for all $a,b \in A$ there
is a bound $c \in \mathbb{N}$ on the length of  $E_<$-paths 
from $a$ to $b$. For this, we can use the formula constructed in the proof 
of Prop.~\ref{first} with $<$ replaced by the formula $\varphi_<$ from \eqref{formula-phi_<}.
\qed
\end{proof}
We will later also need the following variants of Prop.~\ref{equal}:

\begin{proposition}\label{equal-N}
$(\mathbb{N}, <, =)$ and $(\mathbb{Z}\setminus\mathbb{N}, <, =)$
have  property \EHomDef$(\WMSOB)$.
\end{proposition}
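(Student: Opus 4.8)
The plan is to mimic the three-step argument behind Propositions~\ref{first} and~\ref{equal}. First I would give a combinatorial characterization of the countable $\{<\}$-structures that map homomorphically to $(\mathbb{N},<)$; then I would lift this to the $\{<,=\}$-case by passing to the $\sim$-quotient, exactly as in Lemmas~\ref{lemma-quotient} and~\ref{lemma-quotient-2}; and finally I would translate the resulting condition into a $\WMSOB$-sentence built from the $\ExistsCycle$- and $\Path$-gadgets of Example~\ref{exa:WMSOBFormulas}. The structure $(\mathbb{Z}\setminus\mathbb{N},<,=)$ would then be handled by a dualization.

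For the first step I would prove the following analogue of Lemma~\ref{lemma1}: a countable $\{<\}$-structure $\Amc=(A,E)$ (with $E=I(<)$) satisfies $\Amc\homom(\mathbb{N},<)$ if and only if $\Amc$ is acyclic and for every $a\in A$ there is a bound $c\in\mathbb{N}$ on the lengths of all $E$-paths \emph{ending} in $a$. The ``only if'' direction is immediate: a path of length $\ell$ ending in $a$ is mapped by a homomorphism $h$ to a strictly increasing chain of $\ell+1$ naturals with largest element $h(a)$, so $\ell\le h(a)$, and cycles are excluded as in Lemma~\ref{lemma1}. For ``if'' I would simply put $h(a):=\max\{\ell\mid\text{there is an }E\text{-path of length }\ell\text{ ending in }a\}$, which is a well-defined natural number precisely because $\Amc$ is acyclic and in-paths are bounded; since $(a,b)\in E$ extends every path ending in $a$ by one step, $h(b)\ge h(a)+1$, so $h$ is a homomorphism into $(\mathbb{N},<)$. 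The only genuinely new point compared to Lemma~\ref{lemma1} is that this in-path boundedness is strictly stronger than condition (H2) there: it additionally rules out infinite $E$-descending chains, which is exactly the obstruction separating $\mathbb{N}$ from $\mathbb{Z}$ (e.g.\ $(\mathbb{Z},<)$ maps to itself but not to $(\mathbb{N},<)$).

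For the second step I would reuse the quotient machinery essentially verbatim. Using the notation $\sim$, $\tilde\Amc$, $E_<$, $\varphi_<$, $\tilde\varphi$ from before Lemma~\ref{lemma-quotient}, the proof of Lemma~\ref{lemma-quotient} goes through unchanged to give $\Amc\homom(\mathbb{N},<,=)$ iff $\tilde\Amc\homom(\mathbb{N},<)$, and the path correspondence of Lemma~\ref{lemma-quotient-2} ($([a_0],\dots,[a_k])$ is a path in $\tilde\Amc$ iff $(a_0,\dots,a_k)$ is an $E_<$-path in $A$) lets me translate the two conditions on $\tilde\Amc$ into conditions on $(A,E_<)$. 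The one spot that needs a little care --- and the only real obstacle in the whole proof --- is that in-path boundedness must transfer correctly through the quotient: since $E_< =\; \sim \circ I(<) \circ \sim$ is saturated under $\sim$ on both sides, $(a,b)\in E_<$ together with $b\sim b'$ implies $(a,b')\in E_<$, so any $E_<$-path ending at some $b\sim a$ can be rerouted, at the same length, into one ending exactly at $a$. Hence the set of lengths of $\tilde\Amc$-paths ending in $[a]$ coincides with the set of lengths of $E_<$-paths ending at $a$, and I obtain: $\Amc\homom(\mathbb{N},<,=)$ iff $(A,E_<)$ is acyclic and for every $a\in A$ the lengths of $E_<$-paths ending at $a$ are bounded.

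Finally I would write down the $\WMSOB$-sentence
\[
\neg\ExistsCycle_{\varphi_<}\ \wedge\ \forall y\;\Bound Z\colon\exists x\;\Path_{\varphi_<}(x,y,Z),
\]
with $\varphi_<$ as in \eqref{formula-phi_<}. On a graph with an $E_<$-cycle the first conjunct already fails; on an acyclic $(A,E_<)$, Example~\ref{exa:WMSOBFormulas} tells me that the finite sets $Z$ with $\exists x\,\Path_{\varphi_<}(x,y,Z)$ are exactly the vertex sets of $E_<$-paths ending at $y$, so the $\Bound$-quantifier expresses precisely that these vertex sets --- equivalently, these path lengths --- are bounded. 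Thus the displayed sentence holds in $\Amc$ iff $\Amc\homom(\mathbb{N},<,=)$, and since it is a fixed (hence trivially computable) sentence over the signature $\{<,=\}$, this gives \EHomDef$(\WMSOB)$ for $(\mathbb{N},<,=)$. For $(\mathbb{Z}\setminus\mathbb{N},<,=)$ I would note that $n\mapsto-(n+1)$ is an isomorphism from $(\mathbb{N},>)$ onto $(\mathbb{Z}\setminus\mathbb{N},<)$, so $\Amc\homom(\mathbb{Z}\setminus\mathbb{N},<,=)$ holds iff the structure obtained from $\Amc$ by reversing $I(<)$ admits a homomorphism to $(\mathbb{N},<,=)$; reversing $I(<)$ replaces $E_<$ by $E_<^{-1}$ (as $\sim$ is symmetric) and leaves the cycle condition unchanged, so the above characterization turns into ``$(A,E_<)$ is acyclic and, for every $a$, the lengths of $E_<$-paths \emph{starting} at $a$ are bounded'', which is captured by the dual sentence
\[
\neg\ExistsCycle_{\varphi_<}\ \wedge\ \forall x\;\Bound Z\colon\exists y\;\Path_{\varphi_<}(x,y,Z).
\]
Both sentences being computable, both structures have property \EHomDef$(\WMSOB)$. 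I expect nothing harder here than the quotient-transfer bookkeeping noted above; everything else is a routine adaptation of Lemma~\ref{lemma1} and Propositions~\ref{first} and~\ref{equal}.
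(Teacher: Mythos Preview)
Your proposal is correct and follows essentially the same route as the paper: the same combinatorial characterization (no $E_<$-cycles and bounded in-paths to every node), the same quotient reduction, and literally the same $\WMSOB$-sentence $\neg\ExistsCycle_{\varphi_<}\wedge\forall y\,\Bound Z\,\exists x\,\Path_{\varphi_<}(x,y,Z)$, with the $(\mathbb{Z}\setminus\mathbb{N},<,=)$ case handled by dualization. Your direct definition $h(a)=\max\{\ell\mid\text{some }E\text{-path of length }\ell\text{ ends at }a\}$ is in fact a cleaner witness for the ``if'' direction than adapting the inductive construction of Lemma~\ref{lemma1}, but otherwise the arguments coincide.
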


\begin{proof}
We prove the proposition only for $(\mathbb{N}, <, =)$, the statement for $(\mathbb{Z}\setminus\mathbb{N}, <, =)$
can be shown analogously. Let $\Amc=(A,I)$ be a $\{<,=\}$-structure. 
Define the relation $E_<$ as in \eqref{def-E_<}.
By adapting our proof for Prop.~\ref{equal}, one can show that 
$\Amc \homom (\mathbb{N}, <, =)$
if and only if $\Amc$ does not contain
$E_<$-cycles and for each $a\in A$ there is a bound $c$ such that any
$E_<$-path from some node of $A$ to $a$ has length at most $c$.
This is (\WMSOB)-expressible by the sentence
$\neg \ExistsCycle_{\varphi_<} \land  \forall y  \; \Bound Z \; \exists x \; \Path_{\varphi_<}(x,y,Z)$.
\qed
\end{proof}
In the rest of this section, we prove Prop.~\ref{equal} for the full structure $\EZmc$ from \eqref{zmc^m}, which
is defined over the  infinite signature $\Smc = \{ <, = \} \cup \{ =_c \mid c \in \mathbb{Z}\}
\cup \{ \equiv_{a,b} \mid 0 \leq a < b\}$.
By the definition of 
\EHomDef$(\Bool(\MSO,\WMSOB))$ we have to compute from a finite subsignature 
$\sigma \subseteq \Smc$ a $\Bool(\MSO,\WMSOB)$-sentence $\varphi_\sigma$ that
defines the existence of a homomorphism to $\EZmc$ when interpreted over a $\sigma$-structure 
$\Amc$. Hence, let us fix a finite subsignature $\sigma \subseteq
\Smc$. We can assume that
%$\sigma  = \{ <, = \}  \cup \{ =_c \mid c \in C\}
%\cup \{ \equiv_{a,b} \mid b \in D, 0 \leq a < b\}$
\begin{equation*} 
\sigma  = \{ <, = \}  \cup \{ =_c \mid c \in C\}
\cup \{ \equiv_{a,b} \mid b \in D, 0 \leq a < b\}
\end{equation*}
for finite non-empty sets $C \subseteq \mathbb{Z}$ and $D \subseteq \mathbb{N}\setminus\{0,1\}$. 
Define $m=\min(C)$ and $M=\max(C)$.
W.l.o.g. we can assume that $m \leq 0$ and $M \geq 0$.
Let $\Amc =(A,I)$ be a countable $\sigma$-structure.
In order to not confuse the relation $I(=)$ with the identity relation on $A$, we 
write in the following $E_=(x,y)$ for the atomic formula expressing that $(x,y)$ 
belongs to the relation $I(=)$. Similarly, we write $E_c(x)$ for the atomic formula expressing that $x \in I(=_c)$. Instead of $\equiv_{a,b}\!\!(x)$ we write $x \equiv a$ mod $b$.

Define $x \leq y \Leftrightarrow (x<y \lor E_=(x,y) \lor E_=(y,x))$ and the  \MSO-formula 
$$
  \varphi_{\text{bounded}}(x) =  \exists y \; \exists z \big( \bigvee_{c \in C} E_c(y)
  \wedge \bigvee_{c \in C} E_c(z)  \wedge \reach_{\leq}(y,x)
  \wedge \reach_{\leq}(x,z) \big).
 $$
Let $B=\{a\in A\mid \Amc\models \varphi_{\text{bounded}}(a)\}$. We call
the induced substructure $\Bmc := \Amc{\restriction}_B$ the 
``bounded'' part of $\Amc$. Every homomorphism from $\Bmc$ to $\EZmc$ 
has to map $B$ to the interval $[m,M]$. Thus, a homomorphism $h :  \Bmc \to \EZmc$ can be identified 
  with a partition of $B$ into $M-m+1$ sets $B_m, \ldots, B_M$, where 
  $B_i = \{ a \in B \mid h(a)=i \}$.  It follows that:

\begin{lemma} \label{lem:Homomorphism-BoundedPart}
  There is an \MSO-sentence $\varphi_B$ such that
  for every $\Smc$-structure $\Amc$ with bounded part $\Bmc$, we have $\Bmc \homom \EZmc$ 
  if and only if $\Amc\models\varphi_B$.
\end{lemma}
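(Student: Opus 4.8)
The plan is to encode a homomorphism $h\colon\Bmc\to\EZmc$ by its collection of fibres, which is finite because any such $h$ must send $B$ into the interval $[m,M]$. First I would record that $B$ is $\MSO$-definable inside $\Amc$: the formula $\varphi_{\text{bounded}}$ uses only $E_=$, $<$, the predicates $E_c$ for $c\in C$, and the reachability predicate $\reach_\le$, and over any structure the $\reach$-construction of Example~\ref{exa:WMSOBFormulas} agrees with the ordinary $\MSO$ transitive-closure formula (reachability between two vertices is always witnessed inside a finite subgraph). Next I would reuse the observation made just before the lemma: every homomorphism $h\colon\Bmc\to\EZmc$ satisfies $h(B)\subseteq[m,M]$, since an element $a\in B$ lies on a $\le$-path between two vertices carrying predicates $=_c$ and $=_{c'}$ with $c,c'\in C$, every vertex of that path again lies in $B$, and $h$ maps the path to a monotone integer sequence running from $c\ge m$ to $c'\le M$. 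Consequently $h$ is completely determined by the partition $(h^{-1}(m),\dots,h^{-1}(M))$ of $B$, and conversely every partition of $B$ into $M-m+1$ blocks that ``respects $\sigma$'' yields such an $h$.

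Accordingly I would take $\varphi_B$ to be the $\MSO$-sentence
\[
\exists X_m\cdots\exists X_M\bigl(\mathrm{Part}(X_m,\dots,X_M)\wedge\mathrm{Hom}(X_m,\dots,X_M)\bigr),
\]
where $\mathrm{Part}$ states that $X_m,\dots,X_M$ are pairwise disjoint with union exactly $B$ (substituting the $\MSO$-definition of $B$), and $\mathrm{Hom}$ is the first-order formula with parameters $X_m,\dots,X_M$ asserting, with all vertex quantifiers relativised to $B$: (i) if $E_=(a,b)$ then $a$ and $b$ lie in the same block; (ii) if $a<b$ then $a\in X_i$ and $b\in X_j$ for some $i<j$; (iii) if $E_c(a)$ for $c\in C$ (so $m\le c\le M$) then $a\in X_c$; and (iv) if $x\equiv a\bmod b$ for $b\in D$, $0\le a<b$, then $x\in\bigcup\{X_i\mid m\le i\le M,\ i\equiv a\bmod b\}$. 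Each conjunct is a finite Boolean combination of atomic $\sigma$-formulas and membership tests in the monadic parameters, so $\varphi_B$ is an $\MSO$-sentence over $\sigma$.

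Correctness is then routine, and I would check it in both directions. From a homomorphism $h\colon\Bmc\to\EZmc$ I set $X_i\defeq h^{-1}(i)$ for $m\le i\le M$; these partition $B$ by the bound above, and they satisfy (i)--(iv) because $h$ preserves every relation of $\sigma$, while $=_c$ is interpreted in $\EZmc$ as $\{c\}$ and $\equiv_{a,b}$ as $\{a+kb\mid k\in\mathbb{Z}\}$, so $\Amc\models\varphi_B$. Conversely, given witnessing sets $X_m,\dots,X_M$ I define $h(a)$ to be the unique $i$ with $a\in X_i$; clauses (i)--(iv) say exactly that $h$ preserves $=$, $<$, each $=_c$, and each $\equiv_{a,b}$ over $\Bmc$, hence $\Bmc\homom\EZmc$. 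I do not expect a genuine obstacle here; the only points demanding a little care are relativising $\mathrm{Hom}$ to the $\MSO$-definable set $B$ (so that we constrain exactly the tuples of the induced substructure $\Bmc$, not of $\Amc$), and invoking the already-established bound $h(B)\subseteq[m,M]$, which is what makes the finitely many blocks $X_m,\dots,X_M$ sufficient to capture an arbitrary homomorphism.
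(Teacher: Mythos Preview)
Your proposal is correct and follows essentially the same approach as the paper: encode a homomorphism $\Bmc\to\EZmc$ by the partition $(h^{-1}(m),\dots,h^{-1}(M))$ of $B$, existentially quantify over these $M-m+1$ sets, and express via first-order conditions (relativised to the $\MSO$-definable set $B$) that the induced map preserves $=$, $<$, each $=_c$, and each $\equiv_{a,b}$. The paper's proof is structured identically, with your clauses (i)--(iv) matching its formulas $\varphi_=$, $\varphi_<$, $\varphi_{\text{const}}$, $\varphi_{\text{mod}}$.
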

\begin{proof}
    By definition of the bounded part, 
  any homomorphism from $\Bmc$ to $\EZmc$ maps all elements of $B$ to a value
  from the interval $[m,M]$. Thus, a homomorphism $h :  \Bmc \to \EZmc$ can be identified 
  with a partition of $B$ into $M-m+1$ sets $B_m, \ldots, B_M$, where 
  $B_i = \{ a \in B \mid h(a)=i \}$. Hence, the \MSO-sentence states that
  there exists a partition of $B$ into  $M-m+1$ sets $B_m, \ldots, B_M$ such 
  that the corresponding mapping $h : B \to [m,M]$ preserves all relations from
  $\sigma$. For this we define formulas that express the following, where $\overline{X} = (X_m, \ldots, X_M)$
  is a tuple of $M-m+1$ many second-order variables.
\begin{itemize}
  \item $\varphi_{\text{part}}(\overline{X})$ expresses that $\overline{X}$ forms  a  finite partition.
  \item $\varphi_<(\overline{X})$ expresses that the partition
    preserves the relation $I(<)$.
  \item $\varphi_=(\overline{X})$ expresses that the partition
    preserves the relation $I(=)$.
  \item $\varphi_{\text{const}}(\overline{X})$ expresses that the partition
    preserves all relations $I(=_c)$.
  \item $\varphi_{\text{mod}}(\overline{X})$ expresses that the partition
    preserves all relations $I(\equiv_{a,b})$.
  \end{itemize}
 These formulas can be defined as follows:
  \begin{eqnarray*}
    \varphi_{\text{part}} & = &   \forall x 
     \bigvee_{i \in [m,M]}  \Big(  x \in X_i   \wedge  
      \bigwedge_{\substack{j \in [m,M] \\ i \neq j}} x \not\in   X_j
    \Big),\\
    \varphi_{<}   & \ = \ &  \forall x \;\forall y \bigwedge_{\substack{i,j
        \in [m,M] \\ i \geq j}}  
    \neg (x<y \wedge x \in X_i \wedge y \in X_j),\\ 
    \varphi_{=}   & = &  \forall x \; \forall y 
    \bigwedge_{\substack{i,j \in [m,M] \\ i \neq j}} \neg (E_=(x,y) \wedge x
    \in X_i \wedge y \in X_j), \\ 
    \varphi_{\text{const}} & = &    \forall x 
    \bigwedge_{c\in C} \big( E_c(x) \to x \in X_c \big),  \\  
    \varphi_{\text{mod}} & = &  \forall x
    \bigwedge_{0 \leq a < b \in D} \Big(
    x \equiv a \text{ mod } b  \rightarrow  
    \bigvee_{\substack{i \in [m,M] \\  i \equiv a \text{ mod } b}}  x \in X_i \Big).    
  \end{eqnarray*} 
  Let $\psi = \exists X_m  \cdots \exists X_M
    (\varphi_{\text{part}} \land \varphi_{<} 
    \land \varphi_{=} \land \varphi_{\text{const}}
    \land \varphi_{\text{mod}})$ and let $\varphi_B$ be the relativization
  $\psi$ to the bounded part defined by $\varphi_{\text{bounded}}(x)$. 
  Then, $\Amc \models \varphi_B$ if and only if $\Bmc \models \psi$
  if and only if there is a homomorphism $h_B : \Bmc \to \EZmc$.
  \qed
\end{proof}
Similar to $B$ we define three other parts of a
$\sigma$-structure by the \WMSO-formulas
\begin{eqnarray*}
  \varphi_{\text{greater}}(x) &=& \neg \varphi_{\text{bounded}}(x) \wedge
   \exists y \; \big(  \varphi_{\text{bounded}}(y) \wedge \reach_{\leq}(y,x) \big), \\
 \varphi_{\text{smaller}}(x) &=& \neg \varphi_{\text{bounded}}(x) \wedge
   \exists y \; \big(  \varphi_{\text{bounded}}(y) \wedge \reach_{\leq}(x,y) \big), \\  
  \varphi_{\text{rest}}(x) &=&  \neg (\varphi_{\text{bounded}}(x) \vee \varphi_{\text{greater}}(x)  \vee 
  \varphi_{\text{smaller}}(x)) .
\end{eqnarray*}
Moreover, let $G=\{a\in A\mid\Amc\models\varphi_{\text{greater}}(a)\}$,
$S=\{a\in A\mid\Amc\models\varphi_{\text{smaller}}(a)\}$, and 
$R=\{a\in A\mid\Amc\models\varphi_{\text{rest}}(a)\}$.
Let $\ENmc = \EZmc{\restriction}_{\mathbb{N}}$ and $\ENegmc =
\EZmc{\restriction}_{\mathbb{Z}\setminus\mathbb{N}}$. 
Then we have:

\begin{lemma} \label{lem:splitinFour}
$\Amc\homom \EZmc$ iff
  $\left(\Bmc \homom \EZmc,
  \Amc{\restriction}_{G \cup S \cup R} \homom \EZmc,  
  \Amc{\restriction}_G \homom \ENmc,\text{ and }
  \Amc{\restriction}_S \homom \ENegmc\right)$.
\end{lemma}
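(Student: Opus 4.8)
The plan is to first settle the combinatorics of the partition $A=B\cup G\cup S\cup R$, then do the easy direction by restriction-plus-shift and the hard direction by an explicit gluing of the four given homomorphisms. Throughout put $L=\lcm(D)$ (with $L=1$ if $D=\emptyset$); all manipulations of homomorphisms will be shifts by multiples of $L$ (and, where needed, compositions with strictly increasing maps $\mathbb Z\to\mathbb Z$ that are the identity modulo $L$), so that the modulo predicates $\equiv_{a,b}$ are automatically preserved.

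First I would prove the structural facts. Unwinding the definitions of $\varphi_{\text{bounded}},\varphi_{\text{greater}},\varphi_{\text{smaller}},\varphi_{\text{rest}}$ — and using repeatedly that any element lying on a $\leq$-path between two elements of some $I(=_c)$ is itself in $B$ (reflexivity of $\reach_\leq$ makes every element of $I(=_c)$ bounded) — one checks: (i) $B,G,S,R$ are pairwise disjoint and cover $A$; (ii) there is no edge at all, neither from $I(<)$ nor from $E_=$ and in either direction, between $B$ and $R$; (iii) the only cross-edges incident with $B$ are strict $<$-edges from $S$ into $B$ and from $B$ into $G$; (iv) among $G,S,R$ the strict edges go only ``upward'': from $S$ into $S\cup R\cup G$, from $R$ into $R\cup G$, and from $G$ only into $G$ (so no edge from $G$ or $R$ into $S$, and none from $G$ into $R$); (v) every $E_=$-edge has both endpoints in one part; (vi) no element of $G\cup S\cup R$ lies in any $I(=_c)$, so the predicates $=_c$ constrain only $B$.

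For the direction from left to right: if $h\colon\Amc\to\EZmc$ is a homomorphism, then its restrictions to the induced substructures $\Bmc$ and $\Amc{\restriction}_{G\cup S\cup R}$ are homomorphisms into $\EZmc$. For $\Amc{\restriction}_G$: by (iii)+(iv) every element of $G$ is reached from some element of $B$ along a $\leq$-path, so $h(G)\subseteq[m,\infty)$; since $G$ meets no $I(=_c)$ by (vi), adding a large enough multiple of $L$ to $h{\restriction}_G$ keeps all relations and pushes the image into $\mathbb N$, yielding $\Amc{\restriction}_G\homom\ENmc$. Symmetrically $\Amc{\restriction}_S\homom\ENegmc$. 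For the direction from right to left — the substantial one — I would keep the given $h_B$ on $B$ (image in $[m,M]$) and construct a homomorphism $f\colon\Amc{\restriction}_{G\cup S\cup R}\to\EZmc$ with $f(S)\subseteq(-\infty,m)$ and $f(G)\subseteq(M,\infty)$; gluing $h_B$ and $f$ then gives the desired $h$, since by (ii)--(iv) the only edges to check across the two pieces are $S\to B$ and $B\to G$, both satisfied once $S$ sits below $m$ and $G$ above $M$, while by (v) and (vi) the relations $E_=$ and $=_c$ never cross between $B$ and $G\cup S\cup R$. To build $f$ I would start from $h_{GSR}$ (second hypothesis) and modify it so that the $S$-part drops far below $m$ and the $G$-part rises far above $M$ with the $R$-part and all internal and ``upward'' cross-edges staying coherent: replace $h_{GSR}$ on $S$ by $h_S$ shifted down by a large multiple of $L$ (legitimate precisely because $h_S$ has image in $\mathbb Z_{<0}$, hence bounded above), realize the $G$-part high enough using $h_G$ shifted up by a large multiple of $L$ (using that $h_G$ has image in $\mathbb N$), and use $h_{GSR}$ to keep the $R$-block and its interaction with $S$ and $G$ consistent. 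Finally one checks that $f$, hence $h$, respects every relation symbol, which is routine given (i)--(vi).

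The main obstacle is exactly this construction of $f$: one cannot just shift a single homomorphism coming from the second hypothesis, because its $S$-part may be unbounded above (for instance, an infinite ascending chain inside $S$ each member of which has an edge to a constant still admits a homomorphism into $\EZmc$, yet cannot be translated below $m$), and dually its $G$-part may be unbounded below — which is precisely why hypotheses three and four appear in the statement. The delicate point is therefore to reconcile $h_S$, $h_G$ and $h_{GSR}$ along the ``upward'' cross-edges among $S$, $R$ and $G$ so that a single homomorphism with $S$ below $m$, $G$ above $M$, and an admissible $R$-block in between actually results; this is the technical heart of the proof, whereas every other step is bookkeeping.
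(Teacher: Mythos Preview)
Your structural analysis (i)--(vi) and the forward direction are fine, and you have correctly isolated the genuine difficulty: reconciling $h_S$, $h_G$ and $h_{GSR}$ across the upward edges $S\to R$ and $R\to G$. But the concrete recipe you propose --- $h_S-kL$ on $S$, $h_G+k'L$ on $G$, and $h_{GSR}$ on $R$ --- does not work, and no single shift repairs it. Take $B=\{b\}$ with $b\in I(=_0)$, and for each $i\ge 1$ elements $s_i,r_i$ with only the relations $s_i<b$ and $s_i<r_i$; then $s_i\in S$ and $r_i\in R$. The choice $h_S(s_i)=-1$ is legal, and so is $h_{GSR}(s_i)=-i-1$, $h_{GSR}(r_i)=-i$. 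Your $f$ gives $f(s_i)=-1-kL$ and $f(r_i)=-i$, so the edge $s_i<r_i$ is violated once $i>1+kL$. The trouble is that on $S$ you have discarded $h_{GSR}$, which is the only one of the four homomorphisms that sees the $S\to R$ edges.

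The paper closes exactly this gap with a one-line trick you were circling but did not write down: on $S$ use \emph{both} homomorphisms simultaneously via the pointwise minimum, and dually on $G$ via the maximum. With $\delta=\prod_{b\in D}b$ (your $L$ works just as well) define
\[
h(a)=\begin{cases}
h_B(a) & a\in B,\\
h_{GSR}(a) & a\in R,\\
\max\bigl(h_{GSR}(a),\,h_G(a)\bigr)+\delta(M{+}1) & a\in G,\\
\min\bigl(h_{GSR}(a),\,h_S(a)\bigr)+\delta(m{-}1) & a\in S.
\end{cases}
\]
Modulo predicates survive because both arguments of $\min$/$\max$ already lie in the required residue class and the shift is a multiple of every $b\in D$. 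The $h_S$-component forces $h(S)<m$ and the $h_G$-component forces $h(G)>M$, handling all $B$-incident edges. The point you were missing is that an $S\to R$ edge $(a,a')$ now gives $h(a)\le h_{GSR}(a)+\delta(m{-}1)<h_{GSR}(a)<h_{GSR}(a')=h(a')$, and $R\to G$ is dual. Inside $S$ (resp.\ $G$), coordinatewise strict monotonicity of $\min$ (resp.\ $\max$) handles the internal $<$-edges. With this definition the remaining verification really is the bookkeeping you promised.
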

\begin{proof}
  The ``only if'' direction is straightforward. Just note that for a
  homomorphism $h:\Amc\to \EZmc$, $h(G)$ is bounded below by $m$ and 
  $h(S)$ is bounded above by $M$.
  
  For the ``if'' direction, assume that there are 
  \begin{itemize}
  \item a homomorphism $h_B: \Bmc \to \EZmc$,
  \item a homomorphism $h_R : \Amc{\restriction}_{G \cup S \cup R} \to \EZmc$,  
  \item a homomorphism $h_G : \Amc{\restriction}_G \to \ENmc$, and
  \item a homomorphism $h_S : \Amc{\restriction}_S \to \ENegmc$.
  \end{itemize}
  Let 
  $\delta = \prod_{b \in D} b \geq 1$
  and
  define $h:\Amc\to \EZmc$ by 
  \begin{align*}
    h(a)=
    \begin{cases}
      h_B(a) &\text{if }a\in B,\\
      h_R(a) &\text{if }a\in R, \\
      \max(h_R(a),h_G(a)) + \delta \cdot (M+1) & \text{if }a\in G, \\
      \min(h_R(a),h_S(a))  + \delta \cdot (m-1)  & \text{if }a\in S.
    \end{cases}
  \end{align*}
  Note that $M < h(a)$ for every $a \in G$ (recall that we assume $M
  \geq 0$) and thus
  \begin{equation} \label{eq-greater-B}
   \forall a \in B\; \forall a' \in G : h(a) < h(a').
  \end{equation}  
  Similarly, we have
  \begin{equation} \label{eq-smaller-B}
   \forall a \in S\; \forall a' \in B : h(a) < h(a').
  \end{equation} 
  Clearly, $(a,a') \in I(=)$ implies that $a$ and $a'$ belong to the same
  part ($B$, $G$, $S$, or $R$), which implies $h(a) = h(a')$. Moreover, if $(a,a') \in I(<)$, then we we have one of the
  following cases:
  \begin{enumerate}[(a)]
  \item $a,a'$ belong to the same part,
  \item $a\in S, a'\in G$,
  \item $a\in B, a'\in G$,
  \item $a\in S, a'\in B$,
  \item $a\in S, a'\in R$, 
  \item $a\in R, a'\in G$. 
  \end{enumerate}
  In cases (a), (b), (e), and (f) we get $h(a) < h(a')$
  by using the homomorphisms $h_B$, $h_G$, $h_S$, $h_R$.
  In cases (c) (resp., (d)) we get $h(a) < h(a')$ from
  \eqref{eq-greater-B}
  (resp., \eqref{eq-smaller-B}).
 Finally,
  the unary constant predicates and modulo predicates are preserved because we build the
  homomorphism from homomorphism that preserve these predicates. 
\qed
\end{proof}
We need some conventions on modulo constraints.
A sequence $(a_1,b_1), \ldots, (a_k,b_k)$ with 
$0 \leq a_i < b_i \in D$ for $1 \leq i \leq k$ is {\em  contradictory},
if there is no number $n\in \mathbb{N}$ such that
$n \equiv a_i$ mod $b_i$ for all $1 \leq i \leq k$.
In the following let $\mathsf{CS}_k$ denote the set of contradictory
sequences of length $k$. 
It is straightforward to show
that every contradictory sequence contains a contradictory subsequence of
length at most $\ell := \max\{2,|D|\}$. 

Recall that $\sim$ is the smallest 
equivalence relation containing $I(=)$  and that $\sim$ is defined
by the \WMSO-formula $\tilde{\varphi}(x,y)$.
We call a $\sigma$-structure $\Amc=(A,I)$ \emph{modulo contradicting}
if there is a $\sim$-class $[c]$, elements $c_1, c_2, \ldots, c_k\in [c]$,
and a contradictory sequence 
$(a_1,b_1),  \ldots, (a_k,b_k)$ such that
$c_i \in I(\equiv_{a_i,b_i})$ for all $1 \leq i \leq k$.

The following \WMSO-formula $\varphi_{\text{modcon}}$ expresses
that a $\sigma$-structure is modulo contradicting, where 
we write $s_a(j)$ (resp.~$s_b(j)$) 
for the first (resp.~second) entry of the $j$-th element of
the sequence $s \in \mathsf{CS}_k$:

\begin{align*}
  \varphi_{\text{modcon}} = \bigvee_{2\leq k \leq \ell} \bigvee_{s\in
    \mathsf{CS}_k} \exists x_1 \cdots \exists x_k  
  \bigwedge_{i,j\leq k} \tilde{\varphi}(x_i,x_j) \land  \bigwedge_{j\leq k} x_j\equiv s_a(j) \text{ mod }  s_b(j)
\end{align*}

\begin{lemma} \label{lem:reducemodulos}
  Let $\sigma' = \sigma\setminus\{{=_c}\mid c\in\mathbb{Z}\}$.
  Let $\Amc=(A, I)$ be a $\sigma'$-structure. 
  \begin{itemize}
  \item $\Amc \homom \EZmc$ iff
    $\Amc$ is not modulo contradicting and
    $(A, I(<), I(=)) \homom (\mathbb{Z}, <, =)$.
  \item $\Amc \homom \ENmc$ iff 
    $\Amc$ is  not modulo contradicting and
   $(A, I(<), I(=)) \homom (\mathbb{N}, <, =)$.
  \end{itemize}
\end{lemma}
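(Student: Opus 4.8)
The plan is to prove both equivalences simultaneously, since the structure of the argument is identical for $\EZmc$ and $\ENmc$; only the target of the $\{<,=\}$-homomorphism changes (from $(\mathbb{Z},<,=)$ to $(\mathbb{N},<,=)$), and Propositions~\ref{equal} and \ref{equal-N} handle exactly these two cases. First I would dispatch the ``only if'' directions. Suppose $h:\Amc\to\EZmc$ (or $h:\Amc\to\ENmc$) is a homomorphism. Then the restriction of $h$ to the signature $\{<,=\}$ is a homomorphism $(A,I(<),I(=))\homom(\mathbb{Z},<,=)$ (resp.\ $\homom(\mathbb{N},<,=)$), which gives the second conjunct. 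For the first conjunct, observe that any two $\sim$-equivalent elements must be mapped by $h$ to the same integer, because $h$ preserves $I(=)$ and hence the reflexive–transitive–symmetric closure $\sim$; so if $\Amc$ were modulo contradicting, witnessed by $c_1,\dots,c_k$ in one $\sim$-class together with a contradictory sequence $(a_1,b_1),\dots,(a_k,b_k)$ with $c_i\in I(\equiv_{a_i,b_i})$, then the common value $n=h(c_1)=\cdots=h(c_k)\in\mathbb{Z}$ (or $\mathbb{N}$) would satisfy $n\equiv a_i \bmod b_i$ for all $i$, contradicting that the sequence is contradictory. Hence $\Amc$ is not modulo contradicting.

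For the ``if'' direction, assume $\Amc$ is not modulo contradicting and that there is a homomorphism $g:(A,I(<),I(=))\to(\mathbb{Z},<,=)$ (resp.\ into $(\mathbb{N},<,=)$). The goal is to upgrade $g$ to a homomorphism $h:\Amc\to\EZmc$ (resp.\ $\to\ENmc$) that additionally respects all the modulo predicates $\equiv_{a,b}$ for $b\in D$. The key idea is to compose $g$ with an order-preserving (strictly monotone) reindexing of its image that spreads the values out enough to realize any prescribed residues. Concretely, since $g$ preserves $I(=)$, it factors through $\sim$, so it suffices to define the target value on each $\sim$-class $[c]$. For each $\sim$-class $[c]$, the set of congruence constraints $\{(a,b)\mid b\in D,\ \exists c'\in[c]: c'\in I(\equiv_{a,b})\}$ is non-contradictory (this is exactly the hypothesis that $\Amc$ is not modulo contradicting — using that every contradictory sequence has a contradictory subsequence of length $\le\ell$, so it is enough to rule out short ones, which is what $\varphi_{\text{modcon}}$ does). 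By the Chinese Remainder Theorem there is a residue $r_{[c]}$ modulo $\delta=\lcm(D)$ (equivalently $\prod_{b\in D}b$, up to a harmless overshoot) that simultaneously satisfies all of them. Now enumerate the $\sim$-classes in $g$-increasing order (breaking ties arbitrarily among $\sim$-incomparable classes sharing a $g$-value — but note distinct $\sim$-classes can share a $g$-value only if they are $I(<)$-incomparable, so this causes no conflict) and map the $i$-th class, in this order, to an integer $h$-value chosen in the residue class $r_{[c]}$ modulo $\delta$ and strictly larger than all previously assigned $h$-values; for the $\ENmc$ case, start from a value $\ge 0$. This is possible because each residue class modulo $\delta$ is infinite (and cofinal in $\mathbb{N}$). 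The resulting $h$ is strictly monotone as a function of $g$, hence preserves $I(<)$; it is constant on $\sim$-classes, hence preserves $I(=)$; and by construction it preserves every $\equiv_{a,b}$ with $b\in D$. Since $\sigma'$ contains no constant predicates $=_c$, there is nothing further to check, so $h$ is the desired homomorphism.

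The main obstacle, and the place to be careful, is the interaction between the linear order and the congruences: a naive ``assign residues class by class'' fails if one does not respect the order, and conversely respecting the $g$-order is only legitimate because $h$ is defined as a strictly monotone function of $g$ — one must check that $I(<)$-related elements get \emph{strictly} $g$-larger (hence strictly $h$-larger) values, which holds precisely because $g$ itself is a homomorphism into $(\mathbb{Z},<)$. A second subtlety is the passage from ``no contradictory sequence at all'' to ``no short contradictory sequence,'' i.e.\ the claim that $\varphi_{\text{modcon}}$ really captures modulo-contradiction; this is the combinatorial fact, stated just before the lemma, that every contradictory sequence has a contradictory subsequence of length at most $\ell=\max\{2,|D|\}$, which follows from the structure of the Chinese Remainder Theorem (two constraints with the same modulus that disagree already contradict; otherwise a minimal contradictory set cannot use two moduli with a common value without already being contradictory on a subset indexed by the distinct prime-power parts, of which there are at most $|D|$). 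Everything else is routine bookkeeping.
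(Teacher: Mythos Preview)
Your ``only if'' direction is fine and matches the paper (which simply calls it obvious). The gap is in the ``if'' direction: the enumeration argument does not work as stated. You propose to list the $\sim$-classes in $g$-increasing order and assign to each an $h$-value strictly above all previously assigned ones. Two things break this. First, in the $\EZmc$ case the image of $g$ may be unbounded below, so there is no first class to start from. Second, and more seriously even in the $\ENmc$ case, there can be infinitely many $\sim$-classes sharing a single $g$-value (take $A$ countable with $I(<)=I(=)=\emptyset$ and let $g\equiv 0$); your procedure then uses up an unbounded set of integers before ever reaching a class with $g$-value $1$, so no value ``strictly larger than all previously assigned'' exists. Thus ``$h$ is a strictly monotone function of $g$'' is not what your construction actually produces.

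The paper avoids the enumeration entirely with a one-line formula: set $\delta=\prod_{b\in D} b$, choose for each class $[c]$ a residue $0\le m_c<\delta$ solving all its congruence constraints (this exists precisely because $\Amc$ is not modulo contradicting), and define $h(c)=\delta\cdot g(c)+m_c$. Then $(c,c')\in I(<)$ gives $g(c)<g(c')$, hence $h(c)<\delta g(c)+\delta\le \delta g(c')\le h(c')$; $(c,c')\in I(=)$ gives $g(c)=g(c')$ and $m_c=m_{c'}$; and $h(c)\equiv m_c\pmod b$ for every $b\in D$. For $\ENmc$ one has $g(c)\ge 0$ so $h(c)\ge 0$. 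This is exactly the ``strictly monotone in $g$'' map you were aiming at, obtained uniformly without any ordering of the classes. Your discussion of short contradictory subsequences is correct but not needed here: the lemma is phrased in terms of the semantic notion of being modulo contradicting, not the formula $\varphi_{\text{modcon}}$.
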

\begin{proof}
  The ``only if'' directions are obvious.
  For the ``if'' directions, assume that $g: (A, I(<), I(=)) \to
  (\mathbb{Z}, <,=)$ is a 
  homomorphism and that $\Amc$ is not  
  modulo contradicting. Let
  $$\delta = \prod_{b \in D} b.$$
  Hence, for each $c\in A$ there is
  a number $0 \leq m_c\leq \delta-1$ such that for all $d \sim c$, if 
  $d \in I(\equiv_{a,b})$ (where $0 \leq a < b \in D$) then $m_c
  \equiv a$ mod $b$. 
  Setting $h(c) = \delta \cdot g(c) + m_c$ we obtain a homomorphism
  $h:\Amc \to \EZmc$. The statement for $\ENmc$ follows in the same way.
  \qed
\end{proof}

{\em Proof of Prop.~\ref{second}.}
  Let $\Amc=(A, I)$ be a $\sigma$-structure. We defined a 
  partition of $A$ into $B, G, S$, and $R$. 
  Since membership in each of these sets is ($\WMSOB$)-definable, we can
  relativize any ($\WMSOB$)-formula to any of these sets. 
  For instance, we write  $\varphi^G$ for the relativization of
  $\varphi$ to the substructure induced by $G$. 
  Let $\varphi_B$ be the $\MSO$-formula from
  Lemma \ref{lem:Homomorphism-BoundedPart}, and
  for $C \in \{ \mathbb{Z},  \mathbb{N},  \mathbb{Z} \setminus \mathbb{N} \}$
  let  $\varphi_C$
  be a formula that expresses $\Amc \preceq (C, <, =)$, see Prop.~\ref{equal} and~\ref{equal-N}.
  Then $\Amc \models (\varphi_B\land 
    {\varphi_{\mathbb{Z}}^{G\cup S \cup R}} \land
    \varphi_{\mathbb{N}}^{G} \land 
    \varphi_{\mathbb{Z} \setminus \mathbb{N}}^{S}    \land
    \neg\varphi_{\text{modcon}})$
  iff $\Amc \homom \EZmc$ due to Lemmas \ref{lem:splitinFour} and 
  \ref{lem:reducemodulos}. \qed

\section{Extensions, Applications, Open Problems}

A simple adaptation of our  proof for $\EZmc$ shows 
that $\Qmc=(\mathbb{Q}, <, =, (=_q)_{q\in\mathbb{Q}})$ has 
the property $\EHomDef(\Bool(\MSO,\WMSOB))$ as well:
$\Amc  = (A,I) \homom \Qmc$ iff (i) $(A,E_<)$ is acyclic, where $E_<$ is defined 
as in \eqref{def-E_<},  (ii) there does not exist $(a,b) \in E_<^+$
(the transitive closure of $E_<$)
with $a \in I(=_p)$, $b \in I(=_q)$ and $q \leq p$, and (iii) 
there do not exist $a \sim b$ with
$a \in I(=_p)$, $b \in I(=_q)$, and $q \neq p$.

Let us finally state a simple preservation theorem for $\Amc$-satisfiability for $\CTL^*(\Smc)$.
Assume that $\Amc$ and $\Bmc$ are structures over countable signatures $\Smc_{\Amc}$ and
$\Smc_{\Bmc}$, respectively, and let $B$ be the domain of $\Bmc$. We say that
$\Amc$ is {\em existentially interpretable} in $\Bmc$ if there exist
$n \geq 1$  and quantifier-free first-order formulas $\varphi(y_1, \ldots, y_l, x_1, \ldots, x_n)$ and
$$
\varphi_r(z_1, \ldots, z_{l_r}, x_{1,1}, \ldots, x_{1,n}, \ldots, x_{\arity{r},1}, \ldots,x_{\arity{r},n}) \text{ for } r \in \Smc_{\Amc}
$$ 
over the signature $\Smc_{\Bmc}$, where the mapping $r \mapsto \varphi_r$ has to be computable,
such 
that $\Amc$ is isomorphic to the structure $(\{ \overline{b} \in B^n  \mid \exists \overline{c} \in 
B^l\colon\Bmc \models \varphi(\overline{c},\overline{b}) \}, I)$ with
\begin{gather*}
I(r) =  \{ (\overline{b}_1, \ldots, \overline{b}_{\arity{r}}) \in B^{\arity{r}n} \mid \exists \overline{c} \in B^{l_r}\colon\Bmc \models 
 \varphi_r(\overline{c}, \overline{b}_1, \ldots, \overline{b}_{\arity{r}}) \} \text{ for } r \in \Smc_{\Amc} .
\end{gather*} 

\begin{proposition} \label{prop-red-sat}
  If $\SAT(\Bmc)$ is decidable and $\Amc$ is existentially
  interpretable in $\Bmc$, then $\SAT(\Amc)$ is decidable too.
\end{proposition}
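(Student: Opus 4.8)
The plan is to reduce $\SAT(\Amc)$ to $\SAT(\Bmc)$ by a computable translation $\phi \mapsto \phi^\dagger$ of $\CTL^*(\Smc_\Amc)$-state formulas into $\CTL^*(\Smc_\Bmc)$-state formulas such that $\phi$ is $\Amc$-satisfiable if and only if $\phi^\dagger$ is $\Bmc$-satisfiable; then the assumed decider for $\SAT(\Bmc)$ settles $\SAT(\Amc)$. Fix the data of the interpretation: the number $n$, the quantifier-free formula $\varphi(\overline{y},\overline{x})$ (with $|\overline{y}|=l$, $|\overline{x}|=n$) defining the domain, and the computable family of quantifier-free formulas $\varphi_r(\overline{z},\overline{x}_1,\ldots,\overline{x}_{\arity{r}})$ (each $\overline{x}_p$ an $n$-tuple) defining $I(r)$, all over $\Smc_\Bmc$. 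By Theorem~\ref{thm-tree-model} it suffices to search for $\Amc$-constraint trees, resp.\ $\Bmc$-constraint trees; the tree shape is what makes the witness bookkeeping below go through. I would also first put $\phi$ into strong negation normal form, which is exactly where negated atomic constraints must be eliminated (see the last paragraph).

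For the translation, replace each variable $x$ occurring in $\phi$ by $n$ fresh variables $x^{(1)},\ldots,x^{(n)}$, the intended coordinates of the $B^n$-tuple naming the $\Amc$-value of $x$. For an atomic constraint $R = r(\neX^{i_1}x_1,\ldots,\neX^{i_k}x_k)$ of depth $d=d(R)$, pick fresh variables $z^R_1,\ldots,z^R_{l_r}$ and let $R^\dagger$ be the path formula obtained from $\varphi_r$ by substituting $\neX^{i_p}x_p^{(q)}$ for the $q$-th coordinate variable of $\overline{x}_p$ and $\neX^{d}z^R_j$ for the $j$-th parameter variable of $\overline{z}$; since $\varphi_r$ is quantifier-free over $\Smc_\Bmc$, $R^\dagger$ is a Boolean combination of atomic constraints of $\CTL^*(\Smc_\Bmc)$. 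Anchoring the parameters at depth $d$ (precisely the device in the proof of Lemma~\ref{lemma-neg-closed}) is essential: in a tree the truth value of $R$ along a path $\pi$ depends only on the node $\pi(d)$, so a single membership witness for $I(r)$ can be stored there and reused by every path through $\pi(d)$. To guarantee, in the backward direction, that each stored $n$-tuple really names an element of $\Amc$, add for each variable $x$ fresh variables $w^x_1,\ldots,w^x_l$ together with a conjunct $\All\Grt\,\delta_x$, where $\delta_x$ is $\varphi$ with $x^{(q)}$ substituted for the $q$-th coordinate of $\overline{x}$ and $w^x_j$ for the $j$-th parameter; since every node of a tree is reachable from the root, this forces $\varphi$'s domain condition at every node. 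Finally set $\phi^\dagger = \theta \wedge \bigwedge_{x}\All\Grt\,\delta_x$, where $\theta$ is $\phi$ with every atomic constraint $R$ replaced by $R^\dagger$.

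Correctness is then proved along the lines of the proof of Lemma~\ref{lemma-first-reduction}, by structural induction on subformulas and using Theorem~\ref{thm-tree-model} in both directions. From an $\Amc$-constraint tree satisfying $\phi$ one builds a $\Bmc$-constraint tree for $\phi^\dagger$ by writing out coordinatewise the $B^n$-name of each $\Amc$-value, storing a domain witness at every node, and storing, at the depth-$d$ descendant relevant to each occurrence of a constraint $R$ of depth $d$, a witness for the corresponding membership in $I(r)$ (well defined since it depends only on that node). Conversely, a $\Bmc$-constraint tree for $\phi^\dagger$ yields an $\Amc$-valuation: the $\delta_x$-conjuncts ensure each $n$-tuple of coordinate values lies in $\dom(\Amc)$, hence names an element of $\Amc$, and an occurrence of $R^\dagger$ holds exactly when the stored coordinate tuples together with the stored parameters satisfy $\varphi_r$, i.e.\ when the named $\Amc$-tuple lies in $I(r)$, i.e.\ when $R$ holds.

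The main obstacle is the treatment of negated atomic constraints. The naive idea of replacing $\neg R$ by $\neg R^\dagger$ is unsound: $\neg\exists\overline{z}\,\varphi_r$ equals $\forall\overline{z}\,\neg\varphi_r$, which is co-existential, so in the backward direction a stored parameter tuple that fails $\varphi_r$ does not certify that $R$ fails (another witness may exist), and a $\Bmc$-model of the translation need not arise from an $\Amc$-model — this genuinely breaks already for a one-element $\Amc$ in which the predicate holds. Hence one must first eliminate negated constraints by passing to strong negation normal form. When $\Amc$ is negation-closed this is Lemma~\ref{lemma-neg-closed}; more generally it is enough that the complement of each $I(r)$ be existentially first-order definable over $\Smc_\Bmc$, and one then folds that formula into the construction of $\theta$ exactly as for the positive constraints. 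Everything else — the inductions, freshness of the variables, and that $\All\Grt$ over a tree visits all nodes — is routine and parallels Lemmas~\ref{lemma-neg-closed} and~\ref{lemma-first-reduction}.
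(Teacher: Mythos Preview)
Your construction is essentially the paper's: the paper too defines $\theta = \psi' \wedge \mathsf{AG}\bigwedge_{x\in\Var_\psi}\varphi(y_{x,1},\ldots,y_{x,l},x_1,\ldots,x_n)$, where $\psi'$ replaces each constraint $r(\neX^{i_1}x_1,\ldots,\neX^{i_k}x_k)$ by $\varphi_r(\neX^d z_{r,1},\ldots,\neX^d z_{r,l_r},\neX^{i_1}x_{1,1},\ldots,\neX^{i_k}x_{k,n})$ with the existential parameters anchored at depth $d$, and then appeals to ``arguments similar to those from the proof of Lemma~\ref{lemma-neg-closed}''. Your use of Theorem~\ref{thm-tree-model} and the per-node witness bookkeeping match this exactly.

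Where you go beyond the paper is in your last paragraph, and your worry is justified. The paper's sketch replaces $r(\ldots)$ by $\varphi_r(\overline{z},\ldots)$ \emph{everywhere}, so $\neg r(\ldots)$ becomes $\neg\varphi_r(\overline{z},\ldots)$; in the backward direction a single stored $\overline{z}$ failing $\varphi_r$ does not rule out other witnesses, exactly as you say. A concrete instance: take $\Amc=(\{0,1,2\},P,Q)$ with $P=Q=\{1\}$, interpreted in $\Bmc=(\{0,1,2\},S,T)$ with $S=\{(0,1)\}$, $T=\{1\}$ via $\varphi_P(z,x)=S(z,x)$ and $\varphi_Q(x)=T(x)$; then $\Ex(\neg P(x)\wedge Q(x))$ is $\Amc$-unsatisfiable while its translation $\Ex(\neg S(z_{P,1},x_1)\wedge T(x_1))$ is $\Bmc$-satisfiable (choose $x_1=1$, $z_{P,1}=2$). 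So the paper's brief proof does not cover negated constraints, and your proposed remedy --- assume the complements of the $I(r)$ are themselves existentially definable over $\Smc_\Bmc$ (which holds in particular when $\Amc$ is negation-closed) and handle $\neg r$ via that formula with fresh witnesses --- is the right fix, though it is an extra hypothesis not stated in the proposition.
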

\begin{proof}
  Let $\psi$ be a
  $\CTL^*(\Smc_{\Amc})$-formula. Let $\Var_\psi$  be the set
  of constraint variables that occur in $\psi$. 
  We use the notations introduced before Prop.~\ref{prop-red-sat}.
  Let us choose new variables $x_i$, $y_{x,j}$, and $z_{r,k}$ for all
  $1 \leq i \leq n$,
  $x \in \Var_\psi$, $1 \leq j \leq l$, $r \in \Smc_{\Amc}$, and $1 \leq
  k \leq l_r$.
  Define the $\CTL^*(\Smc_{\Bmc})$-formula 
  $$
  \theta = \psi' \wedge \mathsf{AG} \bigwedge_{x \in \Var_\psi}
  \varphi(y_{x,1}, \ldots, y_{x,l}, x_1, \ldots, x_n) 
  $$
  ($\mathsf{G}$ is the derived temporal operator for `globally''),
  where $\psi'$ is obtained from $\psi$ by replacing in $\psi$ 
  every constraint 
  $$r(\neX^{i_1} x_1, \ldots, \neX^{i_{\arity{r}}} x_{\arity{r}})$$ by
  the boolean formula 
  \begin{align*}
    \varphi_r(\neX^d z_{r,1}, \ldots, \neX^d z_{r,l_r},
    \neX^{i_1}x_{1,1}, \ldots, \neX^{i_1}x_{1,n}, \ldots,
    \neX^{i_{\arity{r}}} x_{\arity{r},1}, \ldots,
    \neX^{i_{\arity{r}}}x_{\arity{r},n}), 
  \end{align*}
  where $d=\max\{i_1, \ldots,i_{\arity{r}}\}$.
  Using arguments similar to those from the proof of
  Lemma~\ref{lemma-neg-closed}, one can show that
  $\psi$ is $\Amc$-satisfiable if and only
  if $\theta$ is $\Bmc$-satisfiable.
  \qed
\end{proof}
Examples of structures $\Amc$ that are existentially 
interpretable  in $(\mathbb{Z},<,=)$, and hence have a decidable $\SAT(\Amc)$-problem
are (i) $(\mathbb{Z}^n, <_{\text{lex}}, =)$ (for  $n\geq 1$), 
where $<_{\text{lex}}$ denotes the strict lexicographic order on $n$-tuples
of integers, and (ii) the structure $\text{Allen}_{\mathbb{Z}}$, which consists 
of all $\mathbb{Z}$-intervals together with Allen's  relations $b$
(before), $a$ (after), $m$ (meets), {\it mi} (met-by), $o$ (overlaps),
{\it oi} (overlapped by), $d$ (during), {\it di} (contains), $s$ (starts),
{\it si} (started by), $f$ (ends), {\it fi} (ended by). 
In artificial intelligence, Allen's relations are a popular tool for
representing temporal knowledge. 

Our technique can be also extended to the logic $\mathsf{ECTL}^*$
\cite{Thomas88,VardiW83}
that extends $\mathsf{CTL}^*$ by the ability to specify arbitrary \MSO-properties of infinite 
paths (instead of {\sf LTL}-properties for $\mathsf{CTL}^*$). 
For this one only has to extend Thm.~\ref{thm-tree-model} (tree model property
for $\mathsf{CTL}^*$ with constraints) to $\mathsf{ECTL}^*$ with constraints. The proof
is the same as in \cite{Gascon09}.

It remains open to determine the complexity of
$\mathsf{CTL}^*$-satisfiability 
with constraints over $\EZmc$, see the last paragraph in the 
introduction. Clearly, this problem is $2\mathsf{EXPTIME}$-hard due
to the known lower bound for $\mathsf{CTL}^*$-satisfiability.
To get an upper complexity bound, one should investigate the complexity of 
the emptiness problem for puzzles from \cite{BojanczykT12} (see Lemma~\ref{lem:PuzzleEmptyness}).
An interesting structure  for which the decidability status for
satisfiability of $\mathsf{CTL}^*$ with constraints 
is open, is  $(\{0,1\}^*, \leq_p, \not\leq_p)$, where
$\leq_p$ is the prefix order 
on words, and $\not\leq_p$ is its complement. It is not clear, whether
this structure has the  
property $\EHomDef(\Bool(\MSO,\WMSOB))$.

\paragraph{\bf Acknowledgments.} We are grateful to Szymon Toru\'nczyk
for fruitful discussions.

\end{document}